\newcommand{\half}{\frac{1}{2}}
\newcommand{\IC}{\mathbb{C}}
\newcommand{\IR}{\mathbb{R}}
\def\bar#1{\overline{#1}}
\def\inv{^{\raise.15ex\hbox{${\scriptscriptstyle -}$}\kern-.05em 1}}
\def\lbar{{\lower.35ex\hbox{$\mathchar'26$}\mkern-10mu\lambda}} 
\def\to{\rightarrow}
\let\p=\partial
\def\a{\alpha'}
\def\d{\textrm{d}}
\def\End{\textrm{End}}
\def\Im{\textrm{Im}}
\def\ker{\textrm{ker}}
\def\bp{\bar{\partial}}
\def\tr{\textrm{tr}}
\def\OO{\mathcal{O}}
\def\Q{\mathcal{Q}}
\def\H{\mathcal{H}}
\def\B{\mathcal{B}}
\def\Z{\mathcal{Z}}
\def\F{\mathcal{F}}
\def\H{\mathcal{H}}
\newtheorem{Cor}{Corollary}
\newtheorem{Theorem}{Theorem}
\newtheorem{Proposition}{Proposition}
\theoremstyle{definition}
\begin{document}



\title{\rm Holomorphic Bundles and the Moduli Space of\\
N=1 Supersymmetric Heterotic Compactifications\\}

\author[a]{Xenia de la Ossa}
\author[b]{and Eirik E.~Svanes}

\affiliation[a]{Mathematical Institute\\
University of Oxford\\
Andrew Wiles Building\\ Radcliffe Observatory Quarter\\
Woodstock Road\\
Oxford OX2 6GG, UK\\}

\affiliation[b]{Rudolf Peierls Centre for Theoretical Physics\\ 
University of Oxford\\ 
1 Keble Road\\
Oxford OX1 3NP, UK\\}

\emailAdd{delaossa@maths.ox.ac.uk}
\emailAdd{e.svanes1@physics.ox.ac.uk}

\null\vskip10pt

\abstract{ We describe the first order moduli space of heterotic string theory compactifications which preserve $N=1$ supersymmetry in four dimensions, that is, the infinitesimal parameter space of the Strominger system. We establish that if we promote a connection on $TX$ to a field, the moduli space corresponds to deformations of a holomorphic structure $\bar D$ on a bundle $\cal Q$. The bundle $\cal Q$ is constructed as an extension by the cotangent bundle $T^*X$ of the bundle $E= {\rm End}(V) \oplus {\rm End}(TX) \oplus TX$ with an extension class $\cal H$ which precisely enforces the anomaly cancelation condition. The deformations corresponding to the bundle $E$ are simultaneous deformations of the holomorphic structures on the poly-stable bundles $V$ and $TX$ together with those of the complex structure of $X$. We discuss the fact that the ``moduli'' corresponding to ${\rm End}(TX)$ cannot be physical, but are however needed in our mathematical structure to be able to enforce the anomaly cancelation condition. In the Appendix we comment on the choice of connection on $TX$ which has caused some confusion in the community before. It has been shown by Ivanov and others that this connection should also satisfy the instanton equations, and we give another proof of this fact.}

\keywords{Heterotic Compactifications, Moduli Spaces, Holomorphic Bundles}

\maketitle

\newpage

\section{Introduction.}
\label{sed:intro}
Ever since Strominger and Hull~\citep{Strominger:1986uh, Hull:1986kz} first worked out the geometry arising from a general  supersymmetric compactification of the heterotic string assuming that the four dimensional space time is maximally symmetric and that the compactification preserves $N=1$ supersymmetry\footnote{See also \citep{Lust:1986ix} where heterotic torsional compactifications on coset spaces were studied for the first time.}, torsional compactifications of the heterotic string have been an active topic of interest
both from the ten-dimensional supergravity point of view \citep{Lust:1986ix, Bergshoeff1989439, Bergshoeff:1988nn, Dasgupta:1999ss, 
1999math......1090A, Ivanov:2000fg, 2001CQGra..18.1089I, 2001math......2142F, Becker:2002sx, Gauntlett:2003cy, Becker:2003yv, Becker:2003sh, LopesCardoso:2003sp, LopesCardoso:2003af, Becker:2005nb, Becker:2006xp, Lapan2006, Becker:2006et, Fu:2006vj, Becker:2009df,  Andreas:2010cv, 2012CMaPh.315..153A, 2013arXiv1304.4294G}, 
and from the two-dimensional sigma-model point of view \citep{Adams:2006kb, Sharpe:2008rd, Kreuzer:2010ph, McOrist:2010ae, Beccaria:2010yp, NibbelinkGroot:2010wm, McOrist:2011bn,  Melnikov:2011ez, Blaszczyk:2011ib, Quigley:2011pv, Adams:2012sh, Nibbelink:2012wb, Melnikov:2012hk, Quigley:2012gq, Melnikov:2012nm}.  

In this paper, we study the heterotic string from the ten-dimensional supergravity point of view. We review compactifications corresponding to the Strominger system which appears at first order in the $\a$ expansion of the heterotic theory and study the infinitesimal moduli space.  In the Appendix, we give a brief motivation for the Strominger system and consider higher order $\a$ corrections to the system. We will however delay a full treatment of the higher order theory to a forthcoming  publication \citep{delaossa2014}.

\subsection{Heterotic Supergravity and the Strominger System.}

We begin in section \ref{sec:hetcomp} with a review of  the geometry of the Strominger system and set up the notation. 
The heterotic compactification we are interested in consists on a pair $(X,V)$ where $X$ is a six dimensional Riemannian spin manifold, together with a vector bundle $V$ on $X$.  This pair has the geometric properties  given by
\begin{itemize}
\item A six-dimensional compact space $X$ with an $SU(3)$-structure given by a nowhere vanishing three-form $\Psi$, and a hermitian form $\omega$ satisfying the $SU(3)$-structure compatibility conditions
\begin{equation*}
\Psi\wedge\omega=0\:,\;\;\;\;\;\;\frac{i}{\vert\vert\Psi\vert\vert^2}\Psi\wedge\bar\Psi=\frac{1}{6}\omega\wedge\omega\wedge\omega\:.
\end{equation*}
The space $X$ is complex, with complex structure determined by the form $\Psi$  which is conformally holomorphic,
\begin{equation*}
\bp(e^{-2\phi}\Psi)=0\:,
\end{equation*}
and the hermitian form $\omega$ is conformally balanced,
\begin{equation*}
\d(e^{-2\phi}\omega\wedge\omega)=0\:.
\end{equation*}
In this paper, manifolds with this structure are called manifolds with a {\it heterotic $SU(3)$ structure}. 
\item A gauge-bundle connection $A$ with structure group contained in $E_8{\times}E_8$ on a vector bundle $V$ with field strength $F$ satisfying the instanton condition. That is,  the bundle is holomorphic, and the curvature satisfies the Yang-Mills condition
\begin{equation*}
\omega\lrcorner F=0\:.
\end{equation*}
\item A  connection $\nabla^I$ on the tangent bundle with curvature $R^I$ which also satisfies the instanton condition. As discussed in the Appendix, this instanton connection is needed to ensure that supersymmetric solutions which satisfy the anomaly cancelation condition, also solve the equations of motion\footnote{To $\cal{O}(\alpha')$ this instanton condition is satisfied by the Hull connection.}.  In this work however, we promote this instanton connection to a field.  As we will see, this is needed to be able to implement the anomaly cancelation condition into the deformation theory.  The price we pay is that in doing so we get extra ``moduli" associated to this connection, which will have to be given the correct physical interpretation.  We leave the interpretation of these parameters for a forthcoming publication~\cite{delaossa2014}.
\item The connections and the geometry of $X$ are related by the Bianchi identity
\begin{equation*}
-2i\p\bp\omega=\frac{\a}{4}(\tr (F\wedge F)-\tr (R^I\wedge R^I))\:.
\end{equation*}
\end{itemize}
We would like to remark that due to the theorem of Li and Yau~\citep{MR915839}, we require the holomphic bundles $V$ and $TX$ to be polystable in order for there to be a solution of the instanton equations. We would also like to point out that the geometry coming from requiring a supersymmetric theory to {\it second order in the $\alpha'$ expansion} can also be described in terms of the Strominger system, at least in the case when the base $X$ is compact. We will however defer the discussion of higher order $\alpha'$ corrections to a future companion paper \cite{delaossa2014}, but we comment briefly on the results of that paper in the Appendix.

\subsection{Holomorphic Structures and Moduli.}
We begin subsection \ref{subsec:infsu3} by discussing the deformations of the relevant $SU(3)$ structure of $X$.  Such a deformation of the $SU(3)$ structure corresponds to simultaneous deformations of the complex structure determined by $\Psi$ together with those of the hermitian structure determined by $\omega$, such that the heterotic $SU(3)$ structure is preserved. The deformations of the complex structure are easily described as the complex structure does not depend on the metric or the hermitian structure on $X$.  The analysis is similar to some extent to that for Calabi--Yau manifolds. Deformations of the hermitian structure satisfying the conformally balanced condition is more difficult as $\omega$ also deforms with the complex structure. One of the problems is that the moduli space of deformations of the hermitian structure seems to be infinite dimensional ~\citep{Becker:2006xp}.  Moreover, the conformally balanced condition is not stable under deformations of the complex structure~\citep{MR1107661, MR1137099, 0735.32009, 0793.53068} in contrast with the stability of the K\"ahler condition.  It turns out that by including the equations derived from the deformation of the anomaly cancelation condition, we find a finite dimensional space for these parameters (see subsections \ref{subsubsec:defHS} and \ref{subsec:Anomaly}).  We note that in a forthcoming publication~\citep{DKS2014}, we show that one parameter families of manifolds with a heterotic $SU(3)$ structure which have an integrable $G_2$ structure, or a certain $SU(4)$ or $\text{Spin}(7)$ structure, are families which automatically have the conformally balanced condition preserved along the family.  This could be very interesting for applications to M-theory and F-theory. 

We investigate the moduli of the Strominger system using the mathematical tools available in  deformation theory of holomorphic structures.
We use the machinery developed by Atiyah \cite{MR0086359}.\footnote{This was also  used in \cite{Anderson:2010mh} where the combined bundle and complex structure moduli where studied in the Calabi-Yau case.}  We construct a holomorphic structure $\bar D$ on an extension bundle $\Q$ which
is an extension by the holomorphic cotangent bundle $T^*X$ of a bundle $E$ given by the short exact sequence
\[ 0 \rightarrow T^*X \rightarrow \Q \rightarrow E \rightarrow 0~,\]
with an extension class $\cal H$ which precisely enforces the anomaly cancelation condition. We  compute first order deformations of the holomorphic structure by computing a long exact sequence in cohomology associated to the short exact sequence above.  We proceed in a stepwise manner.

In subsection \ref{subsec:defsV} we study in detail the deformations  of the holomorphic structure of the bundle $V$ on $X$. We generalise the work in~\citep{Anderson:2010mh, Anderson:2011ty} for the case in which $X$ is a Calabi-Yau manifold to the case of the more general non-K\"ahler conformally balanced manifolds at hand in view of the theorem by Li and Yau~\citep{MR915839}. 
Recall for instance that the holomorphic condition on the gauge bundle is equivalent to a nilpotency condition on the operator
\begin{equation*}
\bp_{\cal A}=\bp+[{\cal A},\:],
\end{equation*}
that is $\bp_{\cal A}^2=0$. Here $\cal A$ is the $(0,1)$-part of the gauge connection. Moreover, simultaneous deformations of the complex structures on $(X,V)$ correspond to elements in the cohomology groups
\begin{equation*}
H^{(0,1)}_{\bp_A}(X,\End(V))\oplus\ker(\F)\:,
\end{equation*}
where $TX$ is the holomphic tangent bundle, and
\begin{equation*}
\F\::\:H^{(0,1)}_{\bp}(X,TX)\rightarrow H^{(0,2)}_{\bp_A}(X,\End(V))~,
\end{equation*}
 is the Atiyah map associated to the extension of $TX$ by $\End(V)$. Here $H^{(0,1)}_{\bp_A}(\End(V))$ are the bundle moduli, and they correspond to the gauge fields in the lower four dimensional theory.
 We extend our results in section \ref{subsec:defsTX} to include the deformations of the instanton connection $\nabla^I$ on $TX$ by considering the extension bundle
 \[ E= \End(TX)\oplus \End(V)\oplus TX~.\]
 We then obtain simultaneous deformations of the holomorphic structures on the bundles and of the complex structure of $X$ by computing the deformations of a holomorphic structure on this extension bundle. The moduli associated to deformations of the connection $\nabla^I$ which leave $X$ fixed are given by elements in the cohomology group
\begin{equation*}
H^{(0,1)}_{\bp_I}(X,\End(TX))\:.
\end{equation*}
These are the extra ``moduli'' which appear as a consequence of considering $\nabla^I$ as a field. These fields however cannot be true moduli of the theory since eventually one has to remember that the connection for the curvature $R$ in the 10 dimensional heterotic action depends on the other fields in the theory (precisely what this dependence is has to do with how the fields are defined~\citep{Sen1986289}).  

In section \ref{subsec:Primitive}, we discuss the fact that the primitivity condition for the curvature of the instanton connections on the bundles also has a solution after deformations of the holomorphic structure on $E$. As the stability of the bundles is preserved under deformations of the complex structure of $X$ which preserve the holomorphicity of the bundles~\citep{MR2665168},  the theorem of Li and Yau~\citep{MR915839} now guarantees that the deformed connections satisfy the instanton conditions. Nevertheless, we prove that on a conformally balanced manifold, a general first order variation (including varying the hermitian structure) of the primitivity conditions of the curvatures preserving the primitivity conditions, does not pose  any new constraints on the moduli space. 

 The full moduli of the Strominger system is given in \ref{subsec:Anomaly} where we state the main result of the paper.
As mentioned above, we define $\Q$ by extending $E$ by the holomorphic co-tangent bundle $T^*X$
 \begin{equation*}
\Q=T^*X\oplus\End(V)\oplus\End(TX)\oplus TX\:.
\end{equation*}
and define a holomorphic structure $\bar D$  on $\cal Q$
\begin{equation*}
\bar D:\Omega^{(p,q)}(X,\Q)\rightarrow\Omega^{(p,q+1)}(X,\Q)\:.
\end{equation*}
The operator $\bar D$ has a rather lengthy definition, which we leave to section \ref{subsec:Anomaly}. What is important is that  
 it includes the anomaly cancelation condition.  Moreover $\bar D^2 = 0$ if and only if the Bianchi identities for $F$, $R^I$ and the anomaly cancelation condition are satisfied.  
The first order deformations of this holomorphic structure correspond to the elements in the cohomology group $H^{(0,1)}_{\bar D}(\Q)$, that is, the tangent space $\cal TM$ to the moduli space $\cal M$ is given by 
\begin{equation*}
\mathcal{T M}\cong H^{(0,1)}_{\bar D}(\Q)\cong  \Big[H^{(0,1)}_{\bp}(T^*X)\Big/\textrm{Im}(\H)\Big]\oplus\ker(\H)\:,
\end{equation*}
and we show that this is the infinitesimal moduli space of the heterotic compactifications. The subgroup $\ker(\H)$ is contained in the moduli space of deformations of $E$, that is the simultaneous variations of complex structure on X and holomorphic structures on the bundles, and it is in fact the kernel of a map $\cal H$ that corresponds to the analogue of the Atiyah class  for the  short exact extension sequence defining $\Q$.
This is nothing but the obvious fact that the anomaly cancellation condition poses a non-trivial extra constraint on the moduli.  We also argue that
the (complexified) hermitian parameters belong to the group
\begin{equation*}
H^{(0,1)}_{\bp}(X,T^*X)\:,
\end{equation*}
like in the Calabi--Yau case.  These should be modded out by 
\[\textrm{Im}(\H)\cong\{\tr(F\,\alpha)\:\vert\:\alpha\in H^0(X,\End(V))\}\subset H^{(1,1)}(X),\] 
which appears whenever the bundle $V$ is polystable, and which precisely enforces the Yang-Mills condition on $V$. We remark that in~\citep{Melnikov:2011ez}, the authors study this structure in the limiting case when
$\alpha'\to 0$ for heterotic compactifications from the two dimensional $(0,2)$ superconformal field theory.

We devote the last section \ref{sec:conclusions} to a discussion of our results, in particular the fact that we appear to find in our set up extra moduli, those in $H^{(0,1)}_{\bp_I}(\End(TX))$ which correspond to deformations of the holomorphic structure of the tangent bundle given by $\nabla^I$ but which leave $X$ fixed.  It seems that we need this extra structure to be able to enforce the anomaly cancelation condition and, as discussed in the Appendix, this seems the natural mathematical structure to consider.  Note that this structure, in which the connection $\nabla^I$ behaves as another dynamical field,  is also the natural one appearing when one considers the heterotic theory to 
higher orders in the $\alpha'$ expansion~\citep{delaossa2014}.  Moreover, the mathematical structure we have in this paper is very similar to that in~\citep{2013arXiv1304.4294G} where a generalised geometry for the heterotic supergravity is proposed, and in \citep{2013arXiv1308.5159B}, where $T$-duality for heterotic compactifications in the context of generalised geometry and Cournat algebroids is studied.
Of course, there are however reasons as to why we do not want these extra fields in the low energy field theory, namely, that the connection $\nabla^I$ is not independent of the geometry of $X$.  We will discuss these points in the conclusion section and also in a future publication.

\newpage

\section{SU(3) Structures and the Strominger System.}
\label{sec:hetcomp}

We begin with a review of the results of Hull~\citep{Hull:1986kz}, Strominger~\citep{Strominger:1986uh} and 
de\,Wit\,{\it et\,al}~\citep{deWit:1986xg}.
The requirements that the four dimensional space--time is maximally symmetric, that $N=1$ supersymmetry is preserved in four dimensions, and that  an equation cancelling anomalies is satisfied, pose strong constraints on the possible geometries that are allowed as solutions of the equations of motion.   

Recall first the massless spectrum of the $N=1$ ten dimensional supergravity theory.  The bosonic fields are: the metric $g_{MN}$, the dilaton $\Phi$, a totally antisymmetric 3-form $H_{MNP}$ and the $E_8{\times}E_8$ gauge field strength $F_{MN}$.  The fermionic fields are the spin 3/2 gravitino, the spin 1/2 dilatino, and the spin 1/2 gluinos which take values in the adjoint representation of $E_8{\times}E_8$.   The indices $M, N$, etc, are ten dimensional tangent space indices.

A compactification to four dimensions is obtained by considering a ten dimensional space-time which is a local product $M_4\times X$ of a maximally symmetric four dimensional space-time $M_4$, and a six dimensional manifold $X$.  Maximal symmetry on $M_4$ requires that on the internal manifold $X$ there are: a scalar $\phi$ (the dilaton),  a Riemannian metric $g_{mn}$, a 3-form $H$ (the flux),  gauge fields $A_m$ in the adjoint representation of a subgroup of $E_8{\times}E_8$, and the supersymmetric partners of these bosonic fields.  The latin indices $m,n$, etc, are six dimensional indices on the tangent space $TX$ of $X$.  Also, a consequence of imposing the constraints of  $N=1$ supergravity in four dimensional space time is that $M_4$ must be Minkowski. 

So $X$ is a real six dimensional manifold with metric $g$,  and on $X$ there is a vector bundle $V$ with curvature $F$ which takes values in $\End(V)$.  We now discuss the constraints on the geometry of $(X,V)$. 

\subsection{Constraints on the Geometry of \texorpdfstring{$X$}\ .}
\label{subsec:Xgeom}

Supersymmetry requires that on $X$ there must exist  a nowhere vanishing globally defined complex spinor $\eta$.
This means that $X$ must be a spin manifold and that the structure group of  $X$ is reduced to 
a subgroup $SU(3)\subset \text{Spin}(6)$.  
An $SU(3)$ structure on $X$~\citep{0444.53032, 1024.53018, LopesCardoso:2002hd}  is defined by a triple $(X,\omega,\Psi)$, where $\omega$ is a non-degenerate globally well defined real 2-form, and $\Psi$ is a locally decomposable no-where vanishing globally well defined complex 3-form.  
The forms $\Psi$ and $\omega$ satisfy
\begin{equation}
 \omega\wedge\Psi = 0~.\label{eq:compatibility}
 \end{equation}
In fact, there is an $SU(3)$ structure on $X$ determined entirely by the spinor $\eta$.  The two non-degenerate forms, $\omega$ and $\Psi$, can be constructed as bilinears of  $\eta$
\begin{equation*}
\begin{split}
\omega_{mn} &= -i \, \eta^\dagger \,\gamma_{mn}\, \eta\\
\Psi_{mnp} &= ~~\eta^T\,\gamma_{mnp}\,\eta~,
\end{split}
\end{equation*}
where $\gamma_m$ are the matrices that satisfy the Clifford algebra in six dimensions
\[ \{\gamma_m,\gamma_n\} = 2\, g_{mn}~,\]
and $\gamma_{m_1m_2\cdots m_p}$ denotes the totally antisymmetric product of $p$ gamma matrices
\[\gamma_{m_1m_2\cdots m_p} = \gamma_{[m_1}\gamma_{m_2}\cdots\gamma_{m_p]}~.\] 
Using Fierz rearrangement, one can prove that these satisfy  \eqref{eq:compatibility}.  One can also prove that there is a unique (up to a constant) invariant volume form on $X$ which satisfies
the compatibility condition
\begin{equation}
\d{\rm vol}_X =  \frac{1}{6}\, \omega\wedge\omega\wedge\omega =
\frac{i}{||\Psi||^2}\, \Psi\wedge\bar\Psi~,\label{eq:volume}
\end{equation} 
where
\[ ||\Psi||^2 = \frac{1}{3!}\, \Psi^{mnp}\, \Psi_{mnp}~.\]

The (real part of the) complex 3-form $\Psi$ determines a unique almost complex structure $J$~\citep{MR1871001} such that $\Psi$ is a $(3,0)$-form
with respect to $J$.  In fact, 
\begin{equation}
\label{eq:complexstr}
{J_m}^n=\frac{{I_m}^n}{\sqrt{-\frac{1}{6}\tr I^2}},
\end{equation}
where the tangent bundle endomorphism $I$ is given by
\begin{equation}
{I_m}^n=(\textrm{Re}\Psi)_{mpq}(\textrm{Re}\Psi)_{rst}\,\epsilon^{npqrst}.\label{eq:CSnorm}
\end{equation}
With the normalization in \eqref{eq:complexstr},  it is not too difficult to prove that $J^2=-{\bf 1}$.  Note also that a change of scale $\Psi\to\lambda\Psi~,\lambda\in\IC^*$, defines the same complex structure $J$. 
With respect to $J$, the real two form $\omega$ is type $(1,1)$ due to \eqref{eq:compatibility}.  Moreover, $\omega$ is an almost hermitian form
\begin{equation}
\omega(X,Y) =  g(JX, Y)~,\quad \forall X, Y\in TX~.
\end{equation}

Therefore 6-dimensional manifolds with an $SU(3)$ structure are almost hermitian manifolds with trivial canonical bundle. We are not done however, as the preservation of  supersymmetry  also imposes differential conditions on $\omega$ and $\Psi$. 

Preservation of supersymmetry requires that on $X$ there must exist a metric connection $\nabla^+$ with skew-symmetric torsion $T=H$, where $H$ is the 3-form flux.  In other words, the connection symbols are
\begin{equation}
 \Gamma^+_{mn}{}^p = \Gamma^{LC}_{mn}{}^{\, p} + \frac{1}{2}\, H_{mn}{}^p~,\label{eq:BismutGamma}
 \end{equation}
where $\Gamma^{LC}_{mn}{}^{\, p}$ are the Christoffel symbols.
The vanishing of the supersymmetric variation of the graviton requires that the spinor $\eta$ must be covariantly constant with respect to this connection
\[ \nabla^+_m\,\eta = \nabla^{LC}_m\, \eta + \frac{1}{8}\, H_{mnp}\, \gamma^{np}\, \eta = 0~.\]
This in turn means that the forms $\omega$ and $\Psi$ are covariantly constant
\[ \nabla^+\Psi = 0~,\qquad \nabla^+\omega= 0~.\]
The almost complex structure determined by $\Psi$ must also be covariantly constant
\[ \nabla^+ J = 0~,\]
that is $\nabla^+$ is a hermitian connection. It is straightforward to prove that Nijenhuis tensor of $J$ vanishes. Therefore $J$ is integrable and $X$ {\it must be a complex manifold}.

On a complex manifold there is unique metric hermitian connection which has  totally antisymmetric torsion, and this is precisely the connection $\nabla^+$.  This connection is called the {\it Bismut connection}~\citep{0666.58042} in the mathematics literature, and its torsion is given by
\begin{equation}
T= H = i(\partial - \bar\partial)\,\omega~.\label{eq:Bismut}
\end{equation}

Equations for the exterior derivative of $\omega$ and $\Psi$ can be obtained from the fact that both $\omega$ and $\Psi$ are covariantly constant.  One can decompose the exterior derivative of $\omega$ and $\Psi$ into irreducible representations of $SU(3)$.
For a general $SU(3)$ structure,   we have 
\begin{align*}
\d\omega&=-\frac{12}{\vert\vert\Psi\vert\vert^2}\,\textrm{Im}(W_0\bar\Psi)+W_1^\omega\wedge\omega+W_3\\
\d\Psi&=W_0\ \omega\wedge\omega+W_2\wedge\omega+\bar W_1^\Psi\wedge\Psi\:.
\end{align*}
were $(W_0, W _1^\omega,W_1^\Psi,W_2,W_3)$ are the five {\it torsion classes}~\citep{0444.53032, 1024.53018, LopesCardoso:2002hd, Gauntlett:2003cy}.
Here, $W_0$ is a complex function, $W_2$ is a primitive $(1,1)$-form, $W_3$ is  a real primitive 3-form of type $(1,2)+(2,1)$, $W_1^\omega$ is a real one-form, and $W_1^\Psi$ is a $(1,0)$-form. 
The one forms $W_1^\omega$ and $W_1^\Psi$ are known as the Lee-forms of $\omega$ and $\Psi$ respectively, and they are given by
\begin{align*}
W_1^\omega&=\frac{1}{2}\, \omega\lrcorner\d\omega\\
W_1^\Psi&=- \frac{1}{\vert\vert\Psi\vert\vert^2}\,\Psi\lrcorner\d\bar\Psi.
\end{align*}
The contraction operator $\lrcorner$ is defined as 
\[ \alpha\lrcorner \beta = \frac{1}{k! p!}\, \alpha^{m_1\cdots m_k}\, \beta_{m_1\cdots m_k n_1\cdots n_p}\,
\d x^{n_1}\wedge\cdots\wedge \d x^{n_p}
= (-1)^{p(d-p-k)}\,* (\alpha\wedge *\beta)~,\]
where $\alpha$ is a $k$-form, $\beta$ is a $k+p$-form, and $d$ is the dimension of the manifold, which in our case is $d=6$.

Under a change of scale $\Psi\rightarrow\lambda\Psi$, $\lambda\in\IC^*$, the Lee-forms $W_1^\omega$ and $W_1^{\Psi}$, and $W_3$ are invariant, however
$$W_0\longrightarrow\lambda W_0~,\qquad W_2\longrightarrow\lambda W_2~.$$

The vanishing of the Nijenhuis tensor, and thus the integrability of the complex structure $J$, is equivalent to the vanishing of $W_0$ and $W_2$ (note that these are the only torsion classes that scale under $\Psi\to\lambda\Psi$).
Also, for the Bismut connection the Lee-forms are related by~\citep{LopesCardoso:2002hd}
\[   {\rm Re}(W_1^\Psi) =  W_1^\omega~.\]
Therefore, the exterior derivatives of $\omega$ and $\Psi$ are
\begin{align*}
\d\omega&= {\rm Re}(W_1^\Psi)\wedge\omega+W_3\:\\
\d\Psi&=\bar W_1^\Psi\wedge\Psi\:.
\end{align*}
Note that $\bar\partial\, \bar W_1^\Psi = 0$ as can be seen by taking the exterior derivative of the second equation. 

The vanishing of the supersymmetric variation of the dilaton gives a further constraint: the Lee-form of $\omega$ must be exact with
\[ W_1^\omega = \d\phi~.\]
Therefore, the equation for $\d\omega$ gives
\begin{equation}
\d( e^{-2\phi}\, \omega\wedge\omega) = 0~,\label{eq:confbal}
\end{equation}
that is, the manifold $X$ is required to be {\it conformally balanced}.
Furthermore, the equation for $\d\Psi$ means that $X$ must have a  {\it holomorphically trivial canonical bundle}
\begin{equation}
 \d(  e^{-2\phi}\, \Psi) = 0~.\label{eq:hol}
 \end{equation}
 In this paper, a complex conformally balanced manifold $X$ with a holomorphically trivial canonical bundle will be called a manifold with a {\it heterotic structure}.
 
 \subsection{Constraints on the Vector Bundle \texorpdfstring{$V$}\ .}
 \label{subsec:Vgeom}
 
 The vanishing of the supersymmetric variation of the gravitino imposes conditions on the bundle $V$.  More precisely, the curvature of the Yang-Mills connection satisfies
 \begin{align}
F\wedge\Psi  &= 0~,\label{eq:holF} \\
\omega\lrcorner F &= 0\label{eq:instV}~. 
 \end{align} 
The first condition is equivalent to $F^{(0,2)}= 0$, that is, $V$ {\it must be a holomorphic bundle}.  The second equation states that the curvature $F$ must be {\it primitive} with repect to $\omega$.  Both conditions together mean that  $V$ {\it must admit a  Hermitian Yang-Mills connection}.  Because the right hand side of equation \eqref{eq:instV} is zero, we say that the connection on  $V$ is an {\it instanton}.
We will be working with manifolds which are in general not K\"ahler, however we will take a moment to discuss the case where $X$ is K\"ahler.

When $X$ is K\"ahler, the existence  of a Hermitian Yang-Mills connection on $V$ is guaranteed by the work of Donaldson~\citep{0529.53018} and Uhlenbeck and Yau~\citep{MR861491, MR997570}.  We have the following theorem:
\vskip5pt

\begin{Theorem}[Donaldson, Uhlenbeck-Yau]
A polystable holomorphic vector bundle $V$ over a compact K\"ahler manifold $X$, with hermitian form $\omega$, admits a unique Hermitian Yang-Mills connection.
\end{Theorem}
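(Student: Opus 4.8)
The plan is to prove this as the Hermitian Yang--Mills case of the Donaldson--Uhlenbeck--Yau correspondence, following the classical route. First I would recast the statement in terms of Hermitian metrics: with the holomorphic structure $\bp_V$ fixed, every Hermitian metric $h$ on $V$ carries a Chern connection whose curvature $F_h$ automatically satisfies $F_h^{(0,2)}=0$, and the Hermitian Yang--Mills condition is the single equation $\omega\lrcorner F_h=\lambda\,\mathbf 1_V$, where $\lambda$ is the topological constant fixed by the slope $\mu_\omega(V)=\deg_\omega(V)/\operatorname{rk}(V)$; this reduces to the instanton equation $\omega\lrcorner F_h=0$ exactly when $\mu_\omega(V)=0$. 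So I must produce a metric solving this equation. The reduction to the stable case is immediate from the definition of polystability: writing $V=\bigoplus_i V_i$ with each $V_i$ stable of the same slope $\mu_\omega(V)$, an orthogonal direct sum of solutions on the $V_i$ solves the equation on $V$, so it suffices to treat stable bundles.

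For stable $V$ I would run the continuity method of Uhlenbeck--Yau (Donaldson's parabolic Hermitian Yang--Mills flow is an equivalent alternative). Fix a smooth background metric $k$ and, for $\varepsilon\in(0,1]$, study the perturbed family
\[
  \omega\lrcorner F_{h_\varepsilon}-\lambda\,\mathbf 1_V \;=\; -\,\varepsilon\,\log\!\big(k^{-1}h_\varepsilon\big)~.
\]
At $\varepsilon=1$ a solution exists because the zeroth--order term makes the equation coercive; openness of the set of admissible $\varepsilon$ follows from the implicit function theorem, the linearisation being, for $\varepsilon>0$, an elliptic operator of Laplace type made invertible on $k$--self--adjoint endomorphisms by the negative--definite zeroth--order contribution of $-\varepsilon\log(k^{-1}h)$. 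The decisive point is closedness and, above all, the limit $\varepsilon\to0$: both require a uniform $C^0$ bound on $s_\varepsilon:=\log(k^{-1}h_\varepsilon)$. If that bound holds, the Kähler identities present the equation as a second--order elliptic system, elliptic bootstrapping yields uniform higher estimates, and a subsequence converges to a metric with $\omega\lrcorner F_h=\lambda\,\mathbf 1_V$. If instead $\|s_\varepsilon\|_{C^0}\to\infty$, one rescales, passes to a weak $L^2_1$ limit, and out of the blow--up locus extracts a saturated coherent subsheaf $S\subsetneq V$ with $\mu_\omega(S)\ge\mu_\omega(V)$, contradicting stability. \textbf{This dichotomy is the main obstacle}: its technical core is the a priori estimate showing the blow--up profile defines a weakly holomorphic subbundle, Uhlenbeck--Yau's regularity theorem identifying such an $L^2_1$ sub-object with an honest analytic subsheaf, and the Bogomolov--type integral inequality that pins down its slope.

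It remains to establish uniqueness. On a stable $V$, let $h_1,h_2$ both solve the equation and set $s=h_1^{-1}h_2$, positive and self--adjoint for both metrics. Subtracting the two equations and invoking the Kähler identities gives, via the standard Bochner computation, that $\operatorname{tr}(s)$ is subharmonic on the compact manifold $X$; hence it is constant and, tracing back through the computation, a $\bp$--covariant derivative of $s$ must vanish, so $s$ is a holomorphic endomorphism of $V$. Since a stable bundle is simple, $s=c\,\mathbf 1_V$ for some $c>0$, and rescaling a Hermitian metric by a positive constant does not alter its Chern connection, so the Hermitian Yang--Mills connection on a stable bundle is unique. For polystable $V$, any solution induces (through its reduced holonomy) an orthogonal splitting of $V$ into Hermitian Yang--Mills stable pieces; by the stable case these pieces and their connections are the canonical ones, so the Hermitian Yang--Mills connection on $V$ is the corresponding direct sum, and in particular unique. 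Reassembling the summands completes the proof.
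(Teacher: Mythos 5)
The paper does not prove this theorem at all: it is quoted as a classical black-box result with citations to Donaldson and to Uhlenbeck--Yau, so there is no in-paper argument to compare yours against. Your proposal is a faithful and essentially correct outline of the standard proof: the reformulation in terms of Chern connections of Hermitian metrics, the reduction of existence to the stable case by taking orthogonal direct sums over the polystable decomposition, the Uhlenbeck--Yau continuity method with the perturbation $-\varepsilon\log(k^{-1}h_\varepsilon)$, the dichotomy between a uniform $C^0$ bound and the extraction of a destabilising weakly holomorphic subsheaf, and uniqueness via subharmonicity of $\operatorname{tr}(s)$ plus simplicity of stable bundles. You correctly flag that the technical core (the a priori estimate, the regularity theorem identifying an $L^2_1$ weakly holomorphic subbundle with a coherent subsheaf, and the slope inequality) is where the real work lies; as written, these are invoked rather than established, so your text is a roadmap of the known proof rather than a self-contained one --- which is entirely appropriate for a result of this depth that the paper itself only cites.

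One small point deserves care: in the polystable case with isomorphic stable summands, the splitting $V=\bigoplus_i V_i$ is not canonical (automorphisms of $V$ can mix isomorphic factors), so the Hermitian Yang--Mills connection is unique only up to the action of such automorphisms, i.e.\ up to gauge transformations. This matches the paper's own gloss (``unique, up to gauge transformations, for a given holomorphic structure''), but your closing sentence, read literally, overstates the uniqueness slightly; a phrase such as ``unique up to gauge equivalence'' would make it precise.
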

\rightline{$\square$}
\noindent The stability refers to the slope $\mu(V)$ of $V$ which is defined as
$$\mu(V) = \frac{\int_X c_1(V)\wedge\omega^2}{\text{rk}(V)}~.$$
Stability of $V$ states that a vector bundle $V$ is {\it stable} if for all sub-sheaves  $E$ of $V$ with $0<\text{rk}(E)<\text{rk}(V)$ we have
$$\mu(E)<\mu(V)~.$$
A vector bundle $V = \oplus_i V_i$, is {\it polystable} if each $V_i$ is stable and satisfies $\mu(V_i)=\mu(V).$ 
Note that the Hermitian Yang Mills connection is unique, up to gauge transformations, for a given holomorphic structure on the bundle.

Buchdahl~\citep{MR939923} (for the case of complex surfaces) and,  Li and Yau~\citep{MR915839} (for higher dimensional complex manifolds) generalised this theorem to non-K\"ahler manifolds which admit a {\it Gauduchon metric}.  A Gauduchon metric $\hat g$ on a hermitian $n$ dimensional manifold with corresponding hermitian form $\widehat\omega$ is a metric that satisfies
$$\partial\bar\partial\,\widehat\omega{}^{n-1} = 0~.$$
For $n=3$, and a manifold $X$ which has an $SU(3)$ heterotic structure, this means that 
$$\widehat\omega = e^{-\phi}\omega$$ 
is Gauduchon, as it satisfies the balanced condition
$$\d( e^{-2\phi}\omega\wedge\omega)= \d(\widehat\omega\wedge\widehat\omega)=0~.$$
\vskip5pt

\begin{Theorem}[Buchdahl, Li-Yau]
Let $X$ be a compact hermitian manifold with a Gauduchon metric $\hat g$ and corresponding hermitian form $\widehat\omega$.  A polystable (with respect to the class $[\widehat\omega^2]$) holomorphic vector bundle $V$ over $X$ admits a unique Hermitian Yang-Mills connection.
\end{Theorem}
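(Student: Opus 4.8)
The strategy is to re-run the continuity-method proof of the K\"ahler case (Theorem 1, due to Donaldson and Uhlenbeck--Yau~\citep{MR861491, MR997570}), keeping precise track of the one place where the K\"ahler hypothesis enters, and to check that it may be replaced throughout by the Gauduchon condition $\p\bp\,\widehat\omega^{\,n-1}=0$, where $n=\dim_\IC X$. Fix a background hermitian metric $H_0$ on $V$; every hermitian metric is $H=H_0\,h$ with $h$ a positive $H_0$-self-adjoint endomorphism, and the Hermitian--Yang--Mills equation reads $\sqrt{-1}\,\widehat\omega\lrcorner F_{H_0h}=\mu\,\mathbf 1$, with $\mu$ fixed by the rank and the degree $\deg(V)=\int_X c_1(V)\wedge\widehat\omega^{\,n-1}$. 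The first observation is that $\mu$ is well defined at all: for two metrics the $(1,1)$-forms $\tr F_H$ and $\tr F_{H'}$ differ by $\p\bp$ of a global function $f$, and $\int_X\p\bp f\wedge\widehat\omega^{\,n-1}=\pm\int_X f\,\p\bp\,\widehat\omega^{\,n-1}=0$ by the Gauduchon property; the same integration by parts shows that the Donaldson functional $M(H_0,\cdot)$ on the space of metrics is well defined, with first variation $\tfrac{\d}{\d t}M=\int_X\tr\!\big(h^{-1}\dot h\,(\sqrt{-1}\,\widehat\omega\lrcorner F_{H}-\mu\mathbf 1)\big)$ and convexity along the natural geodesics $h_t$, exactly as in the K\"ahler case.

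With these two functionals in hand the bulk of the Uhlenbeck--Yau argument goes through formally. First I would solve the perturbed equations $\sqrt{-1}\,\widehat\omega\lrcorner F_{H_0 h_\varepsilon}-\mu\mathbf 1=-\varepsilon\log h_\varepsilon$ for $\varepsilon\in(0,1]$: solvability at $\varepsilon=1$ and openness in $\varepsilon$ follow from the implicit function theorem (the linearisation is a positive $\bp$-Laplacian shifted by $\varepsilon$), and closedness on intervals bounded away from $0$ follows from elliptic estimates once $\|\log h_\varepsilon\|_{C^0}$ is controlled, which itself comes from the maximum principle together with the a priori bound on $M$. Then I would let $\varepsilon\to0$. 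If $\sup_\varepsilon\|\log h_\varepsilon\|_{C^0}<\infty$, elliptic bootstrapping produces a smooth solution of the HYM equation and we are done. Otherwise, writing $u_\varepsilon=\log h_\varepsilon$ and $m_\varepsilon=\|u_\varepsilon\|_{L^2}\to\infty$, I would extract a nonzero weak $L^2_1$ limit $u_\infty$ of $u_\varepsilon/m_\varepsilon$; feeding the equation into the functional estimates shows $u_\infty$ is $H_0$-self-adjoint with locally constant eigenvalues, and that for each eigenvalue the spectral projection $\pi$ lies in $L^2_1$ and satisfies $(\mathbf 1-\pi)\,\bp\pi=0$ weakly.

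The main obstacle is the remaining step — the analytic heart of Uhlenbeck--Yau, and the only genuinely nontrivial ingredient beyond the K\"ahler case — namely that such a weakly holomorphic $L^2_1$ projection $\pi$ defines a coherent analytic subsheaf $V'\subset V$, and the computation of its slope. Here the Gauduchon condition re-enters: one assigns to $V'$ the degree $\deg(V')=\frac{\sqrt{-1}}{2\pi}\int_X\tr(\pi\,\widehat\omega\lrcorner F_{H_0})-\frac1{2\pi}\int_X|\bp\pi|^2$, which is finite since $\bp\pi\in L^2$ and independent of $H_0$ precisely because $\p\bp\,\widehat\omega^{\,n-1}=0$; substituting the asymptotics of the $\varepsilon$-equations into this formula yields a nonnegative combination $\sum_\lambda c_\lambda\big(\mu(V_\lambda)-\mu(V)\big)\ge 0$ with $c_\lambda>0$ and not all $\mu(V_\lambda)$ equal to $\mu(V)$, contradicting stability of $V$. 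Hence the HYM metric exists when $V$ is stable. For $V=\oplus_iV_i$ merely polystable, apply the stable case to each $V_i$ — all of slope $\mu(V)$ with respect to $\widehat\omega^{\,n-1}$ (equivalently $[\widehat\omega^2]$ when $n=3$) — and take the orthogonal direct sum. Uniqueness is the easy part: if $H_1,H_2$ are both HYM then $h=H_1^{-1}H_2$ satisfies a Weitzenb\"ock-type identity which, paired with $\widehat\omega^{\,n-1}$ and integrated (the Gauduchon property discarding the exact term), forces $\bp_{\End V}h=0$, so $h$ is parallel; decomposing $V$ by the eigenvalues of $h$ and using stability of each stable summand shows $h$ is a positive constant on each $V_i$, which is the asserted uniqueness (up to a constant in the stable case).
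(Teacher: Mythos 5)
The paper does not actually prove this statement: it is quoted as an external result of Buchdahl (for surfaces) and Li--Yau (in higher dimensions), with the box indicating a citation rather than an argument, so there is no in-paper proof to compare against. Your sketch is, in outline, the correct published proof (Li--Yau~\citep{MR915839}, and in the form you describe essentially the Uhlenbeck--Yau continuity method as adapted in L\"ubke--Teleman's treatment of the Kobayashi--Hitchin correspondence for Gauduchon metrics): you correctly isolate the one place the K\"ahler hypothesis is genuinely used in the Donaldson--Uhlenbeck--Yau argument, namely that $\int_X \p\bp f\wedge\widehat\omega^{\,n-1}=0$ is what makes the degree, the slope, and the Donaldson functional well defined, and this is exactly what the Gauduchon condition supplies.

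Two places in your write-up are thinner than the prose suggests. First, solvability of the perturbed equation at $\varepsilon=1$ does \emph{not} follow from the implicit function theorem --- the IFT only gives openness in $\varepsilon$ once you have a solution somewhere; existence at $\varepsilon=1$ requires a separate direct argument (in Uhlenbeck--Yau it is its own lemma, proved by a further continuity or heat-flow argument). Second, the path-independence of the Donaldson functional in the non-K\"ahler case is not quite ``the same integration by parts'': one has to work with the Bott--Chern secondary class, which is well defined only modulo $\mathrm{Im}\,\p+\mathrm{Im}\,\bp$, and check that this ambiguity dies after wedging with $\widehat\omega^{\,n-1}$ and integrating --- this again reduces to the Gauduchon identity but needs to be said. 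The genuinely hard analytic step (that a weakly holomorphic $L^2_1$ projection defines a coherent subsheaf, and the Chern--Weil formula for its degree including the $-\int|\bp\pi|^2$ term) you correctly identify and correctly invoke, but of course do not prove; that is acceptable in a sketch of a theorem of this depth, provided it is flagged as imported, which you do. With those caveats your proposal is a faithful account of the actual proof of the cited theorem.
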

\rightline{$\square$}
\noindent The stability refers to the slope $\mu(V)$ of $V$ which is now defined as
\[\mu(V) = \frac{1}{{\text rk}(V)}\, \int_X c_1(V)\wedge\widehat\omega^2~,\]
and it states that for all sub-sheaves $E$ of $V$ it must be true that
\[ \mu(E)<\mu(V)~.\]
Therefore, when $X$ has a heterotic $SU(3)$ structure, we require the bundle $V$ to be a polystable (with respect to the class $[e^{-2\phi}\,\omega^2]$) holomorphic bundle, which thus guarantees the existence of Hermitian Yang-Mills connection on $V$.

For heterotic string compactifications, the relevant vector bundles have $c_1(V)=0$ and so {\it the slope vanishes} $\mu(V) =0$. In fact, for the gauge bundle, the gauge group is a subgroup of $E_8{\times} E_8$.  Also,
as we will see in the next section, we require that $TX$ be stable too, and so $\mu(TX)=0$ because 
$X$ has vanishing first Chern class.

\subsection{Constraints from the Anomaly Cancelation and Equations of Motion.}
\label{subsec:anomaly}

Apart from the constraints from supersymmetry, the pair $(X,V)$ must also satisfy an anomaly cancelation condition
\begin{equation}
H = \d B + {\cal C S}~,\label{eq:anomaly}
\end{equation}
where $B$ is a real 2 form, 
\[ {\cal C S} = \frac{\alpha'}{4}\, ({\rm CS}[A] - {\rm CS}[\Theta^I])~,\]
$A$ is the gauge connection for $V$, $\Theta^I$ is the connection for $TX$, and ${\rm CS}[A]$ and ${\rm CS}[\Theta^I]$ are the Chern--Simons 3-forms for these connections defined by
\begin{equation*} 
{\rm CS}[A] = \tr\left(A\wedge\d A + \frac{2}{3} A\wedge A\wedge A\right)~, 
\end{equation*}
and similarly for ${\rm CS}[\Theta^I]$.  The right hand side of the anomaly cancelation condition \eqref{eq:anomaly} is a definition of $H$ as the gauge invariant field strength of the $B$ field. The Bianchi identity for this anomaly cancelation condition is 
\begin{equation}
\d H = \frac{\alpha'}{4}\,\left(\tr (F\wedge F) - \tr (R^I\wedge R^I) \right) + {\cal W}~,\label{eq:BIanomaly}
\end{equation}
where $R^I$ is the curvature on $X$ with respect to a connection $\nabla^I$ on $TX$ which we discuss below.  The term $\cal W$ is a non-perturbative correction which is a closed 4-form on $X$ in a cohomology class which corresponds to the Poincar\'e dual of the class of an (effective) holomorphic curve $\cal{C}$ which is wrapped by a five-brane. 
A topological condition derives from equation \eqref{eq:BIanomaly}  
\begin{equation}
 0 = - \, {\cal P}_1(V) + {\cal P}_1(TX) + [\cal{C}]~,\label{eq:topanomaly}
 \end{equation}
 where ${\cal P}_1(E)$ represents the first Pontryagin class of a bundle $E$.  
In this paper we will ignore the non-pertubative correction $\cal W$.

Any solution $(X,V)$ of the supersymmetry conditions described in this section (that is, $X$ has a heterotic $SU(3)$-structure and $V$ is a poly-stable holomorphic bundle on $X$) which also satisfies the anomaly cancelation condition, automatically satisfies  the equations of motion if and only if the connection $\nabla^I$ satisfies
\begin{align} 
R^I\wedge\Psi &= 0~,\label{eq:holR}\\
 \omega\lrcorner R^I &= 0~.\label{eq:instR}
 \end{align}
That is, the connection $\nabla^I$ of the curvature $R^I$ of the tangent bundle $TX$ must be an $SU(3)$ instanton~\citep{Hull:1986kz, Ivanov:2009rh, Martelli:2010jx, delaossa2014}.  By the theorem of Li and Yau above, such a connection exists only if we require $(TX, \nabla^I)$ to be a stable holomorphic bundle on $X$.   We describe in more detail in the appendix and in a companion paper \citep{delaossa2014} the delicate issue of the choice of connection $\nabla^I$ in relation to the $\alpha'$ expansion.  We note here however that, to first order in the $\alpha'$ expansion, with the usual supersymmetry transformations, this connection is the Hull connection~\citep{Hull:1986kz}, and that it is easy to verify (see appendix) that this connection does satisfy equations \eqref{eq:holR} and \eqref{eq:instR} to this order.   In this paper, we take the connection $\nabla^I$ to be an instanton as in equations \eqref{eq:holR} and \eqref{eq:instR}.  Moreover, we will see that this connection needs to be promoted to a dynamical field to be able to understand the full moduli space of heterotic compactifications.

\subsection{Summary.}
\label{subsec:sum}

We are interested in the moduli of heterotic string compactifications which preserve $N=1$ supersymmetry.  In this paper we will refer to the pair $(X,V)$ as a {\it heterotic compactification} if it satisfies the {\it Strominger system of equations} as follows:
\begin{itemize}
\item $(X,\omega,\Psi)$ has a heterotic $SU(3)$ structure, that is
\begin{itemize}
\item $X$ is a complex manifold 
\item $X$ is conformally balanced: $\d(e^{-2\phi}\omega\wedge\omega) = 0$
\item $X$ has a holomorphically trivial canonical bundle:  $\d(e^{-2\phi}\, \Psi) = 0$
\end{itemize}
\item The bundle $V$ on $X$ must admit a connection $A$ with curvature $F$ taking values in $\End(V)$, which satisfies the hermitian Yang-Mills equations
\begin{equation*}
F\wedge\Psi  = 0~, \qquad
\omega\lrcorner F = 0~.
\end{equation*} 
Therefore, we require $V$ to be  a polystable holomorphic bundle. 
\item The bundle $TX$ on $X$ must admit a connection $\Theta^I$ with curvature $R^I$ taking values in $\End(TX)$, which satisfies the hermitian Yang-Mills equations
\begin{equation*}
R^I\wedge\Psi  = 0~, \qquad
\omega\lrcorner R^I = 0~.
\end{equation*} 
Therefore, we require $TX$ to be  a stable holomorphic bundle.
\item The flux $H$ (which is the torsion of the Bismut connection) and the connections $A$ and $\Theta^I$ must satisfy the anomaly cancelation condition which is
\[ H = i(\partial-\bar\partial)\,\omega = 
\d B + \frac{\alpha'}{4}\, \big({\rm CS}[A] - {\rm CS}[\Theta^I]\big)~.\]
Therefore the curvatures $F$ and $R^I$  and $H$ must satisfy the Bianchi identity
\[ \d H= - 2 i \p\bp \omega = \frac{\alpha'}{4}\,\left(\tr (F\wedge F) - \tr (R^I\wedge R^I) \right) ~.\] 
\end{itemize}
\newpage 

\section{Infinitesimal Moduli of Heterotic Compactifications.}
\label{sec:moduli}

In this section we study the space of {\it infinitesimal} deformations of a heterotic compactification $(X,V)$.  As described in detail in the following subsections, this moduli space contains the following parameters:

\begin{itemize}
\item Deformations of the {\it complex structure $J$ on $X$} (which is determined by $\Psi$).  It is well known that first order deformations $\Delta$ of the complex structure which preserve the integrability of the complex structure  belong to $H_{\bar\partial}^{(0,1)}(X, TX)$.  These deformations $\Delta$ are constrained by requiring that $\Omega$ stays holomorphic which in turn requires that deformations of the complex structure are in $H_\d^{(2,1)}(X)\subseteq H_{\bar\partial}^{(0,1)}(X, TX)$.  Moreover, they are also constrained by the requirement that the holomorphic conditions of the bundles $V$ ($F\wedge\Psi = 0$) and $TX$ ($R^I\wedge\Psi = 0$) be preserved. We also find a further constraint on $\Delta$ coming from the anomaly cancelation condition.  As the stability of both $V$ and $TX$ is not spoiled by first order deformations of $J$ which preserve the holomorphicity condition of both bundles, 
the theorem by Li and Yau guarantees that on the deformed heterotic compactification $(X',V')$ there is a connection on $V'$ and $(TX)'$ which satisfies the instanton equations.
\item Deformations of the bundle $V$ for a fixed complex structure $J$ and hermitian form $\omega$ on $X$, that is, those deformations of $V$ which are not accounted for by deformations of $\Psi$ and $\omega$.  
These are the {\it bundle moduli of $V$} which belong to $H_{\bar\partial_{\cal A}}^{(0,1)}(X, \End(V))$.  Similarly, we have deformations of the holomorphic tangent bundle $TX$, the {\it tangent bundle moduli}, which belong to $H_{\bar\partial_{\vartheta^I}}^{(0,1)}(X, TX))$.  Note that we are considering the instanton connection as an unphysical field in the theory.  We find that this is needed for the appropriate implementation of the anomaly cancelation condition, but we do not consider these moduli as corresponding to physical fields in the effective four dimensional field theory\footnote{See \citep{delaossa2014} for an extensive discussion.}.
\item Deformations of the hermitian structure $\omega$ which preserve the conformally balanced condition which are constrained by the anomaly cancelation condition.   
\end{itemize}

We leave the study of obstructions to these deformations for future work.

\subsection{First Order Deformations of Heterotic \texorpdfstring{$SU(3)$}\ \ Structures.}
 \label{subsec:infsu3}
 
 Let $(X, \omega, \Psi)$ be a manifold with a heterotic $SU(3)$ structure.  In this subsection we discuss first order variations of a heterotic $SU(3)$ structure.
   
Consider a one parameter family of manifolds $(X_t, \omega_t, \Psi_t), ~t\in\IC$, with a heterotic $SU(3)$ structure where we set
$(X, \omega,\Psi) = (X_0, \omega_0, \Psi_0)$. 
A deformation of the heterotic $SU(3)$ structure parametrized by the parameter $t$ corresponds to simultaneous deformations of the complex structure determined by $\Psi$ together with those of the hermitian structure
determined by $\omega$, such that the heterotic $SU(3)$ structure is preserved.  Hence the variation with respect to $t$ of any mathematical quantity $\alpha$ (as for example a $p$-form, or the metric) is given by the chain rule as follows
\begin{equation*}
\partial_t\alpha = (\partial_t z^a)\, \partial_a\alpha + 
(\partial_t z^{\bar a})\, \partial_{\bar a}\alpha + (\partial_t y^i)\, \partial_i\alpha ~,
\end{equation*}
where we label the complex structure parameters by $z^a$ and by $y^i$ the parameters of the hermitian structure\footnote{We will need to extend this later to include variations of the bundles.}

Note that $\Psi$ is independent of the hermitian structure parameters, however $\omega$ does depend on the complex structure as it must be  a $(1,1)$-form with respect to any complex structure. Therefore the moduli space ${\cal M}_X$ of the manifold $X$ must have the structure of a fibration.  We discuss this structure in the following sections.

\subsubsection{Variations of the complex structure of \texorpdfstring{$X$}\ .}
\label{sec:varsJ}

We begin this subsection by reviewing standard results on variations of an {\it integrable} complex structure $J$ of a manifold $X$.   With respect to $J$, the exterior derivative $\bp$ which acts on forms on $X$, squares to zero, that is $\bp^2=0$.  This condition is equivalent to the vanishing of the Nijenhuis tensor.  Conversely, a derivative $\bp$ which squares to zero defines an integrable complex structure on $X$. In fact, it determines a holomorphic structure on $X$.

Let $z^a, a = 1,\ldots, N_{CS}$, be complex structure parameters and $\Delta_a^m$ be a variation of the complex structure
\[ \Delta_a  = \Delta_a{}_{\, n}{}^m\, \d x^n\otimes \partial_m = - \frac{i}{2}\, \partial_aJ~.\]
It is a standard result that $\Delta_a^m\in \Omega^{(0,1)}(X, TX^{(1,0)})$. 
Further, preservation of the integrability of the complex structure under variations requires to first order that $\Delta_a^m$ defines an element of $H^{(0,1)}_{\bar\partial}(X,TX)$.  The integrability to first order is guaranteed (using the Maurer-Cartan equations) if $\bp\Delta_a = 0$, and $\bp$-exact forms $\Delta_a$ correspond to trivial changes of the complex structure (that is, holomorphic changes of the complex coordinates). 

Equivalently, as the form $\Psi$ on $X$ determines a unique integrable complex structure $J$ on a manifold with a heterotic structure $X$, one can study the  variations of $J$ in terms of the variations of $\Psi$. It will be more convenient however to discuss these deformations in terms of the holomorphic $(3,0)$ form.  Define
\[ \Omega = e^{-2\phi}\, \Psi~.\]
First order variations of $\Omega$ have the form~\citep{MR0112154, 0128.16902} 
\begin{equation} 
\partial_a \Omega = \widetilde K_a\, \Omega + \chi_a~,\label{eq:prevarO}
\end{equation}
where the $\widetilde K_a$ depend on the coordinates on $X$, and $\chi_a$ is a $(2,1)$-form which can be written in terms of $\Delta_a$
\begin{equation}
\chi_a = \half\, \Omega_{mnp}\, \Delta_a{}^m\wedge\d x^n\wedge\d x^p~.\label{eq:chi}
\end{equation}
Actually, we can prove that one can take the $\widetilde K_a$ to be  constants.

\begin{Proposition}\label{prop:30forms}
Let $\Lambda\in\Omega^{(3,0)}(X)$.  If $X$ has a holomorphically trivial canonical bundle with holomorphic form
$\Omega$, then
\[\Lambda = c\, \Omega + \p\lambda^{(2,0)}~,\]
for some constant $c$, and $(2,0)$-form $\lambda^{(2,0)}$ .
\end{Proposition}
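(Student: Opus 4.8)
The plan is to recast the assertion as a one-dimensionality statement for a Dolbeault cohomology group, passing from $(p,0)$-forms to $(0,p)$-forms by complex conjugation. Since $X$ has complex dimension three, every $(3,0)$-form is automatically $\p$-closed and every $(0,3)$-form is automatically $\bp$-closed (the targets $\Omega^{(4,0)}(X)$ and $\Omega^{(0,4)}(X)$ vanish), and complex conjugation carries the complex $(\Omega^{(\bullet,0)}(X),\p)$ onto $(\Omega^{(0,\bullet)}(X),\bp)$ because $\overline{\p\alpha}=\bp\bar\alpha$. So complex conjugation identifies the cokernel of $\p\colon\Omega^{(2,0)}(X)\to\Omega^{(3,0)}(X)$ with $H^{(0,3)}_{\bp}(X)$, and it suffices to show that this group is one-dimensional and spanned by the class of $\bar\Omega$. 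Granting that, for any $\Lambda\in\Omega^{(3,0)}(X)$ the $(0,3)$-form $\bar\Lambda$ satisfies $\bar\Lambda=c\,\bar\Omega+\bp\eta$ for a unique scalar $c\in\IC$ and some $\eta\in\Omega^{(0,2)}(X)$, and conjugating gives $\Lambda=\bar c\,\Omega+\p\bar\eta$, which is the claimed form with $\lambda^{(2,0)}=\bar\eta$; in particular the coefficient is automatically constant.

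For the one-dimensionality I would use Serre duality. By Dolbeault's theorem $H^{(0,3)}_{\bp}(X)\cong H^3(X,\mathcal{O}_X)$, and Serre duality on the compact complex three-fold $X$ gives $H^3(X,\mathcal{O}_X)^*\cong H^0(X,K_X)$. Since the canonical bundle of $X$ is holomorphically trivial --- this is exactly $\bp\Omega=0$, that is \eqref{eq:hol} written for $\Omega=e^{-2\phi}\Psi$ --- and $X$ is compact and connected, $H^0(X,K_X)=\IC\cdot\Omega$, so $\dim_\IC H^{(0,3)}_{\bp}(X)=1$.

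It then remains to check that $\bar\Omega$ represents a nonzero class, hence a generator. It is $\bp$-closed, being of top antiholomorphic degree. If it were $\bp$-exact, say $\bar\Omega=\bp\eta$ with $\eta\in\Omega^{(0,2)}(X)$, then using $\bp\Omega=0$ together with $\p(\Omega\wedge\eta)=0$ (the latter for bidegree reasons, $\Omega\wedge\eta$ being of type $(3,2)$) one computes $\Omega\wedge\bar\Omega=\Omega\wedge\bp\eta=-\bp(\Omega\wedge\eta)=-\d(\Omega\wedge\eta)$, whence $\int_X\Omega\wedge\bar\Omega=0$ by Stokes' theorem. But by \eqref{eq:volume} the top form $\Omega\wedge\bar\Omega=e^{-4\phi}\,\Psi\wedge\bar\Psi$ is a nowhere-vanishing multiple of $\d\mathrm{vol}_X$, so $\int_X\Omega\wedge\bar\Omega\neq0$ --- a contradiction. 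Hence $[\bar\Omega]$ spans $H^{(0,3)}_{\bp}(X)$, and the reduction of the first paragraph finishes the proof.

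The only step that needs care is the dimension count $\dim_\IC H^{(0,3)}_{\bp}(X)=1$: one should not appeal to the Hodge symmetry $h^{0,3}=h^{3,0}$, which may fail on these non-K\"ahler manifolds, but route through Serre duality, which holds on any compact complex manifold. The rest is bidegree bookkeeping together with a single application of Stokes' theorem.
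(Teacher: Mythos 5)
Your proof is correct, but it takes a genuinely different route from the paper's. The paper works directly with the $(3,0)$-form: it takes the Hodge decomposition of $\Lambda$ with respect to the $\p$-Laplacian, $\Lambda = \p\lambda^{(2,0)} + \Lambda^{\p\text{-har}}$ (there is no coexact piece since $\Omega^{(4,0)}(X)=0$), and then shows by a short Hodge-star computation that $\p^\dagger\Lambda^{\p\text{-har}}=0$ forces $\bp\Lambda^{\p\text{-har}}=0$, so the harmonic part is a holomorphic $(3,0)$-form and hence a constant multiple of $\Omega$ by uniqueness. You instead conjugate the problem into $H^{(0,3)}_{\bp}(X)\cong H^3(X,\mathcal{O}_X)$, get the dimension count from Serre duality against $H^0(X,K_X)=\IC\cdot\Omega$, and then exhibit $[\bar\Omega]$ as a generator via the Stokes argument $\int_X\Omega\wedge\bar\Omega\neq 0$. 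Both arguments are sound and both use compactness and the holomorphic triviality of $K_X$ in an essential way. The paper's version is more elementary and self-contained (one metric-dependent decomposition plus a two-line computation), and it hands you the decomposition explicitly. Yours is conceptually cleaner in that it identifies the statement as the one-dimensionality of a standard cohomology group, and your caution about not invoking Hodge symmetry $h^{0,3}=h^{3,0}$ (which can fail off the K\"ahler locus) and routing through Serre duality instead is exactly the right instinct for these manifolds; indeed your Stokes computation is just the explicit non-degeneracy of the Serre pairing $([\beta],\Omega)\mapsto\int_X\Omega\wedge\beta$. One small stylistic remark: the constant you obtain is $\bar c$ after conjugating back, which is of course still a constant, so nothing is lost.
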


\begin{proof}
The Hodge decomposition of $\Lambda$ with respect to the $\p$-operator is
\[ \Lambda = \p\lambda^{(2,0)} + \Lambda^{\p-har}~,\]
for some $(2,0)$-form $\lambda^{(2,0)}$ and $\p$-harmonic $(3,0)$-form $\Lambda^{\p-har}$. It is easy to see that $\Lambda^{\p-har}$ must be holomorphic.  Computing
\[ 0 = \p^\dagger(\Lambda^{\p-har}) = - * \bp * (\Lambda^{\p-har})
= i\, * \bp (\Lambda^{\p-har}) ~,\]
where we have used 
\[ * (\Lambda^{\p-har}) = J(\Lambda^{\p-har}) = -i  (\Lambda^{\p-har})~,\]
we find 
\[ * \bp (\Lambda^{\p-har}) = 0~.\]
This implies that
\[ \bp (\Lambda^{\p-har}) = 0~,\]
and therefore, that $\Lambda^{\p-har}$ is a holomorphic $(3,0)$-form.  But as $\Omega$ is unique
we obtain
\[\Lambda = c\, \Omega + \p\lambda^{(2,0)}~,\]
where $c$ is a constant.
\end{proof}
Using the proposition, we can now write equation \eqref{eq:prevarO} as
\begin{equation}
\partial_a \Omega = K_a\, \Omega + \chi_a~,\label{eq:varO}
\end{equation}
where now the $K_a$ are constants and we have ignored the $\p$-exact term.  This term can be ignored because it corresponds to changes in $\Omega$ due to diffeomorphisms of $X$, that is, trivial deformations of the complex structure.  This can be seen by computing the  Lie-derivative of $\Omega$ along a vector $V\in TX$ which gives
\begin{equation}
{\cal L}\Omega = -\delta_{diff}\,\Omega = \d(v\lrcorner\Omega)~,\label{eq:Odiffeo}
\end{equation}
where we have used the fact that $\d\Omega= 0$.
Taking the $(3,0)$ part of this equation, we obtain
\[ ({\cal L}\Omega)^{(3,0)} = \p(v\lrcorner\Omega)~,\]
which is precisely a $\p$-exact $(3,0)$-form, and any $(2,0)$-form can be written as $v\lrcorner\Omega$
for some $v\in TX$.

The integrability of the deformed complex structure given by equation \eqref{eq:varO} is obtained by 
varying the equation
\[\d \Omega = 0~,\]
and demanding that the deformed manifold admits a holomorphic $(3,0)$-form.  We find
\begin{equation}
 \d\chi_a = - \d K_a\wedge\Omega = 0~.
 \label{eq:dchi}
 \end{equation}
 Therefore  each $\chi_a$ defines a class in the de-Rham cohomology
 \[ \chi_a \in H_{\d}{}^{(2,1)}(X)~,\]
as $\d$-exact $(2,1)$-forms correspond to 
diffeomorphisms of $X$, as can be seen from equation \eqref{eq:Odiffeo}.

We remark that using the holomorphicity of $\Omega$, and equation \eqref{eq:varO}, it is straightforward to prove that $\chi_a\in H_{\bp}{}^{(2,1)}(X)$ is equivalent to $\Delta_a{}^m \in H^{(0,1)}_{\bar\partial}(X,TX)$.  In fact, $\Omega$ gives an isomorphism between these cohomology groups (just like in the case of Calabi--Yau manifolds)~\citep{MR915841}
\[H_{\bp}{}^{(2,1)}(X)\cong H^{(0,1)}_{\bar\partial}(X,TX)~.\] 
However, on a non-K\"ahler manifold with a holomorphically trivial canonical bundle, it is not necessarily the case that
$H_{\d}{}^{(2,1)}(X) \cong H_{\bp}{}^{(2,1)}(X)$
and it is generally the case that 
\[ {\rm dim} H_{\d}{}^{(2,1)}(X) \le {\rm dim} H_{\bp}{}^{(2,1)}(X)~.\]
The way to see this is by observing that first order variations of  \eqref{eq:dchi} require, not only that $\chi_a$ is $\bp$-closed, but also that it is $\p$-closed. Therefore, in a given class of $[\chi_a]\in H_{\bp}{}^{(2,1)}(X)$, there must exist a representative which is  $\p$-closed and is not $\d$-exact.  This is not always the case, and there are many examples of non-K\"ahler manifolds for which this happens.  A simple example with a heterotic $SU(3)$ structure is the Iwasawa manifold\footnote{However, the Iwasawa manifold is not a good heterotic compactification for any bundle $V$ because its holomorphic tangent bundle is not stable.  Note in particular that this bundle is holomphically trivial and so ${\rm dim}(H^0(X,TX)) = 3$, which implies that there is no connection $\nabla^I$ for which $TX$ is stable.}.  It is not too hard to show that for this example~\citep{MR0393580}
\[ {\rm dim} H_{\d}{}^{(2,1)}(X) = 4~,\quad{\rm and} \quad {\rm dim} H_{\bp}{}^{(2,1)}(X) = 6~.\]
The two extra elements in ${\rm dim} H_{\bp}{}^{(2,1)}(X)$ are $\p$-closed, however they are $\d$-exact. 

There are also many examples of non-K\"ahler manifolds for which 
\begin{equation} H_{\d}{}^{(2,1)}(X)\cong H_{\bp}{}^{(2,1)}(X)~,
\label{eq:CScong}
\end{equation}
like for example manifolds which are {\it cohomologically K\"ahler}, that is, which satisfy the $\p\bp$-lemma, a property which is stable under complex structure deformations~\citep{MR2451566,MR2449178}.  Note that the Iwasawa manifold does not satisfy the $\p\bp$-lemma.

The condition that each $\chi\in H_{\bp}{}^{(2,1)}(X)$ also satisfies $\p\chi=0$ is used  in~\citep{MR915841} as a first step to discuss the obstructions to first order deformations of the complex structure $J$ of Calabi--Yau manifolds, and it is stated in that proof that it goes through even if  the manifold is not  K\"ahler, as long as it satisfies the $\p\bp$-lemma.  In our work, the requirement that $\p\chi=0$ appears at first order in deformation theory when discussing the deformations of the complex structure in terms of the variations of $\Omega$ and {\it it represents a necessary condition for integrability to first order}.  Issues including  the integrability of the deformed complex structure of $X$ and a generalisation of the work of Tian and Todorov~\cite{MR915841,MR1027500}, are  discussed in a forthcoming paper~\citep{DKS2014}.  For the rest of this paper, we work with the variations of the complex structure in terms of $\Delta$, but we should keep in mind that some of these elements may be obstructed as discussed.

\subsubsection{Deformations of the hermitian structure on \texorpdfstring{$X$}\ .}
\label{subsubsec:defHS}

Recall that a manifold with a heterotic structure $X$ has a hermitian form given by $\omega$, and that $\omega$ satisfies
\[ \d (e^{-2\phi}\omega\wedge\omega) = 0~,\]
that is, $X$ is conformally balanced.  This equation varies with respect to the complex structure (because $\omega$ is a $(1,1)$ form) and the hermitian structure. 

Let
\[\hat\rho = \half\, \widehat\omega\wedge\widehat\omega~,\]
where $\widehat\omega = e^{-\phi}\omega$ is the Gauduchon metric.
The conformally balanced condition is equivalent to
\[\d \hat\rho = 0~,\]
and so $\hat\rho\in H_\d^4(X)$. Any variation of $\hat\rho$ must preserve this condition, that is
\[ \d (\partial_t\hat\rho) = 0.\]
Consider the action of  a diffeomorphism of $X$ on $\hat\rho$
\[{\cal L}_{diff} \hat\rho = \d (v\lrcorner\hat\rho) = - \d( e^{-\phi} \, J(v)\wedge\widehat\omega)~.\]
Therefore, variations of $\hat\rho$ which preserve the conformally balanced condition correspond to $\d$-closed four forms modulo 
$\d$-exact forms which have the form $\d\beta$, where $\beta$ is a {\it non-primitive} three-form.  So this space is not necessarily finite dimensional as was first pointed out in~\citep{Becker:2006xp}. 

As we will see,  when taking into account the anomaly cancelation condition, we obtain a finite dimensional parameter space. For the remainder of this section, we set up some notation and make some further remarks on the deformations of hermitian structure of $X$.

Consider a variation, $\p_t\omega$, of $\omega$.  We can decompose this variation in terms of the Lefshetz decomposition
\begin{equation}
 \p_t\omega = \lambda\omega + h_t~,\label{eq:varomega}
 \end{equation}
where $\lambda$ is a function on $X$ and $h_t$ is a primitive two form ($\omega\lrcorner h_t = 0$).

It is not too difficult to show that the $(0,2)$-part of the variation, $h_t^{(0,2)}$, depends only on the variations of the complex structure $\Delta_a$.  To prove this we vary the compatibility condition \eqref{eq:compatibility} 
\[ \Omega\wedge\omega = 0~,\]
which expresses the fact that with respect to the complex structure $J$ determined by the $(3,0)$-form $\Psi = e^{2\phi}\, \Omega$, the hermitian form $\omega$ is a $(1,1)$-form.  
Varying equation \eqref{eq:compatibility} and using \eqref{eq:varO}, we find
\[0= \partial_t\omega\wedge\Omega + \omega\wedge\partial_t\Omega = \partial_t\omega\wedge\Omega+ \omega\wedge\chi_t~,\]
where
\[ \Delta_t = (\partial_t z^a)\,\Delta_a~, 
\quad{\rm and}\quad \chi_t = (\partial_t z^a)\,\chi_a~.\]
Contracting with $\bar\Omega$ we obtain
\begin{equation*}
  h_t^{(0,2)} = (\partial_t z^a)\,h_a^{(0,2)}
 \end{equation*}
 where
 \begin{equation}
 h_a^{(0,2)} = \Delta_a{}^m\wedge \omega_{mn}\, \d x^n~,
 \label{eq:varpureomega}
 \end{equation}
and where we have used equation \eqref{eq:chi}.  Therefore, the $(0,2)$-part of the variation of $\omega$ is entirely  determined by 
the allowed variations of the complex structure of $X$, and there are no new moduli associated to $h_a^{(0,2)}$.

We would like to remark that it has been known for over 20 years in mathematics that the conformally balanced condition is not stable under deformations of the complex structure \citep{MR1107661, MR1137099, 0735.32009, 0793.53068}.  This is in sharp contrast with the theorems of Kodaira and Spencer for the stability of the K\"ahler condition under deformations of the complex structure \citep{MR0112154, 0128.16902}.
In fact, we have that under deformations of the complex structure alone (see also \citep{Becker:2005nb} where the authors consider deformations of the complex structure only)
\[\d(\p_a\hat\rho) = \d(\widehat\omega\wedge\p_a\widehat\omega) 
= \d(\Delta_a{}^m\wedge\widehat\omega\wedge \widehat\omega_{mn} \d x^n) = 0~,\]
which seems to be a difficult equation to satisfy.

Returning now to the variation of the conformally balanced condition for the hermitian structure we have the following proposition.

\begin{Proposition}\label{prop:CBcond}
Let $\widehat\lrcorner$ and $\hat *$ be the contraction operator and the Hodge dual operator with respect to 
the Gauduchon metric respectively.
The variation of the conformally balanced condition for the hermitian structure
\begin{equation}
\d \p_t\hat\rho = 0~,\label{eq:dvarhatrho}
\end{equation}
determines the $\bp$-exact part of Hodge decomposition of the $(1,1)$-form 
\[\hat*\, (\p_t\hat\rho)^{(2,2)} = - \hat h_t{}^{(1,1)} + 2\hat\lambda_t\, \widehat\omega~,\] 
in terms of deformations of the complex structure leaving the $\bp^{\hat\dagger}$-closed part undetermined, as long as we assume that 
the tangent bundle is stable and has zero slope.   
\end{Proposition}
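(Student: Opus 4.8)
The plan is to convert the variational equation \eqref{eq:dvarhatrho} into a coclosedness statement for a $2$-form and then read off its content bidegree by bidegree. First I would use that $\p_t\hat\rho=\widehat\omega\wedge\p_t\widehat\omega$ is a $4$-form on the real six-manifold $X$, and that in real dimension six $\d^{\hat\dagger}=-\hat *\,\d\,\hat *$ on forms of every degree, with $\hat *\hat *=\mathrm{id}$ on forms of even degree. Setting $\beta:=\hat *\,(\p_t\hat\rho)$, a $2$-form, this gives $\hat *\,(\d\,\p_t\hat\rho)=-\d^{\hat\dagger}\beta$, so that \eqref{eq:dvarhatrho} is equivalent to $\d^{\hat\dagger}\beta=0$. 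Decomposing $\p_t\widehat\omega=\hat\lambda_t\,\widehat\omega+\hat h_t$ with $\hat h_t=\hat h_t^{(2,0)}+\hat h_t^{(1,1)}+\hat h_t^{(0,2)}$ primitive, and using the Hermitian three-fold identities $\hat *\,(\widehat\omega\wedge\phi)=-\phi$ for a primitive $(1,1)$-form $\phi$ and $\hat *\,(\widehat\omega\wedge\widehat\omega)=2\,\widehat\omega$, I would check the identity $\beta^{(1,1)}=\hat *\,(\p_t\hat\rho)^{(2,2)}=-\hat h_t^{(1,1)}+2\,\hat\lambda_t\,\widehat\omega$ stated in the Proposition, together with $\beta^{(0,2)}=\hat *\,(\widehat\omega\wedge\hat h_t^{(0,2)})$ and $\beta^{(2,0)}=\hat *\,(\widehat\omega\wedge\hat h_t^{(2,0)})$. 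By \eqref{eq:varpureomega} and its complex conjugate, the components $\hat h_t^{(0,2)}$ and $\hat h_t^{(2,0)}$ are completely fixed by the complex structure deformations $\Delta$, hence so are $\beta^{(0,2)}$ and $\beta^{(2,0)}$; the only genuinely new hermitian data in $\beta$ is the pair $(\hat h_t^{(1,1)},\hat\lambda_t)$ sitting in $\beta^{(1,1)}$.

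Next I would expand $\d^{\hat\dagger}=\p^{\hat\dagger}+\bp^{\hat\dagger}$ and collect $\d^{\hat\dagger}\beta=0$ by bidegree. Its $(1,0)$-component reads $\bp^{\hat\dagger}\beta^{(1,1)}+\p^{\hat\dagger}\beta^{(2,0)}=0$, that is,
\[\bp^{\hat\dagger}\big(-\hat h_t^{(1,1)}+2\,\hat\lambda_t\,\widehat\omega\big)=-\,\p^{\hat\dagger}\beta^{(2,0)}=:\psi~,\]
where $\psi$ is a $(1,0)$-form built purely out of the complex structure deformations. For real variations of $\omega$ the $(0,1)$-component of $\d^{\hat\dagger}\beta=0$ is simply the complex conjugate of this equation and carries no further information about the $(1,1)$-sector. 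I would then apply the Hodge decomposition with respect to the Gauduchon metric to the $(1,1)$-form $\mu:=-\hat h_t^{(1,1)}+2\,\hat\lambda_t\,\widehat\omega$, writing $\mu=\bp a+\mu_0$ with $a$ a $(1,0)$-form and $\mu_0\in\ker\bp^{\hat\dagger}$ collecting the $\bp^{\hat\dagger}$-exact and $\bp$-harmonic pieces, i.e. the $\bp^{\hat\dagger}$-closed part. Since $\bp^{\hat\dagger}a=0$ for degree reasons, $\bp^{\hat\dagger}\mu=\bp^{\hat\dagger}\bp\,a$, so the equation above becomes the elliptic problem $\bp^{\hat\dagger}\bp\,a=\psi$ for the $(1,0)$-form $a$.

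The final step is solvability, and this is where the hypothesis on $TX$ enters. On $(1,0)$-forms $\bp^{\hat\dagger}\bp$ is the $\bp$-Laplacian, so $\bp^{\hat\dagger}\bp\,a=\psi$ is solvable exactly when $\psi$ is orthogonal to its kernel, which is the space of $\bp$-harmonic $(1,0)$-forms, i.e. the holomorphic $1$-forms $H^0(X,T^*X)$. If $TX$ is stable with $\mu(TX)=0$, then $T^*X$ is stable with $\mu(T^*X)=0$, and a nonzero holomorphic section $s\colon\mathcal O_X\to T^*X$ would have $\mathrm{im}(s)\cong\mathcal O_X$, exhibiting a rank-one subsheaf of the slope-zero stable bundle $T^*X$ whose saturation has slope $\ge 0$, which is impossible. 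Hence $H^0(X,T^*X)=0$, the equation $\bp^{\hat\dagger}\bp\,a=\psi$ is always solvable, and the $\bp$-exact part $\bp a$ of $\mu=-\hat h_t^{(1,1)}+2\,\hat\lambda_t\,\widehat\omega$ is thereby determined entirely by $\psi$, hence by the complex structure deformations, while the $\bp^{\hat\dagger}$-closed part $\mu_0$ is left unconstrained by \eqref{eq:dvarhatrho}. This is exactly the statement of the Proposition.

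The main obstacle is conceptual rather than computational: one must recognise that the obstruction to inverting the $\bp$-Laplacian against $\psi$ is precisely the presence of holomorphic $1$-forms, and that this is killed exactly by stability and vanishing slope of $TX$ (equivalently of $T^*X$) --- the Iwasawa manifold, whose tangent bundle is holomorphically trivial, being the standard cautionary example. The remaining steps are bookkeeping: pinning down the Hodge-duality signs in the first step, keeping track of which components of $\p_t\widehat\omega$ are already fixed by \eqref{eq:varpureomega}, and, if one wishes to allow genuinely complex hermitian parameters, checking that the $(0,1)$-component of $\d^{\hat\dagger}\beta=0$ adds nothing to the $(1,1)$-sector beyond the conjugate of the $(1,0)$-component.
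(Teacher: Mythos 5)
Your argument is correct and follows the same skeleton as the paper's proof: both convert $\d\,\p_t\hat\rho=0$ into $\d^{\hat\dagger}$ of the dual two-form $2\hat\lambda_t\widehat\omega-\hat h_t^{(1,1)}+\hat h_t^{(2,0)}+\hat h_t^{(0,2)}$, isolate the $(1,0)$-bidegree component $\bp^{\hat\dagger}\bigl(\hat h_t^{(1,1)}-2\hat\lambda_t\widehat\omega\bigr)=\p^{\hat\dagger}\hat h_t^{(2,0)}$, observe that the right-hand side is fixed by the complex structure deformations through \eqref{eq:varpureomega}, and then apply the Hodge decomposition to conclude that only the $\bp$-exact part of the $(1,1)$-form is pinned down. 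The one genuine difference is where stability enters. The paper uses $H_{\bp}^{(2,0)}(X)\cong H_{\bp}^0(X,TX)=0$ (via the $\Omega$-isomorphism) to write $\hat h_t^{(2,0)}=\bp^{\hat\dagger}\Lambda_t^{(2,1)}$, so that the source term is manifestly $\bp^{\hat\dagger}$-exact and the equation for the $\bp$-exact part is automatically consistent. You instead phrase the last step as a Fredholm alternative for $\bp^{\hat\dagger}\bp$ acting on $(1,0)$-forms, whose obstruction space is the holomorphic one-forms $H^0(X,T^*X)$, and kill it by the (standard) vanishing of sections of the slope-zero stable bundle $T^*X$. These invoke different cohomology groups ($H^0(X,TX)$ versus $H^0(X,T^*X)$), both of which vanish under the stated hypothesis; your formulation makes the solvability/obstruction structure more transparent, while the paper's makes explicit how $\hat h_t^{(2,0)}$ itself sits inside the Hodge decomposition. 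Either way the conclusion — the $\bp$-exact part of $-\hat h_t^{(1,1)}+2\hat\lambda_t\widehat\omega$ is determined by the complex structure deformations and the $\bp^{\hat\dagger}$-closed part is free — follows.
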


\begin{proof}
From equation \eqref{eq:dvarhatrho} we get
\begin{equation*} 
 \d^{\widehat\dagger} (2\hat \lambda_t \,\widehat\omega - \hat h_t^{(1,1)} + \hat h_t^{(0,2)} + \hat h_t^{(2,0)}) = 0~,
 \end{equation*} 
where $d^{\widehat\dagger}$ is the adjoint of the exterior derivative $\d$ with respect to the Gauduchon metric, and where we have used equations \eqref{eq:varomega}, and $\hat h_t = e^{-\phi}\, h_t$ and $\hat\lambda_t = \lambda_t - \p_t\phi$.  
Consider the $(1,0)$ part of this equation
\begin{equation}
\p^{\hat\dagger} \hat h_t{}^{(2,0)} = \bp^{\hat\dagger}( - 2\hat\lambda_t\, \widehat\omega + \hat h_t{}^{(1,1)})~.
\label{eq:eqforh}
\end{equation}
On a manifold with a stable tangent bundle and zero slope, one can prove that
the left hand side of this equation is  $\p^{\hat\dagger}$-coexact because
\[ H_{\bp}^{(2,0)}(X)\cong H_{\bp}^0(X, TX) = 0~.\]
The last equality follows from slope-zero stability, and the isomorphism of cohomologies is due the $\Omega$ isomorphism (that is,
for every element in $s^m\in H_{\bp}^0(TX)$ we have an element in $H_{\bp}^{(2,0)}(X)$ given by $s^m\,\Omega_{mnp}$).
The Hodge decomposition of $\hat h_t^{(2,0)}$ in terms of the Laplacian $\widehat\Delta_{\bp}\,$ requires that
\begin{equation*}
\hat h_t^{(2,0)}=\bp^{\hat\dagger}\Lambda_t^{(2,1)},
\end{equation*}
for some $(2,1)$-form $\Lambda$. Recall that $\hat h_t^{(2,0)}$ is completely determined by the complex structure deformations (see equation \eqref{eq:varpureomega}). It follows that $\Lambda^{(2,1)}$ is also given in terms of complex structure variations.  Equation \eqref{eq:eqforh} can now be written as
\begin{equation*}
\bp^{\hat\dagger} (\hat h_t{}^{(1,1)} - 2\hat\lambda_t\, \widehat\omega ) 
= - \bp^{\hat\dagger} ( 
 \p^{\hat\dagger}\Lambda_t^{(2,1)})~,
\end{equation*}
which means that the left hand side is entirely determined by variations of the complex structure.  Moreover, using the Hodge decomposition,
we find that this equation determines, as claimed, the $\bp$-exact part of the $(1,1)$-form 
\[\hat*\, (\p_t\hat\rho)^{(2,2)} = - \hat h_t{}^{(1,1)} + 2\hat\lambda_t\, \widehat\omega~,\] 
in terms of deformations of the complex structure, and leaves the $\bp^{\hat\dagger}$-closed part undetermined. 
\end{proof}

By enforcing the anomaly cancelation condition, we will able to fix these parameters further. 
In fact, we will argue later in section \ref{subsec:Anomaly} that, by enforcing the anomaly cancelation condition, the moduli space of the (complexified) hermitian form is finite dimensional and related to the cohomology group
 \begin{equation*}
 H^{(0,1)}_{\bp}(X,T^{*(1,0)}X)\:.
\end{equation*} 

Finally, before continuing with our analysis of the moduli space of the Strominger system,  
we would like to point out that in~\citep{DKS2014} we show that one can consider a one parameter family of manifolds $(X_t, \Psi_t, \omega_t)$ with a heterotic structure such that, for $t\in \IR$, the family has an integrable $G_2$ structure or, for $t\in \IC$, the family has a certain $SU(4)$ or $\text{Spin}(7)$ structure.  Requiring that the family admits one of these $G$-structures guarantees that the heterotic structure, and hence the conformally balanced condition, is satisfied. Conversely, one can construct manifolds with one of these $G$ structures which have embedded a family manifolds with a heterotic $SU(3)$ structure.  We find that this is very interesting for applications to $F$-theory and $M$-theory.   

\subsection{Variations of the Holomorphic Structure on \texorpdfstring{$V$}\ .}
\label{subsec:defsV}

In this subsection, we study deformations  of the holomorphic structure of $V$.  The study of deformations of the holomorphic bundles has a long history in mathematics.  In this section, of particular relevance is  the work of Atiyah~\citep{MR0086359} which considers the parameter space of simultaneous deformations of the complex structure on a manifold $X$ together with those of the holomorphic structure on $V$.  This work has already been applied to the case in which $X$ is a Calabi-Yau manifold~\citep{Anderson:2010mh, Anderson:2011ty}, and in this section we extend it to the more general case of a manifold with a heterotic $SU(3)$ structure. We will do this in detail, even though not much is different for this part of the parameter space, as it is the structure that we encounter here that generalises when we include the more complicated anomaly cancelation condition.

Consider now a one parameter family of heterotic compactifications $(X_t, V_t)~t\in\IC$
where we set  $(X_0, V_0) = (X, V)$.
We study simultaneous deformations of the complex structure determined by $\Psi$ and the holomorphic structure on $V$.  Hence the variation with respect to $t$ of any mathematical quantity $\beta$ (which may have values in $V$ or $\End V$) is given by the chain rule as follows
\begin{equation*}
\partial_t\beta = (\partial_t z^a)\, \partial_a\beta+ 
(\partial_t z^{\bar a})\, \partial_{\bar a}\beta + (\partial_t y^i)\, \partial_i\beta
+ (\partial_t \lambda^\alpha)\, \partial_\alpha\beta
+ (\partial_t \lambda^{\bar\alpha})\, \partial_{\bar\alpha}\beta
\end{equation*}
where we label the bundle moduli by $\lambda^\alpha$.

Let $F$ be the curvature of the bundle $V$ where
\begin{equation}
F = \d A + A\wedge A~,\label{eq:curvV}
\end{equation}
and where $A\in\Omega^1(X,\End(V))$ is the gauge potential.  
Let $\beta\in\Omega^{(0,q)}(X, \End V)$\footnote{We only need to work with $(0,q)$ forms, however our work generalises to any $(p,q)$ forms.}.   We can define an exterior derivative on
$V$ by
\begin{equation} \d_A = \d + [A, ]~,
\label{eq:dA} 
\end{equation}
where, $[A, ]$ is defined by
\[ [A, \beta] = A\wedge\beta - (-1)^q\beta\wedge A~.\]
A holomorphic structure on $V$ is determined by the derivative $\bar\partial_{\cal A}$ which is defined as the $(0,1)$ part of the operator $\d_A$, that is,
\begin{equation} \bar\partial_{\cal A} \beta = \bp\beta + [{\cal A}, \beta]~,
\label{eq:bpA}
\end{equation}
where $\cal A$ is the $(0,1)$ part of $A$.  It is easy to prove that $\bp_{\cal A}{}^2 = 0$ only if $F^{(0,2)} = 0$.  

Consider now what happens to the holomorphicity of the bundle $V$ under deformations of the complex structure of $X$.   
Varying equation \eqref{eq:holF} and using \eqref{eq:varO}, we find
\[0= \partial_a F\wedge\Psi + F\wedge\partial_a\Psi = \partial_a F\wedge\Psi + F\wedge\chi_a~.\]
Therefore
\begin{equation}
 (\partial_a F)^{(0,2)} = \Delta_a{}^m\wedge F_{mn}\, \d x^n~,\label{eq:varpureFone}
 \end{equation}
where we have used equation \eqref{eq:chi}.
On the other hand, varying \eqref{eq:curvV} we find that
\begin{equation}
 (\partial_a F)^{(0,2)} = \bar\partial_{\cal A}\,\alpha_a~,\label{eq:varpureFtwo}
 \end{equation}
 where $\alpha_a$ is the $(0,1)$ part of the variation of $A$.   Putting together equations \eqref{eq:varpureFone} and \eqref{eq:varpureFtwo} we find
 \begin{equation}
 \bar\partial_{\cal A}\,\alpha_a = \Delta_a{}^m\wedge F_{mn}\, \d x^n~.\label{eq:atiyahF}
 \end{equation}
 This equation represents a constraint on the possible variations $\Delta_a$ of the complex structure $J$ on $X$.
 
Consider the map 
\begin{equation}
\label{eq:mapF}
{\cal F}\;:\;\Omega^{(0,q)}(X,T^{(1,0)}X)\longrightarrow \Omega^{(0,q+1)}(X,\End(V))
\end{equation}
given by
\begin{equation}
{\cal F}\big(\Delta\big)= (-1)^q\, \Delta{}^m\wedge F_{mn}\, \d x^n~,
\qquad \Delta\in\Omega^{(0,q)}(X,T^{(1,0)}X)~.
\label{eq:mapFdef}
\end{equation}
We have the following theorem: 
\begin{Theorem}\label{prop:one}
\begin{equation}
\bp_{\cal A} \left({\cal F}\big(\Delta\big)\right) = - {\cal F}\left(\bp\Delta\right)~,
\quad\forall\, \Delta\in H_{\bp}^{(0,q)}(X,T^{(1,0)}X)~,
\label{eq:Fcohom}
\end{equation}
and therefore the map ${\cal F}$ is a map between cohomologies
\begin{equation}
\label{eq:mapFcohom}
{\cal F}\;:\;H_{\bp}^{(0,q)}(X,T^{(1,0)}X)\longrightarrow H_{\bp_{\cal A}}^{(0,q+1)}(X,{\rm End}(V))~.
\end{equation}
\end{Theorem}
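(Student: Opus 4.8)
The plan is to establish the identity \eqref{eq:Fcohom} by a direct computation with the operators $\bp$ and $\bp_{\cal A}$, and then to verify that it implies the map descends to cohomology. First I would take an arbitrary $\Delta\in\Omega^{(0,q)}(X,T^{(1,0)}X)$, write $\Delta = \Delta^m\otimes e_m$ in local holomorphic coordinates with $e_m = \partial_m$, so that ${\cal F}(\Delta) = (-1)^q\,\Delta^m\wedge F_{mn}\,\d x^n$ is a $(0,q+1)$-form valued in $\End(V)$. Then I would apply $\bp_{\cal A}$ and split it as $\bp + [{\cal A},\ ]$. The $\bp$ acting on the coefficient functions produces $(-1)^q(\bp\Delta^m)\wedge F_{mn}\,\d x^n$ together with a term $(-1)^q\Delta^m\wedge(\bp F_{mn})\wedge\d x^n$; the bracket term produces $(-1)^q\Delta^m\wedge[{\cal A},F_{mn}]\wedge\d x^n$ with the appropriate sign. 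The crucial input is the Bianchi identity for the holomorphic structure on $V$: since $F^{(0,2)}=0$ (holomorphicity of $V$, \eqref{eq:holF}), the $(0,2)$-component of $\bp_{\cal A}F$ built from $\bp F^{(1,1)} + [{\cal A},F^{(1,1)}]$ in the $\d x^{\bar k}$-directions must vanish, which is exactly what is needed so that $\bp F_{mn}\,\d x^n + [{\cal A},F_{mn}]\wedge\d x^n$ contributes nothing (or reorganizes into a total derivative that cancels). After collecting terms and carefully tracking the sign $(-1)^q$ versus $(-1)^{q+1}$ coming from pushing $\bp$ past the $(0,q)$-form $\Delta^m$, what survives is precisely $(-1)^{q+1}(\bp\Delta^m)\wedge F_{mn}\,\d x^n = -{\cal F}(\bp\Delta)$, giving \eqref{eq:Fcohom}.

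Having \eqref{eq:Fcohom}, the cohomological statement is formal. If $\Delta$ is $\bp$-closed, then $\bp_{\cal A}({\cal F}(\Delta)) = -{\cal F}(\bp\Delta) = 0$, so ${\cal F}(\Delta)$ is $\bp_{\cal A}$-closed and defines a class in $H_{\bp_{\cal A}}^{(0,q+1)}(X,\End(V))$. If moreover $\Delta = \bp\Xi$ is $\bp$-exact for some $\Xi\in\Omega^{(0,q-1)}(X,T^{(1,0)}X)$, then ${\cal F}(\Delta) = {\cal F}(\bp\Xi) = -\bp_{\cal A}({\cal F}(\Xi))$ is $\bp_{\cal A}$-exact, so the class $[{\cal F}(\Delta)]$ depends only on $[\Delta]$. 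Hence ${\cal F}$ is well defined on cohomology, as in \eqref{eq:mapFcohom}.

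I expect the main obstacle to be the bookkeeping in the first step: correctly handling the signs that arise when commuting $\bp$ through the form-degree of $\Delta$, when expanding the graded bracket $[{\cal A},\ ]$ on a $(0,q+1)$-form versus a $(0,q)$-form, and when isolating the relevant directional components of the curvature. One has to be careful that $F_{mn}\,\d x^n$ really means the $(1,0)\otimes(0,1)$-type contraction $F_{m\bar k}\,\d x^{\bar k}$ after using holomorphicity, so that the expression is genuinely of type $(0,q+1)$; the identity $F^{(0,2)}=0$ is what guarantees this and is the only place the hypothesis that $V$ is holomorphic enters. Once the index types and signs are pinned down, the Bianchi identity $\bp_{\cal A}F^{(1,1)}|_{(0,2)\text{-part}}=0$ does the rest, and no further geometric input (in particular nothing about balancedness, the dilaton, or stability) is needed — this is purely a statement about the Atiyah extension of $TX$ by $\End(V)$.
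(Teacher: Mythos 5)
Your proposal is correct and follows essentially the same route as the paper: expand $\bp_{\cal A}({\cal F}(\Delta))$ into $\bp$ plus the graded bracket, collect the sign $(-1)^q$ to produce $-{\cal F}(\bp\Delta)$, and observe that the leftover term is $\Delta^m\wedge\bp_{\cal A}(F_{mn}\,\d x^n)$, which vanishes by the Bianchi identity $\bp_{\cal A}F=0$ projected onto its $(1,0)\otimes(0,2)$ component (the paper writes this as $P_m{}^p\,\bp_{\cal A}(F_{pn}\d x^n)=0$) — exactly the mechanism you identify via $F^{(0,2)}=0$. Your explicit check that exact forms map to exact forms is a small completeness bonus over the paper's statement, but it follows from the same identity, so there is nothing substantively different.
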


\begin{proof}
Using equation \eqref{eq:bpA}
\begin{align*}
\bp_{\cal A} \left({\cal F}\big(\Delta\big)\right)&=
\bp\left({\cal F}\big(\Delta\big)\right) + {\cal A}\wedge {\cal F}(\Delta) - (-1)^{q+1}\, {\cal F}(\Delta)\wedge{\cal A}
\\
&= (-1)^q\ \bp\left(\Delta^m\wedge F_{mn}\d x^n\right) +
{\cal A}\wedge {\cal F}(\Delta) + (-1)^q\, {\cal F}(\Delta)\wedge{\cal A}
\\
&= (-1)^q\ \bp\left(\Delta^m\right)\wedge F_{mn}\d x^n +
\Delta^m\wedge \bp\left(F_{mn}\d x^n\right) \\
&\qquad
+ {\cal A}\wedge {\cal F}(\Delta) + (-1)^q\, {\cal F}(\Delta)\wedge{\cal A}\\[5pt]
&= - \, {\cal F}\left(\bp\Delta\right) +
\Delta^m\wedge\left( 
\bp\left(F_{mn}\d x^n\right)  +  {\cal A}\wedge F_{mn}\d x^n+ F_{mn}\d x^n\wedge{\cal A}
\right)\\
&= - \, {\cal F}\left(\bp\Delta\right) +
\Delta^m\wedge \bp_{\cal A}\left(F_{mn}\d x^n\right)~. 
\end{align*}
The last term vanishes for every $\Delta^m\in \Omega^{(0,1)}(X, T^{(0,1)})$ due to the Bianchi identity for the curvature $F$
\[\bp_{\cal A} F = 0~.\]
In fact, this Bianchi identity implies that
\begin{equation}
P_{m}{}^p\,\bp_{\cal A} \big(F_{pn}\d x^n\big)~ = 0,  \label{eq:strongBIF}
\end{equation}
where $P$ and $Q$ are the projection operators
\[ P = \half\, (1 -  i J)~,\qquad Q = \half\, (1 +  i J)~.\]
Thus
\[ \Delta^m\wedge\bp_{\cal A} \big(F_{mn}\d x^n\big)~ = 0~,\quad\forall\ 
\Delta^m\in \Omega^{(0,1)}(X, T^{(0,1)})~.\]
Therefore 
we have proven equation \eqref{eq:Fcohom}, which also implies that
\[\bp\Delta = 0\qquad\Longrightarrow \qquad \bp_{\cal A}\left({\cal F}\big(\Delta\big)\right) = 0~.\]
and so $\cal F$ is a map between cohomologies as in equation \eqref{eq:mapFcohom}.
\end{proof}

\vskip10pt
\noindent We will refer to the map $\cal F$ as the {\it Atiyah map for $F$}.  It is worth remarking that this map is well defined as a map between cohomologies.  In fact, as under gauge transformations the curvature $F$ is covariant, then so  is ${\cal F}(\Delta)$.  Therefore, equation \eqref{eq:Fcohom} is invariant under gauge transformations.
\vskip5pt

In terms of the map $\cal F$, the constraint \eqref{eq:atiyahF} on the variations of the complex structure 
$\Delta _a\in H_{\bp}^{(0,1)}(X,T^{(1,0)}X)$
can now be written as
\begin{equation}
 \bar\partial_{\cal A}\,\alpha_a = - {\cal F}\big(\Delta_a\big)~.\label{eq:atiyahFbis}
 \end{equation}
So  ${\cal F}\big(\Delta_a\big)$ must be exact in $H_{\bp_{\cal A}}^{(0,2)}(X,\End(V))$, in other words
\[ \Delta _a\in \ker({\cal F})\subseteq H_{\bp}^{(0,1)}(X,T^{(1,0)}X)~.\]

The tangent space ${\cal TM}_1$ of the moduli space of combined deformations of the complex structure and bundle deformations, keeping fixed the hermitian structure, is given by
\begin{equation}
{\cal TM}_1= H_{\bp_{\cal A}}^{(0,1)}(X,\End(V))\oplus \ker({\cal F})~ ,\label{eq:M1}
\end{equation}
where $H_{\bp_A}^{(0,1)}(X,\End(V))$ is the space of bundle moduli. 

These results can be restated in a way that will be suitable for generalisations later when we include the other constraints on the heterotic compactification $(X,V)$.  Define a bundle ${\cal Q}_1$ which is the extension of $TX$ by $\End(V)$, given by the short exact sequence
  \begin{equation}
   0\rightarrow \End(V)\xrightarrow{\iota_1} \Q_1 \xrightarrow{\pi_1} TX \rightarrow 0~,   \label{eq:ses1}\end{equation}
 with extension class $\cal F$.
 There is a holomorphic structure on ${\cal Q}_1$ defined by the exterior derivative $\bar\partial_1$ 
 \begin{equation*}
\bp_1\;=\; \left[ \begin{array}{cc}
\bp_{\cal A} & \;{\cal F} \\
0  & \bp  \end{array} \right],
\end{equation*}
which acts on $\Omega^{(0,q)}(\Q_1)$ and squares to zero, $\bp_1^2 = 0$.  In fact, we have

\begin{Cor}\label{prop:two}
\[\bp_1^2 = 0~.\]
\end{Cor}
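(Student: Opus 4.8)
The plan is to compute $\bp_1^2$ acting on a general element $(\beta,\Delta)\in\Omega^{(0,q)}(\End(V))\oplus\Omega^{(0,q)}(TX)$ by matrix multiplication of the block operator, and show that each entry of the resulting $2\times2$ operator vanishes. Writing the square out,
\[
\bp_1^2 = \left[\begin{array}{cc} \bp_{\cal A}^2 & \bp_{\cal A}\circ{\cal F} + {\cal F}\circ\bp \\ 0 & \bp^2 \end{array}\right],
\]
so there are three things to check: the top-left entry, the off-diagonal entry, and the bottom-right entry.

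First I would dispatch the two diagonal entries. The bottom-right entry is $\bp^2=0$, which holds because $X$ is a complex manifold (the complex structure $J$ determined by $\Psi$ is integrable, as established in section~\ref{subsec:Xgeom}). The top-left entry is $\bp_{\cal A}^2=0$, which holds because $V$ is a holomorphic bundle: this was noted right after equation~\eqref{eq:bpA}, since $\bp_{\cal A}^2=0$ is equivalent to $F^{(0,2)}=0$, which is the content of the holomorphicity condition \eqref{eq:holF} on $V$.

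The only real content is the off-diagonal entry: I must show $\bp_{\cal A}\big({\cal F}(\Delta)\big) + {\cal F}\big(\bp\Delta\big) = 0$ for all $\Delta\in\Omega^{(0,q)}(X,T^{(1,0)}X)$. But this is precisely equation~\eqref{eq:Fcohom} of Theorem~\ref{prop:one}, which was proved for arbitrary $\Delta$ in $\Omega^{(0,q)}(X,T^{(1,0)}X)$ (the proof there only used the Bianchi identity $\bp_{\cal A}F=0$, not any cohomology condition on $\Delta$). Strictly, Theorem~\ref{prop:one} as displayed restricts to $\Delta\in H_{\bp}^{(0,q)}$, but inspecting its proof shows the identity \eqref{eq:Fcohom} holds as an identity of operators on all of $\Omega^{(0,q)}(X,T^{(1,0)}X)$; I would simply invoke that computation. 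Hence the off-diagonal entry vanishes, and $\bp_1^2=0$.

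The main obstacle is essentially cosmetic rather than mathematical: one must be careful about the sign conventions in the definition of $[{\cal A},\,\cdot\,]$ (equation~\eqref{eq:bpA}) and in the definition \eqref{eq:mapFdef} of ${\cal F}$, since ${\cal F}$ raises form degree by one and therefore the $(-1)^q$ bookkeeping must be consistent between the two blocks for the cross terms to cancel rather than add. This is exactly the computation already carried out in the proof of Theorem~\ref{prop:one}, so there is nothing new to verify; the corollary is an immediate consequence of assembling Theorem~\ref{prop:one} with $\bp^2=0$ and $\bp_{\cal A}^2=0$.
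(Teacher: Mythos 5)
Your proposal is correct and follows essentially the same route as the paper: the paper's proof also computes $\bp_1^2$ on a general element of $\Omega^{(0,q)}(X,\End(V))\oplus\Omega^{(0,q)}(X,TX)$, with the diagonal entries vanishing by holomorphicity and the off-diagonal entry vanishing by Theorem~\ref{prop:one}. Your remark that the identity \eqref{eq:Fcohom} holds on all of $\Omega^{(0,q)}(X,T^{(1,0)}X)$ (not just on $\bp$-closed representatives) is a worthwhile clarification, but it does not change the argument.
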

\begin{proof}
Let
\begin{equation*}
\left(
\begin{array}{c}
\alpha\\ \Delta
\end{array}\right)
\in \Omega^{(0,q)}(X,{\rm End}(V)) \oplus \Omega^{(0,q)}(X, TX)~.
\end{equation*}
Then 
\begin{equation*}
\bp_1^2 
\left(
\begin{array}{c}
\alpha\\ \Delta
\end{array}\right) =
\left[ 
\begin{array}{c}
\bp_{\cal A}{\cal F}\big(\Delta\big) +  {\cal F}\big(\bp\Delta\big) \\
0    \end{array} 
\right] = 0~,
\end{equation*}
by theorem \ref{prop:one}.
\end{proof}
\vskip10pt
\noindent We remark that $\bp_1^2=0$ is due to the Bianchi identity for the curvature $\bp_{\cal A} F =  0$. 

The infinitesimal moduli space of the holomorphic structure $\bp_1$ on the extension bundle $\Q_1$, which is given by
\[{\cal TM}_1 = H^{(0,1)}_{\bp_1}(X, {\cal Q}_1)~,\]
can be computed by a long exact sequence in cohomology (for more details see~\cite{Anderson:2010mh})
\begin{equation} 
\begin{split}
0 &\rightarrow H^{(0,1)}(\End(V)) \xrightarrow{\iota_1'} H^{(0,1)}(\Q_1) \xrightarrow{\pi_1'} H^{(0,1)}(TX) \\
&\xrightarrow{\cal F} H^{(0,2)}(\End(V)) \rightarrow H^{(0,2)}(\Q_1) \rightarrow \ldots
\label{eq:les1}
\end{split}
\end{equation}
where the Atiyah map $\cal F$ is the connecting homomorphism as can be deduced from theorem \ref{prop:one}. Another way to see that the connecting homomorphism is given by the extension class $\cal F$ is from its definition
\begin{equation}
[\iota_1^{-1}\circ\bp_1\circ \pi_1^{-1}(x)]=[{\cal F}(\Delta)]\:.
\end{equation}
where we have used the definition of $\bp_1$ above.  In the computation of the long exact sequence \eqref{eq:les1},  we have used
\[H_{\bp}^0(X,TX) = 0~,\]
because $\mu(TX)= 0$ and we require $TX$ to be a stable bundle. Thus, we also have
\begin{equation}
 H^0(X, \Q_1) \cong H^0(X, \End(V))~.\label{eq:nosecQ1}
 \end{equation} 
Recall that for a stable bundle $V$   
\[ {\rm dim} H^0(X, \End(V))\le 1~.\] 
There are non-trivial sections whenever the trace of the endomorphisms is non-vanishing. Then,  for a polystable bundle
\begin{equation*}
V=\oplus_{i=1}^nV_i~,
\end{equation*}
we have 
\[\textrm{dim}(H^0(X, \End(V)))=\tilde n-1~\]
where $\tilde n$ is the number of bundle factors which have  endomorphisms non-vanishing trace, and we subtract one as the overall trace should vanish.

Finally, we find, by exactness of the sequence \eqref{eq:les1}, that
\[{\cal TM}_1= H_{\bp_1}^{(0,1)}(X, \Q_1) = {\rm Im}(\iota_1') \oplus {\rm Im}(\pi_1') \cong
H_{\bp_A}^{(0,1)}(X,\End(V))\oplus \ker({\cal F}) ~,\]
in agreement with equation \eqref{eq:M1}.

\subsection{Variations of the Holomorphic Structure on \texorpdfstring{$TX$}.}
\label{subsec:defsTX}

We now extend our results to include deformations of the holomorphicity condition \eqref{eq:holR} of the tangent bundle $TX$ under deformations of the complex structure of $X$.   Basically, we repeat the analysis above.  Let $R^I$ be the curvature of the tangent bundle 
\begin{equation}
R^I = \d \Theta^I + \Theta^I\wedge \Theta^I~,\label{eq:curvR}
\end{equation}
and where $\Theta^I\in\Omega^1(X,\End(TX))$ is the tangent bundle instanton connection.  Let $\beta\in\Omega^{(0,q)}(X, TX)$.   We define an exterior derivative on $TX$ by
\[ \d_{\Theta^I}\beta = \d\beta + [\Theta^I,\beta]~.\]
A holomorphic structure on $TX$ is determined by the derivative $\bar\partial_{\vartheta^I}$ which is defined as the $(0,1)$ part of the operator $\d_{\Theta^I}$, that is,
\begin{equation} \bar\partial_{\vartheta^I} \beta = \bp\beta + [{\vartheta^I},\beta]~,
\label{eq:bptheta}
\end{equation}
where $\vartheta^I$ is the $(0,1)$ part of $\Theta^I$.  It is easy to prove that $\bp_{\vartheta^I}{}^2 = 0$ only if $R^I{}^{(0,2)} = 0$.  
Varying equation \eqref{eq:holR} and using \eqref{eq:varO}, 
\begin{equation}
 (\partial_a R^I)^{(0,2)} = \Delta_a{}^m\wedge R^I{}_{mn}\, \d x^n~,\label{eq:varpureRone}
 \end{equation}
where we have used equation \eqref{eq:chi}.
On the other hand, varying \eqref{eq:curvR} we find that
\begin{equation}
 (\partial_a R^I)^{(0,2)} = \bar\partial_{\vartheta^I}\,\kappa_a~,\label{eq:varpureRtwo}
 \end{equation}
 where $\kappa_a$ is the $(0,1)$ part of the variation of $\Theta^I$.   Putting together equations \eqref{eq:varpureFone} and \eqref{eq:varpureRtwo} we find
 \begin{equation}
 \bar\partial_{\theta^i}\,\kappa_a = \Delta_a{}^m\wedge R^I_{mn}\, \d x^n~.\label{eq:atiyahR}
 \end{equation}
 This equation represents a further constraint on the possible variations $\Delta_a$ of the complex structure $J$ on $X$.
  
Consider the map 
\begin{equation}
\label{eq:mapR}
{\cal R}^I\;:\;\Omega^{(0,q)}(X,T^{(1,0)}X)\longrightarrow \Omega^{(0,q+1)}(X, {\rm End}(TX))
\end{equation}
given by
\begin{equation}
{\cal R}^I\big(\Delta\big)= (-1)^q\, \Delta{}^m\wedge R^I_{mn}\, \d x^n~,
\qquad \Delta\in\Omega^{(0,q)}(X,T^{(1,0)}X)~.
\label{eq:mapRdef}
\end{equation}
We have the following theorem: 
\begin{Theorem}\label{prop:three}
\begin{equation}
\bp_{\vartheta^I} \left({\cal R}^I\big(\Delta\big)\right) = - {\cal R}^I\left(\bp\Delta\right)~,
\quad\forall\, \Delta\in H_{\bp}^{(0,q)}(X,T^{(1,0)}X)~,
\label{eq:Rcohom}
\end{equation}
and therefore the map ${\cal R}^I$ is a map between cohomologies
\begin{equation}
\label{eq:mapRcohom}
{\cal R}^I\;:\;H_{\bp}^{(0,q)}(X,T^{(1,0)}X)\longrightarrow H_{\bp_{\vartheta^I}}^{(0,q+1)}(X,{\rm End}(TX))~.
\end{equation}
\end{Theorem}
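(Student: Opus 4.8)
The plan is to mirror, step by step, the proof of Theorem~\ref{prop:one}, making the substitutions $\cal A\to\vartheta^I$, $F\to R^I$, $\cal F\to{\cal R}^I$ and $\bp_{\cal A}\to\bp_{\vartheta^I}$ throughout. Nothing structural changes: ${\cal R}^I(\Delta)$ is an $\End(TX)$-valued $(0,q+1)$-form, and $\bp_{\vartheta^I}$ acts on such forms by $\bp + [\vartheta^I,\ ]$ in complete analogy with the action of $\bp_{\cal A}$ on $\End(V)$-valued forms in~\eqref{eq:bpA}--\eqref{eq:bptheta}, so the algebra of the computation is formally identical, and the role of $\Delta^m$ --- a scalar coefficient with respect to the $\End(TX)$ structure, with its tangent index contracted against that of $R^I$ --- is unchanged.

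Concretely, I would first expand $\bp_{\vartheta^I}\big({\cal R}^I(\Delta)\big)$ using the definition~\eqref{eq:bptheta}, then apply the Leibniz rule to $\bp\big(\Delta^m\wedge R^I{}_{mn}\,\d x^n\big)$, carefully tracking the signs $(-1)^q$. The term containing $\bp\Delta^m$ reassembles into $-{\cal R}^I(\bp\Delta)$, while the $\vartheta^I$-commutator pieces combine with $\Delta^m\wedge\bp\big(R^I{}_{mn}\d x^n\big)$ into $\Delta^m\wedge\bp_{\vartheta^I}\big(R^I{}_{mn}\d x^n\big)$, yielding
\begin{equation*}
\bp_{\vartheta^I}\big({\cal R}^I(\Delta)\big) \;=\; -\,{\cal R}^I\big(\bp\Delta\big) \;+\; \Delta^m\wedge\bp_{\vartheta^I}\big(R^I{}_{mn}\,\d x^n\big)~,
\end{equation*}
the exact analogue of the penultimate display in the proof of Theorem~\ref{prop:one}.

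The remainder term vanishes by the Bianchi identity for the curvature of the instanton connection, $\bp_{\vartheta^I}R^I=0$; combined with the holomorphicity $R^I{}^{(0,2)}=0$ of $TX$ (equation~\eqref{eq:holR}) this gives the strong form $P_m{}^p\,\bp_{\vartheta^I}\big(R^I{}_{pn}\,\d x^n\big)=0$, the analogue of~\eqref{eq:strongBIF} with $P=\half(1-iJ)$ and $Q=\half(1+iJ)$. Since the free index on $\Delta^m\in\Omega^{(0,1)}(X,T^{(1,0)}X)$ is holomorphic, wedging with $\Delta^m$ picks out precisely the $P$-projected part, so $\Delta^m\wedge\bp_{\vartheta^I}\big(R^I{}_{mn}\,\d x^n\big)=0$ and~\eqref{eq:Rcohom} follows. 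In particular $\bp\Delta=0$ forces $\bp_{\vartheta^I}\big({\cal R}^I(\Delta)\big)=0$, and $\bp$-exact $\Delta$ are sent to $\bp_{\vartheta^I}$-exact forms, so ${\cal R}^I$ descends to the claimed map~\eqref{eq:mapRcohom} between cohomology groups; as in the remark after Theorem~\ref{prop:one}, $R^I$ transforms covariantly under gauge transformations of $\Theta^I$, hence so does ${\cal R}^I(\Delta)$, which shows the map is well defined on classes. I expect no genuine obstacle here; the only points requiring care are the bookkeeping of the $(-1)^q$ signs in the Leibniz step, and verifying that the second Bianchi identity $\d_{\Theta^I}R^I=0$ for the (non-flat) connection $\Theta^I$ indeed reduces to the strong form above once the $(0,2)$-part of $R^I$ is set to zero.
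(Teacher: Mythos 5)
Your proposal is correct and follows exactly the route the paper takes: the paper's own proof of this theorem is simply the observation that the argument of Theorem~\ref{prop:one} carries over verbatim under the substitutions $F\to R^I$, ${\cal A}\to\vartheta^I$, and rests on the Bianchi identity $\bp_{\vartheta^I}R^I=0$ (in the strong form $P_m{}^p\,\bp_{\vartheta^I}(R^I{}_{pn}\,\d x^n)=0$, which uses $R^{I\,(0,2)}=0$), precisely as you describe.
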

\begin{proof}
The proof is just like that for theorem \ref{prop:one} and follows from the Bianchi identity
\[ \bp_{\vartheta^I} R^I = 0~.\]
\vskip-10pt
\end{proof}
\noindent We will refer to the map ${\cal R}^I$ as the {\it Atiyah map for $R^I$}.
We remark that this map is also well defined as a map between cohomologies because equation \eqref{eq:Rcohom} is invariant under gauge transformations.  

\vskip5pt

In terms of the map ${\cal R}^I$, the constraint \eqref{eq:atiyahR} on the variations of the complex structure 
$\Delta _a\in H_{\bp}^{(0,1)}(X,T^{(1,0)}X)$
can now be written as
\begin{equation}
 \bar\partial_{\vartheta^I}\,\kappa_a = - {\cal R}^I\big(\Delta_a\big)~,\label{eq:atiyahRbis}
 \end{equation}
so  ${\cal R}^I\big(\Delta_a\big)$ must be exact in $H_{\bp_{{\theta}^I}}^{(0,2)}(X,TX)$, in other words
\[ \Delta _a\in \ker({\cal R}^I)\subseteq H_{\bp}^{(0,1)}(X,T^{(1,0)}X)~.\]

The tangent space of the moduli space ${\cal T M}_2$ of allowed combined deformations of the complex structure, bundle deformations and tangent bundle deformations, keeping fixed the hermitian structure, is now given by
\begin{equation}
{\cal T M}_2= H_{\bp_{\vartheta^I}}^{(0,1)}(X,{\rm End}(TX))\oplus
H_{\bp_ {\cal A}}^{(0,1)}(X,\End(V))\oplus \left(\ker({\cal F})\cap\ker({\cal R}^I)\right) ~ ,
\end{equation}
where $H_{\bp_{\vartheta^I}}^{(0,1)}(X,{\rm End}(TX))$ is the space of deformations of the connection $\nabla^I$ on tangent bundle $TX$. \vskip5pt

These results can be restated in terms of an extension $E$ of the bundle ${\cal Q}_1$. Define a bundle $E$ which is the extension of $\Q_1$ by $\End(TX)$, given by the short exact sequence
 \begin{equation}
  0\rightarrow \End(TX)\xrightarrow{\iota_2} E \xrightarrow{\pi_2} \Q_1 \rightarrow 0~,
  \label{eq:ses2}
  \end{equation}
  with extension class $\cal R^I$. 
 There is a holomorphic structure on $E$ defined by the exterior derivative $\bar\partial_2$ 
 \begin{equation*}
\bp_2\;=\; 
\left[ \begin{array}{ccc}
\bp_{\vartheta^I} & 0 &\; {\cal R}^I\\
0 & \;\bp_{\cal A} & \;{\cal F} \\
0  &  0  & \;\bp  \end{array} \right],
\end{equation*}
which acts on $\Omega^{(0,q)}(E)$ and squares to zero.  

\begin{Cor}\label{prop:four}
\[\bp_2^2 = 0~.\]
\end{Cor}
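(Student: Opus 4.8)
The plan is to mimic the proof of Corollary \ref{prop:two} exactly, but now working with $3\times 3$ upper-triangular matrices of operators instead of $2\times 2$ ones. First I would take a general element
\[
\left(\begin{array}{c} \mu \\ \alpha \\ \Delta\end{array}\right)\in
\Omega^{(0,q)}(X,\End(TX))\oplus\Omega^{(0,q)}(X,\End(V))\oplus\Omega^{(0,q)}(X,TX)~,
\]
and compute $\bp_2^2$ applied to it by squaring the matrix. Because $\bp_2$ is block upper-triangular with diagonal entries $\bp_{\vartheta^I}$, $\bp_{\cal A}$, $\bp$ and off-diagonal entries ${\cal R}^I$, ${\cal F}$ (and a zero in the $(1,2)$-slot), the square $\bp_2^2$ is again upper-triangular. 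The diagonal entries of $\bp_2^2$ are $\bp_{\vartheta^I}^2$, $\bp_{\cal A}^2$, $\bp^2$, all of which vanish by the holomorphicity of $TX$, of $V$ and the integrability of the complex structure of $X$. The $(1,2)$-entry is $\bp_{\vartheta^I}\cdot 0 + 0\cdot\bp_{\cal A}=0$ trivially. So the only entries that require work are the $(2,3)$-entry and the $(1,3)$-entry.

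The $(2,3)$-entry is $\bp_{\cal A}{\cal F} + {\cal F}\bp$, which acts on $\Delta$ to give $\bp_{\cal A}({\cal F}(\Delta)) + {\cal F}(\bp\Delta)$; this vanishes by Theorem \ref{prop:one}, exactly as in the proof of Corollary \ref{prop:two}. The $(1,3)$-entry is $\bp_{\vartheta^I}{\cal R}^I + {\cal R}^I\bp + 0\cdot{\cal F}$, which acts on $\Delta$ to give $\bp_{\vartheta^I}({\cal R}^I(\Delta)) + {\cal R}^I(\bp\Delta)$; this vanishes by Theorem \ref{prop:three}. (Note the $0$ in the $(1,2)$-slot of $\bp_2$ means there is no cross term ${\cal F}$ contributing to the $(1,3)$-entry; this is why the extension is built so that $\End(TX)$ does not couple directly to $\End(V)$.) Hence every entry of $\bp_2^2$ vanishes on an arbitrary element, so $\bp_2^2=0$.

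I expect no real obstacle here: the statement is a bookkeeping corollary of Theorems \ref{prop:one} and \ref{prop:three} together with the three squaring-to-zero facts $\bp^2=\bp_{\cal A}^2=\bp_{\vartheta^I}^2=0$. The only mild subtlety worth flagging is that the two Atiyah-map identities \eqref{eq:Fcohom} and \eqref{eq:Rcohom} are stated for $\Delta\in H_{\bp}^{(0,q)}$ (i.e. for $\bp$-closed $\Delta$), whereas here we need the cochain-level identities $\bp_{\cal A}({\cal F}(\Delta)) = -{\cal F}(\bp\Delta)$ and $\bp_{\vartheta^I}({\cal R}^I(\Delta)) = -{\cal R}^I(\bp\Delta)$ to hold for \emph{all} $\Delta$, not only closed ones. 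This is in fact exactly what the displayed computation in the proof of Theorem \ref{prop:one} establishes before specialising, and the analogous computation for ${\cal R}^I$; so I would simply remark that these identities hold at the level of forms, as the proof of Theorem \ref{prop:one} shows, and then the matrix computation goes through verbatim.
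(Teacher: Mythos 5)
Your proof is correct and follows essentially the same route as the paper: apply $\bp_2^2$ to a general element of $\Omega^{(0,q)}(X,\End(TX))\oplus\Omega^{(0,q)}(X,\End(V))\oplus\Omega^{(0,q)}(X,TX)$ and observe that the only nontrivial entries vanish by Theorems \ref{prop:one} and \ref{prop:three}. Your remark that the identities $\bp_{\cal A}({\cal F}(\Delta))=-{\cal F}(\bp\Delta)$ and $\bp_{\vartheta^I}({\cal R}^I(\Delta))=-{\cal R}^I(\bp\Delta)$ must hold at the level of forms (not just on $\bp$-closed representatives) is a valid and worthwhile clarification, and as you note, the displayed computation in the proof of Theorem \ref{prop:one} does establish exactly this via the Bianchi identity.
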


\begin{proof}
Let
\begin{equation*}
\left(
\begin{array}{c}
\kappa\\ \alpha\\ \Delta
\end{array}\right)
\in \Omega^{(0,q)}(X, {\rm End}(TX)) \oplus\Omega^{(0,q)}(X,{\rm End}(V)) \oplus \Omega^{(0,q)}(X, TX)~.
\end{equation*}
Then 
\begin{equation*}
\bp_2^2 
\left(
\begin{array}{c}
\kappa\\\alpha\\ \Delta
\end{array}\right) =
\left[ 
\begin{array}{c}
\bp_{\vartheta^I}{\cal R}^I\big(\Delta\big) +  {\cal R}^I\big(\bp\Delta\big) \\
\bp_{\cal A}{\cal F}\big(\Delta\big) +  {\cal F}\big(\bp\Delta\big) \\
0    \end{array} 
\right] = 0~,
\end{equation*}
by theorems \ref{prop:one} and \ref{prop:three}.
\end{proof}
\vskip10pt

The infinitesimal moduli space of the holomorphic structure $\bp_2$ on the extension bundle $E$, which is given by
\[{\cal TM}_2 = H^{(0,1)}_{\bp_2}(X, E)~,\]
can be computed by a long exact sequence in cohomology as in the previous section
\begin{align*}
0 &\rightarrow H^{(0,1)}(\End(TX)) \xrightarrow{\iota_2'} H^{(0,1)}(E) \xrightarrow{\pi_2'} H^{(0,1)}(Q_1) \\
&\xrightarrow{\cal R} H^{(0,2)}(\End(TX)) \rightarrow H^{(0,2)}(E) \rightarrow \ldots
\end{align*}
where the Atiyah map ${\cal R}^I$ is the connecting homomorphism as can be deduced from theorem \ref{prop:three}. Note that in this computation we have used equation \eqref{eq:nosecQ1}, and so the Atiyah map ${\cal R}^I$ acts trivially from the zeroth level to the first level. This induces a splitting between the zeroth and first level of the long exact sequence, and so
\begin{equation}
 H^0(X, E) \cong H^0(Q_1)\oplus H^0(X, \End(TX))\cong H^0(X, \End(V))~.\label{eq:nosecQ2} 
 \end{equation}
The last equality follows from the stability of $TX$, and the fact that the endomorphisms in $spin(6)$ are traceless\footnote{This is true as $X$ is orientible and we require the connection $\nabla^I$ to be metric.}.
Then we find that the infinitesimal moduli space of the extension $E$ is
\[{\cal TM}_2= H_{\bp_2}^{(0,1)}(X, E) =
H_{\bp_{\vartheta^I}}^{(0,1)}(X,{\rm End}(TX))\oplus
H_{\bp_{\cal A}}^{(0,1)}(X,\End(V))\oplus (\ker({\cal F})\cap\ker({\cal R}^I)) 
~.\]
We remark again that the deformations in $H_{\bp_{\vartheta^I}}^{(0,1)}(X,{\rm End}(TX))$ should not correspond to any physical fields.

\subsection{Stability and Variations of the Primitivity Conditions for the Curvatures.}
\label{subsec:Primitive}

Before considering the constraints from the anomaly cancelation condition, in this section we discuss variations of the primitivity conditions for the curvatures of $V$ and $TX$
\[ \omega\lrcorner F = 0 ~, \qquad \omega\lrcorner R^I = 0~.\]
These conditions should be preserved under a general deformation, in particular under the deformations of the bundle $E$ discussed earlier, but also including deformations of the hermitian parameters.  In fact,  a polystable bundle remains polystable~\citep{MR2665168} under deformations $\Delta$ of the complex structure $J$ of $X$ which preserve the holomorphicity of $V$ and $TX$, that is for
\[\Delta \in \ker({\cal F})\cap\ker({\cal R}^I)~.\]  
Moreover, the theorem of Li and Yau~\citep{MR915839} guarantees that as the deformed bundles $V_t$ and $(TX)_t$ are polystable and holomorphic, then there are connections on $V_t$ and $(TX)_t$ which satisfy the instanton equations, in particular, such that the deformed curvatures are primitive with respect to the hermitian structure.

We generalise this result below to include deformations of the hermitian structure so that, for $X$ with a heterotic $SU(3)$ structure, in particular on a conformally balanced manifold, a general variation of the primitivity conditions of the curvatures which preserves the primitivity conditions does not pose  any constraints on the first order moduli space whenever the bundle is stable. 

We study the gauge bundle. A completely analogous result is obtained for the instanton connection $\nabla^I$ on $TX$.
Under a general variation the instanton equation becomes
\[0 = \p_t(\omega\lrcorner F)
= \half\,\p_t\left(\omega^{mn}F_{mn}\right)
= \half\, \left((\p_t\omega^{mn})  F_{mn} + \omega^{mn}\p_t F_{mn}\right)
~, \]
and therefore
\begin{equation}
 \omega\lrcorner \p_t F = -  \half\, \p_t(\omega^{mn})  F_{mn} =   (h_t^{(1,1)})\lrcorner F ~.
  \label{eq:varinstone}
 \end{equation}
 This equation means that $F$ acquires a non-primitive part under a general deformation
 \[(\partial_t F)^{(1,1)} = \frac{1}{3}\, \big((h_t^{(1,1)})\lrcorner F\big)\, \omega + f_t~,\]
 where $f_t$ is a primitive $(1,1)$-form, $\omega\lrcorner f_t = 0$. 
 Note that this non-primitive part of $\p_t F$ depends on the variations of the hermitian form
 and it is needed so that $F_t$ is primitive with respect to $\omega_t$.
 
On the other hand, considering instead a general variation of $F$ using equation \eqref{eq:curvV}. We find
\begin{equation}
(\partial_t F)^{(1,1)} =  \bp_{\cal A} b_t + \p_{{\cal A}^\dagger} \alpha_t~,\label{eq:varinsttwo}
\end{equation}
where $- {\cal A}^\dagger$ is the $(1,0)$ part of the gauge connection $A$\footnote{We set $A = {\cal A} - {\cal A}^\dagger$ so that $A$ is antihermitian.}, $\alpha_t$ is the $(0,1)$ part of $\p_t A$ as before, $b_t$ is the $(1,0)$ part of $\p_t A$, and the operator  $\p_{{\cal A}^\dagger}$ is the $(1,0)$ part of the covariant exterior derivative $\d_A$ defined in equation \eqref{eq:dA}.  This operator is given by
\begin{equation} 
\partial_{{\cal A}^\dagger} \beta = \p\beta - [{\cal A}^\dagger, \beta] ~,
\label{eq:pAdag}
\end{equation}
where $\beta\in \Omega^{(0,q)}(X, \End(V))$. 
It is easy to prove that this operator also squares to zero, $\p_{{\cal A}^\dagger}{}^2 = 0$ only if $F^{(2,0)} = 0$. 

Putting together equations \eqref{eq:varinstone} and \eqref{eq:varinsttwo} we obtain a relation
\begin{equation*}
 (h_t^{(1,1)})\lrcorner F = \omega\lrcorner\left( \bp_{\cal A} b_t + \p_{{\cal A}^\dagger} \alpha_t\right)~,
\end{equation*}
which seems to represent a constraint on the moduli space of hermitian structures $h_t$.  However for stable bundles this is not the case.  

\begin{Theorem}\label{tm:zero}
Consider the relation 
\begin{equation}
(h_t^{(1,1)})\lrcorner F =  \omega\lrcorner\left( \bp_{\cal A} b_t + \p_{{\cal A}^\dagger} \alpha_t\right)
\label{eq:varF11}
\end{equation}
which gives the contribution to the non-primitive part in the $(1,1)$ variation of $F$
\[ (\p_t F)^{(1,1)} = \frac{1}{3}\, \big((h_t^{(1,1)})\lrcorner F\big)\, \omega + f_t~,\]
where $f_t$ is primitive with respect to $\omega$.
On a {\it conformally balanced manifold}, with a stable holomorphic vector bundle $V$, such that the endomorphisms of $V$ are traceless, there are no gauge bundle parameters on the right hand side of equation \eqref{eq:varF11}, and there is always a gauge transformation so that \eqref{eq:varF11} is satisfied for any variation $h_t$ of the hermitian structure $\omega$.
\end{Theorem}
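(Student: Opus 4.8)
The plan is to read \eqref{eq:varF11} as a linear elliptic equation for the $(1,0)$-deformation $b_t$ of the gauge connection, with the hermitian variation $h_t$ and the bundle modulus $\alpha_t$ as source terms, and to establish two things: that the $\alpha_t$-term disappears in the natural gauge, and that the resulting equation is solvable for every $h_t$. I would work throughout with the Gauduchon metric $\widehat\omega=e^{-\phi}\omega$; the conformally balanced condition $\d(e^{-2\phi}\omega\wedge\omega)=0$ says precisely that $\widehat\omega$ is \emph{balanced}, so $\p(\widehat\omega\wedge\widehat\omega)=\bp(\widehat\omega\wedge\widehat\omega)=0$. Writing $\Lambda$ for the pointwise contraction with $\widehat\omega$ and $\bp_{\cal A}^{\widehat\dagger}$, $\p_{{\cal A}^\dagger}^{\widehat\dagger}$ for the adjoints of $\bp_{\cal A}$, $\p_{{\cal A}^\dagger}$ with respect to $\widehat g$, equation \eqref{eq:varF11} is (up to the harmless conformal factor $e^{-\phi}$ relating $\omega\lrcorner$ to $\Lambda$ on $(1,1)$-forms) the equation $\Lambda\bp_{\cal A}b_t+\Lambda\p_{{\cal A}^\dagger}\alpha_t=(h_t^{(1,1)})\widehat\lrcorner F$. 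Since $F$ is traceless and the trace part of the connection decouples into an abelian sector, I would take all $\End(V)$-valued objects in the traceless sub-bundle $\End_0(V)$; stability makes $V$ simple, so $H^0_{\bp_{\cal A}}(X,\End_0(V))=0$.

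First I would remove the bundle-modulus term. Using the residual complexified gauge freedom, take $\alpha_t$ to be the $\bp_{\cal A}$-harmonic representative of its class in $H^{(0,1)}_{\bp_{\cal A}}(X,\End_0(V))$ for the Gauduchon metric, so $\bp_{\cal A}\alpha_t=0=\bp_{\cal A}^{\widehat\dagger}\alpha_t$. Applying the hermitian K\"ahler-type identity $[\Lambda,\p_{{\cal A}^\dagger}]=i\,\bp_{\cal A}^{\widehat\dagger}+(\text{torsion})$ to the degree-one form $\alpha_t$, for which $\Lambda\alpha_t=0$, gives $\Lambda\p_{{\cal A}^\dagger}\alpha_t=i\,\bp_{\cal A}^{\widehat\dagger}\alpha_t+(\text{torsion})=(\text{torsion})$. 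The key point is that the torsion correction is an algebraic operator built from $\p\widehat\omega$, and on a balanced manifold $\p\widehat\omega$ is a \emph{primitive} $(2,1)$-form since $\widehat\omega\wedge\p\widehat\omega=\tfrac12\p(\widehat\omega\wedge\widehat\omega)=0$; one checks that it then contributes nothing to this scalar contraction. Hence $\Lambda\p_{{\cal A}^\dagger}\alpha_t=0$, i.e. no bundle moduli appear on the right-hand side of \eqref{eq:varF11}.

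It then remains to solve $\Lambda\bp_{\cal A}b_t=(h_t^{(1,1)})\widehat\lrcorner F$ for $b_t\in\Omega^{(1,0)}(X,\End_0(V))$ and arbitrary $h_t$. The operator $b\mapsto\Lambda\bp_{\cal A}b$ from $\Omega^{(1,0)}(\End_0V)$ to $\Omega^{(0,0)}(\End_0V)$ has surjective principal symbol, hence closed range and finite-dimensional cokernel equal to the kernel of its formal adjoint; from $\langle\Lambda\bp_{\cal A}b,\psi\rangle=\langle\bp_{\cal A}b,\widehat\omega\wedge\psi\rangle=\langle b,\bp_{\cal A}^{\widehat\dagger}(\widehat\omega\wedge\psi)\rangle$ this cokernel is $\{\psi\in\Omega^{(0,0)}(\End_0V):\bp_{\cal A}^{\widehat\dagger}(\widehat\omega\wedge\psi)=0\}$. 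Applying the conjugate K\"ahler-type identity to $\widehat\omega\wedge\psi$ — the torsion term again dropping out on the balanced manifold, and $\bp_{\cal A}^{\widehat\dagger}\psi=0$ since $\psi$ is a $0$-form — this forces $\p_{{\cal A}^\dagger}\psi=0$, so $\psi^\dagger$ is a holomorphic section of $\End_0(V)$ and hence $\psi=0$ by simplicity of $V$. Thus the operator is surjective, \eqref{eq:varF11} has a solution for every $h_t$, and in view of \eqref{eq:varinstone} the primitivity condition on $F$ imposes no new constraint on the hermitian moduli. The same argument with $(\bp_{\cal A},\p_{{\cal A}^\dagger})$ replaced by the corresponding operators of the connection $\nabla^I$ on $TX$, using stability of $TX$ and tracelessness of $\mathfrak{spin}(6)$, gives the analogous statement for $R^I$.

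The main obstacle is controlling the torsion corrections to the K\"ahler identities: on a balanced, non-K\"ahler manifold these operators (contraction against $\p\widehat\omega$ and $\bp\widehat\omega$) do not vanish outright, so one must verify that they drop out of precisely the two \emph{scalar} contractions used above — this is where the balanced condition $\p(\widehat\omega\wedge\widehat\omega)=\bp(\widehat\omega\wedge\widehat\omega)=0$, i.e. the conformally balanced hypothesis rather than merely the Gauduchon condition used for Li--Yau, is essential. A robust fallback, if one wishes to avoid the torsion-corrected identities, is to integrate the trace of \eqref{eq:varF11} against the Gauduchon volume: using that $\Lambda\eta\,\widehat{\mathrm{vol}}$ is a positive multiple of $\eta\wedge\widehat\omega\wedge\widehat\omega$ for a $(1,1)$-form $\eta$, together with $\bp(\widehat\omega\wedge\widehat\omega)=0$, tracelessness of the commutator, and Stokes' theorem, one gets $\int_X\tr(\Lambda\bp_{\cal A}b_t)\,\widehat{\mathrm{vol}}=0$, which shows at least that the global obstruction vanishes; identifying the full cokernel with $H^0(X,\End_0(V))^\ast=0$ via stability is what then upgrades this to the honest solvability statement.
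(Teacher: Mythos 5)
Your proof is correct, and it reaches the paper's conclusion by a recognisably different packaging of the same two underlying facts. The paper never invokes the torsion-corrected K\"ahler identities: it computes directly, writing $\omega\lrcorner(\bp_{\cal A}b_t+\p_{{\cal A}^\dagger}\alpha_t)=e^{2\phi}*\big((\bp_{\cal A}b_t+\p_{{\cal A}^\dagger}\alpha_t)\wedge\hat\rho\big)$ with $\hat\rho=\tfrac12\widehat\omega\wedge\widehat\omega$, pulling the derivatives through $\hat\rho$ using $\d\hat\rho=0$, and recognising the result as $i e^{-\phi}(-\p_{{\cal A}^\dagger}^{\widehat\dagger}b_t+\bp_{\cal A}^{\widehat\dagger}\alpha_t)$; it then Hodge-decomposes $\alpha_t$ and $b_t$ so that only the pure-gauge ($\bp_{\cal A}$- resp.\ $\p_{{\cal A}^\dagger}$-exact) pieces survive, and concludes by surjectivity of $\widehat\Delta_{\bp_{\cal A}}$ and $\widehat\Delta_{\p_{{\cal A}^\dagger}}$ on $\Omega^0(\End V)$, their kernels being $H^0(\End V)=0$ by stability and tracelessness. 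Your route instead quotes the general hermitian identities $[\Lambda,\p_{{\cal A}^\dagger}]=i\bp_{\cal A}^{\widehat\dagger}+(\text{torsion})$ and kills the torsion using the balanced condition, then establishes solvability by a Fredholm cokernel-equals-kernel-of-adjoint computation for $b\mapsto\Lambda\bp_{\cal A}b$. These are equivalent: your torsion terms on the relevant scalar contractions are precisely multiples of the Lee form $\Lambda(\p\widehat\omega)$, whose vanishing is exactly what the paper's explicit $\d\hat\rho=0$ manipulation encodes, and your cokernel $\{\psi:\p_{{\cal A}^\dagger}\psi=0\}=0$ is the same use of simplicity as the paper's trivial kernel of the Laplacian. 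What your version buys is a cleaner conceptual statement (the primitivity equation is an elliptic equation with vanishing analytic obstruction) and an explicit identification of where the conformally balanced hypothesis, as opposed to the weaker Gauduchon condition, enters; what the paper's version buys is that the statement ``only gauge parameters appear on the right-hand side'' is manifest for both $\alpha_t$ and $b_t$ simultaneously from the Hodge decomposition, rather than being handled for $\alpha_t$ by a choice of harmonic gauge. The only place your write-up is compressed is the assertion that the torsion operators ``contribute nothing to this scalar contraction''; this does hold, but the honest reason is the vanishing of the Lee form of $\widehat\omega$ rather than primitivity of $\p\widehat\omega$ per se (the two are of course equivalent for a $(2,1)$-form on a threefold), and spelling that out would close the one gap a referee might flag.
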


\begin{proof}
Let $\hat g_{mn} = e^{- \phi}\, g_{mn}$ be the Gauduchon metric and $\widehat\omega = e^{-\phi}\, \omega$ be the corresponding Gauduchon hermitian form.  Let $\widehat\lrcorner$ and $\hat *$ be the contraction operator and the Hodge dual operator with respect to the Gauduchon metric respectively.  Then
\begin{equation*}
(h_t^{(1,1)})\lrcorner F 
= *\big((\bp_{\cal A} b_t + \p_{{\cal A}^\dagger} \alpha_t) \wedge *\omega\big)
= e^{2\phi}\, *\big((\bp_{\cal A} b_t + \p_{{\cal A}^\dagger} \alpha_t) \wedge\hat\rho \big)~,
\end{equation*}
where 
\begin{equation*}
\hat\rho = e^{-2\phi}\, \rho~,\qquad\rho = *\omega= \half\, \omega\wedge\omega~.
\end{equation*}
Because on a conformally balanced manifold $\d\hat\rho = 0$, we have
\begin{align*}
(h_t^{(1,1)})\lrcorner F 
&= e^{2\phi}\,*\big(\bp_{\cal A}( b_t \wedge\hat\rho) 
+ \p_{{\cal A}^\dagger} (\alpha_t \wedge \hat\rho)\big)
= e^{2\phi}\,* \big(\bp_{\cal A}\hat* (J(b_t)) + \p_{{\cal A}^\dagger} \hat * (J(\alpha_t ))\big)\\
&= i\, e^{-\phi}\,\hat* \big(\bp_{\cal A}\,\hat*\, b_t - \p_{{\cal A}^\dagger}\, \hat * \, \alpha_t \big)~,
\end{align*}
where we have used the fact that $b_t$ is a $(1,0)$-form and $a_t$ is a $(0,1)$-form.  We have also used
\[ \hat *\beta = e^{(p-3)\phi}\, *\beta~,\]
which is true for any $p$-form $\beta$ in six dimensions.
We now note that the operators on the right hand side in the last equality are the adjoints, with respect to $\hat g$, of the differential operators $\bp_{\cal A}$ and $\p_{{\cal A}^\dagger}$ given by
\begin{align*}
\bp_{\cal A}^\dagger &= - *\p_{{\cal A}^\dagger}\,*~,\\
\p_{{\cal A}^\dagger}^\dagger &= - *\bp_{\cal A}\ *~.
\end{align*}
Using these operators, we now have
\begin{equation*}
(h_t^{(1,1)})\lrcorner F = 
i\, e^{-\phi}\,\big(- \p_{{\cal A}^\dagger}^{\widehat\dagger} \,b_t   +  \bp_{\cal A}^{\widehat\dagger}\,\alpha_t \big)~,
\end{equation*}
where $\widehat\dagger$ means the adjoint of the operators taken with respect to the Gauduchon metric.
Consider now the Hodge decomposition of $\alpha_t$
\[ \alpha_t = \bp_{\cal A}\epsilon_t + \bp_{\cal A}^{\hat\dagger}\, \eta_t + \alpha_t^{har}~,\]
where $\epsilon\in \Omega^0(X, {\rm End} V)$, $\eta\in \Omega^{(0,2)}(X, {\rm End} V)$ and $\alpha_t^{har}$ is the $\bp_{\cal A}$-harmonic part of $\alpha_t$ (using the Gauduchon metric).  We have a similar decomposition for $b_t$ with respect to the operator $\p_{{\cal A}^\dagger}$,
\[ b_t = \p_{{\cal A}^\dagger}\tilde\epsilon_t + \p_{{\cal A}^\dagger}^{\hat\dagger}\, \tilde\eta_t + \tilde\alpha_t^{har}~.\]
Then
\begin{align*}
\bp_{\cal A}^{\widehat\dagger}\, \alpha_t  &=\bp_{\cal A}^{\widehat\dagger}\bp_{\cal A}\epsilon_t~,\\
\p_{{\cal A}^\dagger}^{\widehat\dagger} \,b_t &=  \p_{{\cal A}^\dagger}^{\widehat\dagger}\p_{{\cal A}^\dagger}\epsilon_t~.
\end{align*}
Then, equation \eqref{eq:varF11} becomes
\begin{equation}
i\, e^{\phi}\, (h_t^{(1,1)})\lrcorner F = 
-\, \bp_{\cal A}^{\widehat\dagger}\bp_{\cal A}\epsilon_t
+ \p_{{\cal A}^\dagger}^{\widehat\dagger}\p_{{\cal A}^\dagger}\tilde\epsilon_t
~.\label{eq:varF11prefinal}
\end{equation}
Any variations $a_t$  and $b_t$ of $A$ corresponding to a gauge transformation, and which is therefore trivial,
is of the form
\[ \alpha_t = \bp_{\cal A}\epsilon_t~,\qquad b_t = \p_{{\cal A}^\dagger}\tilde\epsilon_t~.\]
for some $\{\epsilon_t,\tilde\epsilon_t\}\in\Omega^0(X, {\rm End} V)$.  Therefore there are no bundle parameters on the right hand side of equation \eqref{eq:varF11prefinal}.  
Consider the Laplacians 
\[\Delta_{\bp_{\cal A}} = \bp_{\cal A}^\dagger\, \bp_{\cal A}
+ \bp_{\cal A}\, \bp_{\cal A}^\dagger~,\quad{\rm and}\quad
\Delta_{\p_{{\cal A}}^\dagger} = \p_{\cal A}^\dagger\, \p_{\cal A}
+ \p_{\cal A}\, \p_{\cal A}^\dagger~,\]
and let $\widehat\Delta_{\bp_{\cal A}}$ and $\widehat\Delta_{\p_{{\cal A}}^\dagger}$ be the corresponding Laplacians with respect to the Gauduchon metric.
Then we can write equation \eqref{eq:varF11prefinal} as
\begin{equation}
e^{\phi}\, (h_t^{(1,1)})\lrcorner F = \hat h_t^{(1,1)}\,\widehat\lrcorner F = 
i\, \widehat\Delta_{\bp_{\cal A}}\epsilon_t -i\widehat\Delta_{\p_{{\cal A}}^\dagger}\tilde\epsilon_t~,
\label{eq:varF11final}
\end{equation}
where $\hat h_t = e^{-\phi}\, h_t$ and $\widehat\lrcorner$ is the contraction operator with respect to the Gauduchon metric.

This equation means that $\hat h_t^{(1,1)}\,\widehat\lrcorner F$, which belongs to the space $ \Omega^0(X, {\rm End}(V)$, is in the image of Laplacians which are elliptic operators.  Therefore, whenever the kernel of these Laplacians is trivial, the image of the Laplacians spans all of the space $ \Omega^0(X, {\rm End}(V)$ and equation \eqref{eq:varF11final} always a solution for any  $\hat h_t^{(1,1)}$.  This is precisely the case for a {\it stable} bundle $V$ because $H^0(\End V) = 0$ for traceless endomorphisms.

We conclude that, for stable vector bundles $V$ with traceless endomorphisms, equation \eqref{eq:varF11final} poses no constraints on the deformations of the hermitian moduli. 

\end{proof}
\vskip10pt
A similar result follows for the tangent space $TX$. We see then that variations of the instanton equations 
\[ \omega\lrcorner F = 0~, \quad{ \rm and} \quad \omega\lrcorner R^I = 0~,\]
impose no constraints on the variations $h_t$ of the hermitian form $\omega$, nor do they give  a relation between these and the moduli of the bundles, provided the bundles are {\it stable} with traceless endomorphisms.   It should be noted however that first order deformations may be obstructed and that stability or the Yang-Mills conditions may be spoiled.

If, on the other hand, the bundle $V=\oplus_{i=1} V_i$ is {\it polystable}, we then need to satisfy the Yang-Mills condition for each separate factor $V_i$, each of which need not be separately traceless and thus could have non-trivial zeroth cohomology. In this case, \eqref{eq:varF11final} could potentially constrain the hermitian moduli for each bundle factor $V_i$ of non-trivial trace. \footnote{We would like to thank James Gray for pointing this out. The first version of this paper on the arXiv did not include this subtlety.}${}^,$\footnote{See also ~\citep{Anderson:2009nt}, where the Yang-Mills conditions where related to $D$-term conditions in the-four dimensional effective field theory, in the case of polystable sums of line bundles on Calabi-Yau manifolds.}

We will come back to these issues, and in particular to the general case of  polystable bundles, when we include the anomaly cancelation conditions in the context of the Strominger system.  As we will see the constraints in equation \eqref{eq:varF11final} are naturally taken care of in our computations of moduli space of the Strominger system.

\subsection{Constraints from the Anomaly Cancellation Condition.}
\label{subsec:Anomaly}

We construct an extension bundle $\Q$ of $E$ such that $\Q$ has a holomorphic structure, and which allows for the implementation of the anomaly cancelation condition equation \eqref{eq:BIanomaly}
\begin{equation*}
\d H = -2 i \partial\bar\partial\omega = \frac{\alpha'}{4}\,\left(\tr (F\wedge F) - \tr (R^I\wedge R^I) \right).
\end{equation*}
Moreover, using deformation theory of holomorphic bundles, we will show that this construction results in a description of the moduli space of heterotic compactifications $(X,V)$.  

We begin by defining a map $\cal H$ as follows:
\begin{equation}
{\cal H}\;:\;\Omega^{(0,q)}(X, E)\longrightarrow\Omega^{(0,q+1)}(X, T^*{}^{(1,0)}X),\label{eq:mapH}
\end{equation}
by
\begin{equation}
{\cal H}(x)_m= i\, (-1)^q\, \Delta^p\wedge Q_{n}{}^r\,(\p\omega)_{pmr}\, \d x^n-\frac{\a}{4}(\tr\,(f_m\wedge \alpha)-\tr\,(r_m^I\wedge \kappa))~,\label{eq:Hdef}
\end{equation}
where
\[ f_m = F_{mq}\,Q_{n}{}^q\,\d x^n~, \qquad\text{and}\quad r_m^I = R^I_{mq}\, Q_{n}{}^q\,\d x^n~,\]
and 
\begin{equation*}
x=\left(
\begin{array}{c}
\kappa\\ \alpha\\ \Delta
\end{array}
\right)
\in\Omega^{(0,q)}(X,E)~.
\end{equation*} 
and where, as before, $\Delta$ is valued in $T^{(1,0)}X$,  $\alpha$ is valued in $\End(V)$, and $\kappa$ is valued in $\End(TX)$.

\begin{Theorem}\label{prop:five}
\begin{equation}
\bp({\cal H}(x)) = - {\cal H}(\bp_2(x))~,\quad\forall x\in \Omega^{(0,q)}(X,E)~, 
\label{eq:Hcohom} 
\end{equation}
and therefore the map $\cal H$ is a map between cohomologies
\begin{equation}
{\cal H}\;:\;H_{\bp_2}^{(0,q)}(X, E)\longrightarrow H_{\bp}^{(0,q+1)}(X, T^*{}^{(1,0)}X).\label{eq:mapHcohom}
\end{equation}
\end{Theorem}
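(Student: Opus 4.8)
The plan is to follow exactly the template of Theorems \ref{prop:one} and \ref{prop:three}: compute $\bp({\cal H}(x))$ by moving $\bp$ through the definition \eqref{eq:Hdef} term by term, produce the term $-{\cal H}(\bp_2 x)$ from the derivatives hitting $\Delta$, $\alpha$, $\kappa$, and then show that the remaining terms — those in which $\bp$ hits the geometric data $F$, $R^I$, $Q_n{}^r(\p\omega)_{pmr}$ — cancel against each other. The key inputs will be the Bianchi identities $\bp_{\cal A} F = 0$ and $\bp_{\vartheta^I} R^I = 0$ (used in the sharpened projected form \eqref{eq:strongBIF}), together with the Bianchi identity for the anomaly, i.e.\ the condition $\bp_2^2 = 0$ proved in Corollary \ref{prop:four} and the equation $-2i\p\bp\omega = \frac{\a}{4}(\tr F\wedge F - \tr R^I\wedge R^I)$, which is precisely the integrability statement that makes this term work.

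First I would write ${\cal H}(x) = {\cal H}_\omega(\Delta) - \frac{\a}{4}\big(\tr(f_m\wedge\alpha) - \tr(r_m^I\wedge\kappa)\big)\,\d x^m$ where ${\cal H}_\omega(\Delta)_m = i(-1)^q\,\Delta^p\wedge Q_n{}^r(\p\omega)_{pmr}\,\d x^n$, and apply $\bp$. By the graded Leibniz rule, $\bp$ acting on $\Delta$, $\alpha$, $\kappa$ produces (up to signs that must be checked against the $(-1)^q$ conventions in \eqref{eq:Hdef} and the definition of $\bp_2$) exactly $-{\cal H}(\bp_2 x)$ together with the cross-terms $-{\cal F}(\Delta)$ coming from $\bp_2$ mixing $\Delta$ into the $\End(V)$ slot and $-{\cal R}^I(\Delta)$ from the $\End(TX)$ slot; these cross-terms are built into $\bp_2$ and so they are correctly accounted for on the right-hand side. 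The genuine computation is then to show that the ``leftover'' terms vanish: (i) $\bp$ hitting $F_{mq}Q_n{}^q\,\d x^n$ and $R^I_{mq}Q_n{}^q\,\d x^n$ inside the trace terms — these vanish by \eqref{eq:strongBIF} and its $R^I$ analogue exactly as in the proof of Theorem \ref{prop:one}; (ii) $\bp$ hitting $Q_n{}^r(\p\omega)_{pmr}$ in ${\cal H}_\omega(\Delta)$, which after contracting with $\Delta^p$ produces a $\p\bp\omega$ term; and (iii) the terms in which $\bp$ hits $\alpha$ or $\kappa$ and which must combine with the connecting-map pieces — I expect the $[{\cal A},\alpha]$ and $[\vartheta^I,\kappa]$ commutators hidden in $\bp_{\cal A}\alpha$, $\bp_{\vartheta^I}\kappa$ to be exactly what is needed so that the trace terms assemble into $\tr(\bp_{\cal A}$-covariant expressions$)$, using cyclicity of the trace.

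The main obstacle is step (ii) combined with the bookkeeping of (i) and (iii): one must verify that the $\p\bp\omega$ term generated when $\bp$ differentiates the $(2,1)$-form data in ${\cal H}_\omega$ cancels precisely against the $\tr(F\wedge F)$ and $\tr(R^I\wedge R^I)$ contributions generated when $\bp$ acts on $f_m$, $r_m^I$ and is then combined with the $\Delta$-dependent pieces ${\cal F}(\Delta)$, ${\cal R}^I(\Delta)$ entering through $\bp_2$. This is exactly the place where the anomaly cancellation condition \eqref{eq:BIanomaly} is used: without it, these terms would leave a residual obstruction proportional to $\big(2i\p\bp\omega + \tfrac{\a}{4}(\tr F\wedge F - \tr R^I\wedge R^I)\big)$, and this residual is zero if and only if the Bianchi identity holds — which is the same ``if and only if'' recorded for $\bar D^2=0$ in the introduction. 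I would organize the calculation by first isolating all terms proportional to $\Delta^p\wedge(\,\cdot\,)$ (the complex-structure part) and checking their cancellation using $-2i\p\bp\omega = \frac{\a}{4}(\tr F\wedge F - \tr R^I\wedge R^I)$ and the definitions of ${\cal F}$, ${\cal R}^I$, and then separately checking the purely-$\bp_{\cal A}$-exact and $\bp_{\vartheta^I}$-exact terms cancel by the relevant Bianchi identities and cyclicity of the trace. The second assertion of the theorem, that ${\cal H}$ descends to cohomology, is then immediate from \eqref{eq:Hcohom}: if $\bp_2 x = 0$ then $\bp({\cal H}(x)) = 0$, and if $x = \bp_2 y$ then ${\cal H}(x) = -\bp({\cal H}(y))$ is $\bp$-exact, so $[{\cal H}(x)]$ depends only on $[x] \in H^{(0,q)}_{\bp_2}(X,E)$.
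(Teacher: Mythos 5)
Your proposal follows exactly the paper's own proof: compute $\bp({\cal H}(x))+{\cal H}(\bp_2 x)$, observe that the terms where $\bp$ hits $f_m$ and $r^I_m$ against $\alpha$ and $\kappa$ vanish by the projected Bianchi identities \eqref{eq:strongBIF} and its $R^I$ analogue, and that the remaining $\Delta^p$-proportional terms --- $i\,\Delta^p\wedge\bp\big(Q_n{}^q(\p\omega)_{pmq}\d x^n\big)$ against $-\tfrac{\a}{4}\Delta^p\wedge\big(\tr(f_m\wedge f_p)-\tr(r^I_m\wedge r^I_p)\big)$ --- cancel precisely by the anomaly Bianchi identity \eqref{eq:BIanomaly}. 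The descent to cohomology is argued the same way in both, so the proposal is correct and essentially identical in approach.
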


\begin{proof}
Recall
\begin{equation*}
\bp_2\, x = \left(
\begin{array}{c}
\bp_{\vartheta^I}\kappa + {\cal R}^I(\Delta)\\
\bp_{\cal A}\alpha + {\cal F}(\Delta)\\
\bp\Delta
\end{array}\right)~,
\end{equation*}
Then
\begin{equation}
\begin{split}
{\cal H}(\bp_2\, x)_m &= - i\, (-1)^q\, \bp\Delta^p\wedge Q_n{}^r\, (\p\omega)_{pmr}\,\d x^n\\
&  - \frac{\alpha'}{4}\, \left(
  {\rm tr}\left(f_m\wedge(\bp_{\cal A}\alpha + {\cal F}(\Delta))\right)
- {\rm tr}\left(r_m^I\wedge(\bp_{\vartheta^I}\kappa + {\cal R}^I(\Delta))\right)
\right)~,\label{eq:Hd2x}
\end{split}
\end{equation}
and we obtain
\begin{equation}
\begin{split} 
\bp({\cal H}(x))_m + {\cal H}(\bp_2\, x)_m &=
 i\, \Delta^p\wedge\bp\big( Q_n{}^q\, (\p\omega)_{pmq}\,\d x^n\big)\\
& - \frac{\alpha'}{4}\, \Delta^p\wedge \left(
   {\rm tr}\big(f_m\wedge f_p\big) 
   - {\rm tr}\big(r^I_m\wedge r^I_p\big)
\right) \\[3pt]
&  - \frac{\alpha'}{4}\,\left(
{\rm tr}\big( \bp_{\cal A}f_m\wedge\alpha\big) 
-{\rm tr}\big( \bp_{\vartheta^I}\,r^I_m\wedge\kappa\big) 
  \right)~. \label{eq:theo}
\end{split}
\end{equation}
The last two terms vanish because of the Bianchi identities for $F$ and $R^I$, in particular, because of equation  \eqref{eq:strongBIF}
\begin{equation*}
P_{m}{}^p\,\bp_{\cal A} \big(f_p\big)~ = 0,  
\end{equation*}
and the analogous one for $R^I$. The other terms cancel due to the Bianchi identity of the anomaly cancelation condition \eqref{eq:BIanomaly}.
In fact, the Bianchi identity is equivalent to
\[ 4 i \, Q_{[m}{}^r\, \partial_{|r|}\big(P_n{}^s\p_{|s|}\omega_{pq]}\big)
= \frac{\alpha'}{4}\, \big( \tr(F_{[mn}F_{pq]}) - \tr(R^I_{[mn}R^I_{pq]}) \big)~,\]
which implies
\[ \frac{\alpha'}{4}\, \big( \tr(f_m\wedge f_p) - \tr (r^I_m\wedge r^I_p)\big)
=
i\, P_m{}^r \, P_p{}^q\, \bp\big( (\p\omega)_{qrn}\, \d x^n\big)~.\]
Therefore
\[ \bp({\cal H}(x))_m + {\cal H}(\bp_2\, x)_m = 0~.\]
\end{proof}

\vskip5pt
The Atiyah map $\cal H$ is well defined as a map between cohomologies.  To see this we need to prove that the class ${\cal H}(x)\in H_{\bp}^{(0,q+1)}(X, T^*{}^{(1,0)}X)$ and that equation \eqref{eq:Hcohom} are invariant under gauge transformations.
Recall that under a gauge transformation
\[ {\cal A} \mapsto \Phi({\cal A} - \Phi^{-1}\bp \Phi)\Phi^{-1}~,\]
where $\Phi$ takes values in the Lie algebra of the structure group of the bundle $V$.  This implies that
\[ \alpha_t \mapsto  \Phi(\alpha_t - \bp_{\cal A}(\Phi^{-1}\p_t \Phi))\Phi^{-1}~.\]
Let $\alpha\in\Omega^{(0,q)}(X, {\rm End}(V))$.  Then, under a gauge transformation
\[ \alpha \mapsto \Phi(\alpha - \bp_{\cal A}Y)\Phi^{-1}~,\]
where $Y\in\Omega^{(0,q-1)}(X, {\rm End}(V))$.  Thus, the term $\tr(f_m\wedge\alpha)$ in ${\cal H}(X)$ transforms as
\begin{equation*}
\begin{split}
 \tr(f_m\wedge\alpha)&\mapsto
 \tr(\Phi f_m\Phi^{-1}\wedge\Phi(\alpha - \bp_{\cal A}Y)\Phi^{-1}) \\ 
&\mapsto \tr(f_m\wedge\alpha) + \bp(\tr(f_m\wedge Y)) - \tr(\bp_{\cal A}f_m\wedge Y)~.
 \end{split}
 \end{equation*}
As the last term vanishes due to the Bianchi identity for $F$, we find that under a gauge transformation ${\cal H}(x)$ 
changes only by a $\bp$-exact part, and therefore the class ${\cal H}(x)\in H_{\bp}^{(0,q+1)}(X, T^*{}^{(1,0)}X)$ is gauge invariant. 
To prove that equation \eqref{eq:Hcohom} is gauge invariant, we note first that $\bp{\cal H}(x)$ is invariant.  
On the other hand, ${\cal H}(\bp_2 x)$ is also invariant
because 
\[ \bp_{\cal A}\alpha\mapsto \Phi(\bp_{\cal A}\alpha)\Phi^{-1}~,\]
and so the term $\tr(f_m\wedge(\bp_{\cal A}\alpha + {\cal F}(\Delta)))$ in ${\cal H}(\bp_2 x)$ is invariant (see equation \eqref{eq:Hd2x}).  The argument for the other term $\tr(r^I_m\wedge(\bp_{\vartheta^I}\kappa + {\cal R}(\Delta)))$ in ${\cal H}(\bp_2 x)$ is similar.

We now construct a bundle $\Q$ by extending $E$ by  $T^*X$ given by the short exact extension sequence
\footnote{This structure is similar to the one which appeared in \citep{2013arXiv1304.4294G, 2013arXiv1308.5159B} in the context of generalised geometry for heterotic compactifications.  It would be interesting to find out the precise relation.}
\begin{equation}
0\rightarrow T^*X\xrightarrow{i}\Q \xrightarrow{\pi}E\rightarrow 0\:,
\end{equation}
with extension class $\cal H$.
We define a holomorphic structure on $\Q$ by defining the operator $\bar D$ on $\Q$ 
\begin{equation}
\bar D\;=\; \left[ \begin{array}{cc}
\bp & \,{\cal H} \\
0  & \,\bp_2  \end{array} \right].\label{eq:barD}
\end{equation}
Clearly, by theorems \ref{prop:four} and  \ref{prop:five}
\[ \bar D^2 = 0~.\] 

It is worth pointing out that the construction of the operator $\bar D$ such that it squares to zero, 
a condition we have seen is equivalent to 
\begin{equation*}
{\cal H}(\bp_2 x)_m=-\bp{\cal H}(x)_m\;\;\;\forall\;\;\;x\in\Omega^{(0,q)}(X, E)\:,
\end{equation*}
implies the Bianchi identity \eqref{eq:anomaly}.  This is clear as this equation implies (see equation \eqref{eq:theo})
\begin{equation*}
\Delta^p\wedge{\cal H}_{pm}=0\;\;\;\forall\;\;\;\Delta\:,
\end{equation*}
where 
\begin{equation*}
{\cal H}_{pm}=i\, \bp\big((\p\omega)_{pmq}\,Q_n{}^q\, \d x^n\big)
- \frac{\alpha'}{4}\, \big( \tr(f_m\wedge f_p) - \tr (r^I_m\wedge r^I_p)\big)~.
\end{equation*}
In particular, 
\begin{equation*}
g^p\, {\cal H}_{pm} =0\;\;\;\forall\;\;\;g\in\Omega^0(X,TX).
\end{equation*}
It follows that ${\cal H}_{mn}=0$, which is equivalent to the Bianchi Identity. We thus have that $\bar D^2=0$ if and only if the Bianchi identities for $F$, $R^I$ and $H$ are satisfied.

Deformations of the holomorphic structure determined by $\bar D$ correspond to elements of $H^{(0,1)}_{\bar D}(X, \Q)$. We will compute this cohomology by the usual means of a long exact sequence in cohomology. We have defined above a short exact extension sequence
\begin{equation}
0\rightarrow T^*X\xrightarrow{\iota}\Q\xrightarrow{\pi}E\rightarrow0\:,\label{eq:sesH}
\end{equation}
with extension class $\cal H$. This gives rise to a long exact sequence in cohomology
\begin{equation} 
\begin{split}
& 0 \rightarrow H^0(T^*X)\xrightarrow{\iota'} H^0(\Q) \xrightarrow{\pi'} H^0(E) \\
&\xrightarrow{{\cal H}_0} H^1(T^*X) \xrightarrow{\iota'} H^1(\Q) \xrightarrow{\pi'} H^1(E) \\
&\xrightarrow{{\cal H}_1} H^{2}(T^*X) \rightarrow H^2(\Q) \rightarrow \ldots
\end{split}
\label{eq:les3}
\end{equation}
where, by theorem \ref{prop:five}, the connecting homomorphism  is $\cal H$, and where we denote by ${\cal H}_q$ the map $\cal H$ when we need to make it clear that it is acting on $(0,q)$-forms.
In the long exact sequence above, note that
\[H^0(X,T^*X)=0~,\]
where the vanishing of this cohomology follows from the fact that
\[ H_{\bp}^{(0,3)}(X, TX) = 0~,\]
by zero-slope stability and
\[ H_{\bp}^0(X,T^*X) \cong
H_{\bp}^{(1,0)}(X) \cong
H_{\p}^{(0,1)}(X) \cong
H_{\bp}^{(2,3)}(X) \cong
H_{\bp}^{(0,3)}(X, TX) ~.\]
The second isomorphism is due to complex conjugation and the third comes from Hodge duality.  The fourth isomorphism is given by the holomorphic no-where vanishing $(3,0)$ form $\Omega$~\cite{MR915841}. For every element  $\beta^m\in H_{\bp}^{(0,q)}(X, TX)$, there is  an element 
\[ \Omega(\beta) = \frac{1}{2!\,q!}\, \beta^m\wedge\Omega_{mnp}\, \d x^n\wedge\d x^p\ \in \, H_{\bp}^{(2,q)}(X)~.\]
The map is an isomorphism because of the properties of $\Omega $ and the fact that 
\[ \Omega(\bp\beta) = \bp(\Omega(\beta))~.\]

We are now ready to write the infinitesimal moduli space of holomorphic structures of the extension $\Q$. By exactness of the sequence \eqref{eq:les3}, it follows that
\begin{equation}
\label{eq:defBI}
H_{\bar D}^1(X, \Q)\cong\Im(i')\oplus\Im(\pi')\cong \Big[H_{\bp}^1(X,T^*X)\Big/\textrm{Im}({\cal H}_0)\Big]\oplus\ker({\cal H}_1)\:,
\end{equation}
is the tangent space to the moduli space of deformations of the holomorphic structure defined by $\bar D$  on $\cal Q$.
As we have remarked, the Bianchi identities give rise to a holomorphic structure on $\Q$ defined by $\bar D$ and $\cal H$. The elements in the factor 
\[\ker({\cal H}_1)\subseteq H_{\bp_2}^{(0,1)}(X, E)~,\]
 correspond to those deformations of the holomorphic structure on $E$ which preserve the holomorphic structure of the co-tangent bundle $T^*X$. and the elements in the factor 
\[ {\cal M}_{HS} = \Big[H_{\bp}^1(X,T^*X)\Big/\textrm{Im}({\cal H}_0)\Big]\]
are the moduli of the (complexified) hermitian structure. 
In the following subsections we interpret in detail the elements in $H_{\bar D}^{(0,1)}(X,{\cal Q})$,  which by construction should be precisely the infinitesimal moduli space of the Strominger system.

\subsection{The Yang-Mills Condition Revisited.}
In the computation leading to  \eqref{eq:defBI}, we found that we need to take the quotient by 
\begin{equation*}
\textrm{Im}({\cal H}_0)\cong\{\tr({\cal H}_0({x}))\:\vert\:\ x\in H^0(X,E)\}\:.
\end{equation*}
Noting that (see equation \eqref{eq:mapH})
\[ {\cal H}_m(x)\wedge\d x^m = \frac{\alpha'}{4}\, \tr(F\,\alpha)~,\quad \alpha\in H^0(X,\End(V))\]
we find that
\[ \textrm{Im}({\cal H}_0)\cong\{\tr(F\, \alpha)\:\vert\:\ \alpha\in H^0(X,\End(V))\}\subset H^{(1,1)}(X)\:.\]
which may be non-trivial whenever $H^0(X,\End(V))$ is non-trivial, that is, when the bundle $V=\oplus_iV_i$ is polystable with bundle factors $V_i$ for the which ${\rm End}(V_i)$ has non-vanishing traces. 
Let $V_i$ be such a stable bundle with ${\rm End}(V_i)$ has non-vanishing traces, and let
\begin{equation*}
\alpha_i\in H^0(X,\End(V_i))=\mathbb{C}\:,
\end{equation*}
where the $\mathbb{C}$ corresponds to the trace of the endomorphisms. These correspond of course to sections of $\End(V_i)$ by the Dolbeault theorem. Without loss of generality, we may assume that this section takes the form $c_i I_i$, where $c_i$ is a constant, and $I_i$ is the identity on isomorphisms, which is part of the Lie-algebra for algebras of non-trivial trace. We may therefore assume that a generic section takes the form
\begin{equation}
\alpha=\sum_i c_i I_i\:,\label{eq:gensec}
\end{equation}
where the constants  $c_i$ are such that $\alpha$ is traceless. It follows that the elements in $\textrm{Im}({\cal H}_0)$ are of the form\footnote{
Note that ${\rm Im}({\cal H}_0)=\{\sum_ic_i\tr F_i\}$ without any further constraints on the constants $c_i$. This is due to the fact that 
$\sum_ic_i\tr F_i = \sum_i(c_i+K)\tr F_i$  for any constant $K$, as  $\sum_i\tr F_i = 0$.}

\begin{equation*}
\label{eq:Dtermcond}
[h]=\sum_i c_i[\tr(F_i)]\:,
\end{equation*}
where the brackets refer to cohomology classes. 

We claim that this is precisely the constraint on the moduli enforced by the Yang-Mills condition. As we have seen (Theorem \ref{tm:zero}), the Yang-Mills conditions pose no extra conditions on the moduli for {\it stable} bundles. If, on the other hand, the vector bundle is {\it polystable}, then these conditions may introduce constraints on the moduli. The constraint is exactly of the form above, and we take a moment to explain why.

Let $V_i$ be a stable bundle of nonzero trace. As $V=\oplus_i V_i$ is polystable 
\[\mu(V_i) = \mu(V) = 0~, \]
we must have that the Yang-Mills condition for a bundle $V_i$ is,
\begin{equation*}
\omega\lrcorner F_i=0\:.
\end{equation*}
As noted before, it is only the trace part of the bundle that can impose non-trivial constraints from this condition. Taking the trace and using instead the Gauduchon metric $\hat\omega$ this condition becomes
\begin{equation}
\label{eq:YMi}
\hat\omega\hat\lrcorner\,\tr\:F_i=0\:.
\end{equation} 
Varying equation \eqref{eq:YMi}, and performing a computation similar to that leading to equation \eqref{eq:varF11final}, we obtain that on a {\it conformally balanced manifold}
\begin{equation*}
\p_t\hat\omega\,\hat\lrcorner\,\tr F_i\:\in\:\textrm{Im}(\widehat\Delta_\p)+\textrm{Im}(\widehat\Delta_{\bp})\:.
\end{equation*}
Equivalently, this condition means that
\begin{equation*}
(\p_t\hat\omega,\tr\:F_i) = 0\:,
\end{equation*}
where the integration is done with respect to the Gauduchon metric. Considering the Hodge decomposition of $\p_t\hat\omega$ with respect to the $\bp$ operator and the Gauduchon metric, it is easy to see that the $\bp^{\hat\dagger}$-exact piece  drops out from the inner product.  Hence, only the $\bp$-closed part contributes that is,  the elements in $H_{\bp}^{(1,1)}(X)$\footnote{By Proposition \ref{prop:CBcond}, the $\bp$-exact part is determined entirely by deformations of the complex structure.}.  These correspond to the (imaginary part of) the hermitian moduli.  However, the vanishing of the inner product implies that we should also mod out by forms proportional to $\tr\:F_i$ in the hermitian moduli, or more generally, by terms proportional to $\sum_ic_i\tr\:F_i$. 

Interestingly, by computing the first cohomology $H^1_{\bar D}(X,\Q)$, which gives the tangent space $\cal TM$ of the moduli space of holomorphic structures on $\Q$ at $\bar D$, we find that the instanton condition gets implemented for free. This is not surprising, as discussed in the next section, where we consider $\cal TM$  in more detail. As we will see, this is naturally included in the quotient by $\bar D$-exact terms.

\subsection{The Moduli Space of the Strominger System.}

We now claim that the tangent space of the moduli space of the Strominger system, is given by $H_{\bar D}^{(0,1)}(X,{\cal Q})$ in equation \eqref{eq:defBI}.   The extension bundle $\cal Q$, with extension class $\cal H$ in equations \eqref{eq:sesH} and \eqref{eq:Hdef},
with the holomorphic structure $\bar D$ in equation \eqref{eq:barD} determined by the Bianchi identities, together with the requirement that the bundles $V$ and $TX$ are polystable  and that $X$ is conformally balanced is equivalent to the Strominger system.  
The holomorphic structure $\bar D$ includes the requirement that $V$ and $TX$ should be holomorphic and that $X$ must have an integrable complex structure. Moreover, as we have seen it also implements the anomaly cancelation condition  together with the fact that $J$ and $\omega$ are covariantly constant with respect to the the Bismut connection (recall that this is reflected in the fact $ H = J(\d\omega)$).  Also, the instanton conditions are satisfied automatically by the Theorem of Li and Yau.  It is natural to expect therefore that deformations  of this structure gives variations of the Strominger system, except that we have to take care of the conformally balanced condition.  In this section we elaborate on these issues. 

Consider the elements in the cohomology
\begin{equation}
\label{eq:defMSS}
H_{\bar D}^1(X, \Q)\cong {\cal M}_{HS}\oplus\ker({\cal H}_1)\:,
\qquad  {\cal M}_{HS} = \Big[H_{\bp}^1(X,T^*X)\Big/\textrm{Im}({\cal H}_0)\Big]~,
\end{equation}
which we would like to interpret as the moduli of the Strominger system. 
The cohomology group  $H_{\bar D}^1(X, \Q)$ is of course the tangent space to the moduli space of deformations of the holomorphic structure on
$\cal Q$ given by the differential operator $\bar D$ in equations \eqref{eq:barD} and \eqref{eq:Hdef}.  The key issue here is that by preserving the holomorphic structure on $\cal Q$ these moduli correspond to deformations which preserve the Bianchi identities.

We begin with the $\bar D$-closed elements
\begin{equation}
 {\cal H}_1(x_t)_m = - \bp y_{t\, m}~,\qquad \bp_2 x_t = 0~,\label{eq:Dclosed}
\end{equation} 
for $x_t\in \Omega^{(0,1)}(X,E)$ and $y_t\in \Omega^{(0,1)}(X,T^*X)$. Clearly, the  left hand side of the first equation only involves $x_t\in H_{\bp_2}^{(0,1)}(X,E)$, that is, only involves variations of the holomorphic structure of $E$. Hence, the moduli in
\[ \ker{\cal H}_1\subseteq H_{\bp_2}^{(0,1)}(X,E)~,\]
represent those deformations of the holomorphic structure of $E$ which preserve the holomorphic structure on the cotangent bundle $T^*X$.
In preserving these holomorphic structures the Bianchi identities are therefore preserved.  One can also see this explicity (see below).
On the other hand, for a fixed holomorphic structure on $E$, that is for $x_t = 0$, we have that $\bp y_t = 0$ and so 
the moduli in 
\[{\cal M}_{HS}= \Big[H_{\bp}^1(X,T^*X)\Big/\textrm{Im}({\cal H}_0)\Big]~\]
correspond to the (complexified) cotangent bundle moduli.

Consider now the $\bar D$-exact forms. Let
\begin{equation*}
\left( \begin{array}{c}
y_t \\ x_t
\end{array}\right)
\in \Omega^{(0,1)}(X, {\cal Q})~,\qquad
x_t = 
\left( \begin{array}{c}
\kappa_t \\ \alpha_t \\ \Delta_t
\end{array}\right)
\in \Omega^{(0,1)}(X, E)~,
\end{equation*}
and 
\begin{equation*}
\left( \begin{array}{c}
f_t \\ \xi_t
\end{array}\right)
\in \Omega^0(X, {\cal Q})~,\qquad
\xi_t  =  
\left( \begin{array}{c}
\eta_t \\ \epsilon_t \\ \delta_t
\end{array}\right)
\in \Omega^0(X, E)~.
\end{equation*}
The $\bar D$-exact forms satisfy
\begin{equation}
\left( \begin{array}{c}
 y_t \\ x_t
\end{array}\right)
=
\left( \begin{array}{cc}
\bp f_t + {\cal H}_0(\xi_t) \\
\bp_2  \xi_t \end{array} \right).\label{eq:Dexact}
\end{equation}
The second equation are the trivial deformations of the holomorphic structure on $E$ corresponding to changes in $J$ due to diffeormophisms
\[\Delta_t = \bp \delta_t~,\]  
changes of the gauge fields from gauge transformations and trivial deformations of $J$
\[  \alpha_t = \bp_{\cal A}\epsilon_t + {\cal F}(\delta_t)~,\]
and a similar equation for the trivial deformations of the tangent bundle
\[\kappa_t = \bp_{\vartheta^I}\eta_t + {\cal R^I}(\delta_t)~.\]
The first equation in \eqref{eq:Dexact} can be written as
\begin{equation}
y_t = \bp f_t + {\cal H}_0(\xi_t) = \bp f_t -\frac{i}{2}\, \delta_t^p \, (\partial \omega)_{pmn}\, \d x^m\wedge \d x^n + 
\frac{\alpha'}{4} \Big( \tr(\epsilon_t F) - \tr(\eta_t R^I)\Big)~. \label{eq:trivialy}
\end{equation}
The last three terms come from trivial deformations of the holomorphic structure of $E$.  Keeping fixed the deformations of the holomorphic structure on $E$, that is, setting 
\[\bp_2\xi_t = 0~,\]
we see that the last term vanishes due to the stability of the tangent bundle $TX$, which implies that for traceless endomorphisms of $TX$ there are no sections with values in $TX$
\[ H^0({\rm End} (TX)) = 0~.\]
In this case, the second term, which corresponds to trivial deformations of the complex structure due to diffeomorphisms of $X$, also vanishes as there are no sections of with values in $TX$. 
The third term corresponds to the discussion in the previous section.  In fact, since a generic section of ${\rm End}(V)$ takes the form in equation \eqref{eq:gensec} we have that  this term is of the form
\[ \tr(\epsilon_t F) = \sum_i\,\tr( c_i F_i) = [h]~, \]
where $[h]$ represents a class in $H^{(1,1)}(X)$.  As we argued in the previous section this implements the instanton condition on the polystable bundle $V$.

We still need to discuss the meaning of the  first term  in equation \eqref{eq:trivialy}, which is related to the preservation of the conformally balanced condition.
We claim that  the variations in $H_{\bar D}^{(0,1)}(X, {\cal Q})$ preserve the conformally balanced condition.  Our results above imply that the deformations of the hermitian structure $y_t$ which {\it preserve the anomaly cancelation condition}, are  $(1,1)$ forms which are $\bp$-closed, and that the $\bp$-exact part is trivial.  The fact that the $\bp$-exact part of $y_t$ is trivial is {\it precisely} the content of Proposition \ref{prop:CBcond}.  In fact, in
Proposition \ref{prop:CBcond} it was proven that, as long as $TX$ is stable, the preservation of the conformally balanced condition ($\d\p_t\hat\rho = 0$) determines the $\bp$-exact part of the $\bp$-Hodge decomposition of the $(1,1)$ form $\hat*(\p_t\hat\rho)^{(2,2)}$
in terms of the deformations of the complex structure of $X$.

Finally, we would like to compare our results with those obtained by directly varying the anomaly cancelation condition.  Recall that
\begin{equation}
H =  i\, (\p- \bp)\, \omega = J(d\omega) = \d B + {\cal C S}~.\label{eq:anomalyA}
\end{equation}
where  
\[ {\cal C S} = \frac{\alpha'}{4}\, ({\rm CS}[A] - {\rm CS}[\Theta^I])~,\]
and ${\rm CS}[A]$ and ${\rm CS}[\Theta^I]$ are the Chern--Simons 3-forms for these connections defined by
\begin{equation*} 
{\rm CS}[A] = \tr\left(A\wedge\d A + \frac{2}{3} A\wedge A\wedge A\right)~, 
\end{equation*}
and similarly for ${\rm CS}[\Theta^I]$.  The Bianchi identity for the anomaly cancelation condition is 
\begin{equation*}
\d H = 2 i\, \bp \p \omega = \frac{\alpha'}{4}\,\left(\tr (F\wedge F) - \tr (R^I\wedge R^I) \right)~.
\end{equation*}
The variations of equation \eqref{eq:anomalyA} are given by~\citep{Anderson:2014xha, Candelas2014}
\begin{equation}
\begin{split}
\p_t H &= J(\d(\p_t\omega)) + (\Delta_t + \Delta^*_t)^p\wedge H_{pmn}\, \d x^m\wedge\d x^n\\
&= \frac{\alpha'}{2}\, \left(\tr (\p_t A\wedge F) - \tr(\p_t\Theta^I\wedge R^I)\right) + \d {\cal B}_t~,\label{eq:varanomaly}
\end{split}
\end{equation}
where
\begin{equation}
{\cal B}_t = \p_t B - \frac{\alpha'}{4}\, \left(\tr (A\wedge \p_t A) - \tr(\Theta^I\wedge\p_t\Theta^I)\right)~,\label{eq:varB}
\end{equation}
and
\[ \Delta_t = (\p_t z^a) \Delta_a~,\qquad \Delta^*_t = (\p_t \bar z^{\bar a})\, \bar\Delta_{\bar a}~,\]
with $\bar\Delta_{\bar a}$ the complex conjugate of $\Delta_a$.
Let
\begin{equation}
{\cal Z}_t = {\cal B}_t + i \p_t\omega~.\label{eq:complexherm}
\end{equation}
Separating equation \eqref{eq:varanomaly} by type we find
\begin{equation}
\begin{split}
(0,3)\, {\rm part:}\qquad &\bp{\cal Z}_t^{(0,2)} = 0\\
(1,2)\, {\rm part:}\qquad &\p{\cal Z}_t^{(0,2)} +  \bp{\cal Z}_t^{(1,1)}  = 2\, {\cal H}(x_t)_m\wedge\d x^m
\label{eq:eqsforZ}
\end{split}
\end{equation}
where
\[ 2\, {\cal H}(x_t)_m\wedge\d x^m = i \Delta_t{}^m \wedge (\p\omega)_{mnp}\, \d x^n\wedge \d x^p
- \frac{\alpha'}{2}\, \big(\tr(\alpha_t\wedge F) - \tr(\kappa_t\wedge R^I)\big)~,\]
for a deformation
\begin{equation*}
x_t=\left(
\begin{array}{c}
\kappa_t \\ \alpha_t\\ \Delta_t
\end{array}
\right)
\in H_{\bp_2}^{(0,1)}(X,E)
\end{equation*} 
of the holomorphic structure of $E$. Note how equation \eqref{eq:Dclosed} is more restrictive than equation \eqref{eq:eqsforZ}.   The reason for this extra constraint is that we have imposed a holomorphic structure on $T^*X$.  This means that the representative of the class
${\cal Z}^{(0,2)}\in H_{\bp}^{(0,2)}(X)$  must be such that
$\p{\cal Z}^{(0,2)}$ is $\bp$-exact if the deformed structure is to remain a holomorphic structure on $\cal Q$. 

A mild assumption on the cohomology of $X$ would guarantee that this condition is satisfied.  Suppose that
\begin{equation}
 H_{\bp}^{(0,1)}(X) = 0~.\label{eq:novectors}
 \end{equation}
This condition is very  interesting regarding deformations of the heterotic $SU(3)$ structure of the manifold $X$. 
It is not too hard to prove that this  is enough to guarantee that
\begin{equation*}
H_{\bp}^{(2,1)}(X)=H^{(2,1)}_{\d}(X)\:,
\end{equation*}
so that the allowed complex structure variations in this case are counted by the dimension of $H^{(0,1)}_{\bp}(X,TX)$, and not a subset of this (see section \ref{sec:varsJ} on deformations of the complex structure of $J$).  These matters are discussed further in~\citep{DKS2014}.

We claim that when \eqref{eq:novectors} is satisfied
\[ H_{\bp}^{(0,2)}(X) \cong H_{\bp}^{(0,1)}(X) = 0~.\]
To prove this, let $\beta$ be a $(0,2)$-form.  We can construct a $(1,0)$-form using $\Omega$ as
\begin{equation} 
\alpha = \beta\lrcorner\Omega = - *(\beta\wedge *\Omega) = i *(\beta\wedge \Omega)~,\label{eq:alphabeta}
\end{equation}
where we have used the fact that $*\Omega = -i \Omega$.  Conversely, we can construct a $(0,2)$-form $\beta$ given a $(1,0)$-form $\alpha$
\[ \beta = ||\Omega||^{-2}\, \alpha\lrcorner\bar\Omega~.\]
Then we find that
\[\p^\dagger\alpha =  i * \bp (\beta\wedge \Omega) = i *(\bp\beta\wedge\Omega)~,\]
and therefore
\[ \bp\beta = 0\qquad\iff\qquad\p^\dagger\alpha = 0~.\] 
Suppose now that 
\[\beta = \bp\lambda~,\]
 for some $(0,1)$-form $\lambda$.  Then equation \eqref{eq:alphabeta} gives 
 \begin{equation*}
 \alpha = (\bp\lambda)\lrcorner\Omega = i *((\bp\lambda)\wedge\Omega) = i * \bp(\lambda\wedge\Omega)
 =  - *\bp **(\lambda\wedge *\Omega) = \p^\dagger (\lambda\lrcorner\Omega)~,
 \end{equation*}
so if $\beta$ is $\bp$-exact, then $\alpha$ is $\p$-coexact.  Conversely, if 
\[ \alpha = \p^\dagger\gamma~,\]
for some $(2,0)$-form $\gamma$, then $\beta$ is $\bp$-closed
\[ \beta = \bp(||\Omega||^{-2}\, \gamma\lrcorner\bar\Omega)~.\]
Therefore
\[\beta = \bp\lambda \qquad\iff\qquad \alpha = \p^\dagger\gamma ~.\] 
Now, by assumption 
\[ H_{\p}^{(1,0)}(X) \cong H_{\bp}^{(0,1)}(X) = 0~,\]
which, when $\alpha$ is $\p^\dagger$-closed, it means by the Hodge decomposition of $\alpha$, that we must have that
in fact $\alpha = \p^\dagger\gamma$, and hence the corresponding element $\beta$ is $\bp$-exact.

Returning now to the moduli space, this result means that ${\cal Z}_t^{(0,2)}$ must be $\bp$-exact.
Hence, equations \eqref{eq:Dclosed} and \eqref{eq:eqsforZ} are equivalent and the {\it infinitesimal} moduli space is given by
equation \eqref{eq:defMSS}.  The condition \eqref{eq:novectors} is therefore sufficient to ensure that all deformations of the anomaly cancelation condition give rise to the holomorphic structure $\bar D$ on $\cal Q$ as required. 

This subtlety regarding deformations of the anomaly cancellation condition versus deformations of the holomorphic structure $\bar D$ deserves a bit more attention. First recall that $\bar D$ is a holomorphic structure on $\Q$ if and only if the Bianchi identities hold. Deformations of $\bar D$, which are the elements of $H^{(0,1)}_{\bar D}(\Q)$, therefore correspond to deformations of the Bianchi identities.  These correspond to deformations of the anomaly cancellation modulo $\d$-exact terms. One might think that in our scheme the deformations of the anomaly cancellation condition are only defined modulo $\d$-closed terms. However, due to flux quantisation, which states that the closed part of the flux, $H_0=\d B$, is quantised, we find that closed infinitesimal deformations of the anomaly cancellation condition must be exact\footnote{In~\citep{Anderson:2014xha} the authors discuss flux quantisation in relation to the deformations of the anomaly cancelation condition, but not in a slightly different context than ours.}.

It follows that the elements of $H^{(0,1)}_{\bar D}(\Q)$, i.e. deformations of the Strominger system, which of course includes the Bianchi identity, only define deformations of the anomaly cancellation modulo $\d$-exact terms. We can use this ambiguity to get rid of the $\p$-exact $(2,1)$-piece of the deformation of the anomaly cancellation condition. We might also get an extra $\bp$-exact piece, but this can be pulled into $\bp{\cal Z}_t^{(1,1)}$ by an appropriate redefinition of the $B$-field. In this way the $\p$-exact piece is trivial from the point of view of deformations of $\bar D$.

Finally, we note that equations \eqref{eq:eqsforZ} give a good interpretation of the elements in $H_{\bp}^1(X,T^*X)$ in the moduli space as the parameters for the complexified hermitian structure ${\cal Z}_t^{(1,1)}$ as defined in equations \eqref{eq:complexherm} and \eqref{eq:varB}\footnote{This is also obtained in~\citep{Candelas2014} from the dimensional reduction of the 10 dimensional heterotic string theory.}. These include the deformations of the $B$-field. It should also be noted that modding out the hermitian moduli by $\Im(\H_0)$ also makes sense from this perspective. Indeed, recall the gauge transformation of the $B$-field
\begin{equation}
\label{eq:gaugeB}
B_{t\,\textrm{gauge}}=-\frac{\a}{4}(\tr \d_A\epsilon_t-\tr \d_\Theta\eta_t)\:,
\end{equation}
required for the field-strength $H$ to remain invariant under gauge-transformations $\d_A\epsilon_t$ and $\d_\Theta\eta_t$ of $A$ and $\Theta$ respectively. It follows from this that trivial deformations $\tilde\B_t$, corresponding to gauge transformations, take the form
\begin{equation}
\label{eq:varBgauge}
\tilde\B_{t\,\textrm{gauge}}=-\frac{\a}{2}\big(\tr\,F\epsilon_t-\tr\,R\eta_t)+\frac{\a}{4}\d\big(\tr\,A\epsilon_t-\tr\,\Theta\eta_t\big)\:.
\end{equation}
At this point, we are not interested in gauge transformations that change the complex structures. That is, we set $\Delta_t=\bp\epsilon_t=\bp\eta_t=0$. It follows that $\eta_t=0$ by stability of $TX$. The same is true for $\epsilon_t$ if $V$ is stable. If $V=\oplus_iV_i$ is poly-stable, we may assume by \eqref{eq:gensec} that
\begin{equation*}
\epsilon_t=\sum_i c_i I_i\:,
\end{equation*}
and 
\begin{equation*}
\tilde\B_{t\,\textrm{gauge}}=\B^{(1,1)}_{t\,\textrm{gauge}}=-\frac{\a}{2}\sum_ic_i\tr\,F_i+\frac{\a}{4}\sum_ic_i\d\tr\,A_i=-\frac{\a}{4}\sum_ic_i\tr\,F_i\:,
\end{equation*}
where we have used that $\d\tr A_i=\tr F_i$ by symmetry of the trace. It follows that any term in $\Z_t^{(1,1)}$ which lies in $\Im(\H_0)$ should be considered trivial, and can thus be modded out.

Recall also that the $\bar D$-exact terms \eqref{eq:trivialy} included modding out by $\bp$-exact terms. For the $\p_t\omega^{(1,1)}$-term in ${\cal Z}_t^{(1,1)}$, this could be understood as preserving the conformally balanced condition. As for $\tilde\B_t^{(1,1)}$, recall that in addition to \eqref{eq:gaugeB}, the $B$-field also has the gauge transformation
\begin{equation}
\label{eq:gaugeB2}
B_{t\,\textrm{gauge}}=\d\lambda_t\:.
\end{equation}
The $(1,1)$-part of this reads
\begin{equation*}
B_{t\,\textrm{gauge}}^{(1,1)}=\p\lambda_t^{(0,1)}+\bp\lambda_t^{(1,0)}\:.
\end{equation*}
The first term is $\p$-exact, and can be understood as a trivial deformation of the {\it anti-holomorphic} $(1,0)$-type structure on $T^*X$. The last term corresponds to trivial deformations of the holomorphic striucture.

\newpage

\section{Conclusions and future directions.}
\label{sec:conclusions}

In this paper, we have discussed the first order deformations of the Strominger system. We have seen that the system can be described in terms of certain holomorphic structures on bundles over the base manifold $X$, and we have studied the first order deformations and moduli related to these structures. Studying first order deformations of holomorphic structures is easier than attacking heterotic compactificaitons head on, and first order deformations are given in terms of their corresponding first degree cohomologies. 

Indeed, the infinitesimal moduli space of the heterotic compactifications discussed in this paper is given by the tangent space of deformations of a bundle
\[{\cal Q} = T^*X\oplus \End(TX) \oplus \End(V) \oplus TX~,\]
endowed with a holomorphic structure defined by the operator $\bar D$ 
\begin{equation*}
\overline D =
\left(
\begin{array}{cc}
\bp & {\cal H}\\
0 & \bp_2\\
\end{array}
\right)~,\qquad
\bp_2 =
\left(
\begin{array}{ccc}
\bp_{\vartheta^I} & 0 & {\cal R}\\
0 & \bp_{\cal A} & {\cal F}\\
0 & 0 & \bp
\end{array}
\right)~.
\end{equation*}
This operator squares to zero due to the Bianchi identities for $F$, $R^I$ and $H$. We have shown that the  infinitesimal deformations of the Strominger system are given by the first order deformations of $\bar D$ which can be computed using the general theory of deformations of holomorphic bundles and we have found
\[ {\cal TM} = H_{\bar D}^{(0,1)}(X, \Q) \cong \Big[H_{\bp}^{(0,1)}(X, T^*X)\Big/\textrm{Im}({\cal H}_0)\Big] \oplus \ker{\cal H}_1~,\]
where 
\[\ker({\cal H}_1)\subseteq  H_{\bp_2}^{(0,1)}(X, E) ~,
\qquad E  = \End(TX) \oplus \End(V) \oplus TX~\]
and $\cal H$ is a map between the cohomologies
\begin{equation*}
{\cal H}\;:\;H_{\bp_2}^{(0,q)}(X, E)\longrightarrow H_{\bp}^{(0,q+1)}(X, T^*{}^{(1,0)}X)\:,
\end{equation*}
carefully constructed so that $\overline D^2=0$ is equivalent to the Bianchi identity for the anomaly cancelation condition.
Here $\bp_2$ is the holomorphic structure on $E$ given above. We also have
\begin{equation*}
\textrm{Im}({\cal H}_0)\cong\{\tr(F\alpha)\:\vert\:\alpha\in H^0_{\bp_{\cal A}}(X,\End(V))\},
\end{equation*}
which is trivial when $V$ is stable, as $H^0_{\bp_{\cal A}}(X,\End(V))=0$ in this case, but could be non-trivial if $V$ is polystable. The quotient by $\textrm{Im}({\cal H}_0)$ takes care of the constraints coming from the Yang-Mills condition.

A generic modulus therefore takes the form
\begin{equation*}
\left(
\begin{array}{c}
y \\ x\\
\end{array}
\right)
~,\qquad
y\in H_{\bp}^{(0,1)}(X, T^*X)\Big/\textrm{Im}({\cal H}_0)
~,\qquad
x=\left(
\begin{array}{c}
\kappa \\ \alpha\\ \Delta
\end{array}
\right)
\in \ker{\cal H} \subseteq H_{\bp_2}^{(0,q)}(X, E)~,
\end{equation*} 
where, 
\begin{equation*}
\Delta\in H^{(0,1)}_{\bp}(X, T^{(1,0)}X)\:,\;\;\;\;\alpha\in H^{(0,1)}_{\bp_A}(X,\End(V))\:,\;\;\;\;\kappa\in H^{(0,1)}_{\bp_I}(X,\End(TX))\:.
\end{equation*}
Here $\kappa$ appears as a generic element in $H^{(0,1)}_{\bp_I}(X,\End(TX))$ as a consequence of promoting an instanton connection $\nabla^I$ on $TX$ to a dynamical field. 
Finally, we argued that the factor $H_{\bp}^{(0,1)}(X, T^*X)$  of the moduli-space $\cal TM$ can be interpreted as complexified hermitian moduli, and these are given by the same cohomology as in the Calabi-Yau case.

\subsection{Discussion.}

As we have seen, promoting the connection $\nabla^I$ to a dynamical field gave us a first order moduli space $H^{(0,1)}_{\bp_I}(\End(TX))$ of deformations of $TX$ as a holomorphic bundle. 
These extra moduli are needed in order for the proposed mathematical structure to implement the anomaly cancelation condition. However, we do not believe they correspond  physical fields in the lower energy four-dimensional theory, nor do they appear in the heterotic string sigma-model.

It has been shown that a change of the connection in the ten-dimensional supergravity theory correspond to a field redefinition in the sigma-model perspective, at least at one-loop in the sigma-model \cite{Sen1986289}. In this way the connection $\nabla^I$ depends on the other fields of the theory, and this dependence comes down to how one defines the fields in the sigma-model.  However, not all field choices are {\it physical}, in the sense that they do not necessarily solve the equations of motion. This is what leads to the necessity of the instanton condition on the connection on $TX$.

For example, to first order in the $\alpha'$ expansion, a particular choice of fields leads to the connection $\nabla^-$. This connection is known as the Hull connection and is defined in the Appendix (it is the connection obtained from the Bismut connection by changing $H$ for $-H$). This connection does satisfy the instanton condition to the correct order in the first order theory. 

With this in mind it would be interesting to explore the possibility that the elements in $H^{(0,1)}_{\bp_I}(\End(TX))$ can be given an interpretation as the local moduli space of allowed field redefinitions for which the equations of motion are satisfied.
Indeed, when deforming the fields, the instanton connection $\nabla^I$ deforms correspondingly by an element $\kappa\in H^{(0,1)}_{\bp_I}(\End(TX))$. Depending on our field definition, there is an ambiguity in what $\kappa$ is, and perhaps this ambiguity is parameterised by $H^{(0,1)}_{\bp_I}(\End(TX))$. Note also that different field choices will deform $TX$ differently as a holomorphic bundle, where the new holomorphic structure is given by $\nabla^I+\kappa$ on a new bundle $(TX)'$. 
Of course, with this interpretation there would be nothing physical about these moduli, and they would not give rise to new fields in the lower energy theory. They would correspond to the fact that we have not specified what the sigma model field choice is. We have only specified that they are fields for which the equations of motion are also satisfied. We will discuss this more in a future publication~\citep{delaossa2014}.

Note that the allowed changes $\kappa$ of the instanton connection together with the actual physical moduli are further constrained by \eqref{eq:Dclosed}. This has interesting consequences in terms of moduli stabilisation. Indeed it has long been known that torsional compactifications give rise to further moduli stabilisation than in the K\"ahler case \cite{Dasgupta:1999ss, LopesCardoso:2003af, Becker:2006xp, Klaput:2012vv, Cicoli:2013rwa}. 

Finally, we note that it would be very interesting to compare the mathematical structure we constructed in this paper, that is the bundle $\Q$ over $X$ together its  holomorphic structure $\bar D$, with the work of~\citep{2013arXiv1304.4294G} and~\citep{2013arXiv1308.5159B} where torsion free generalised connections are applied to heterotic supergravity\footnote{See however~\citep{Anderson:2014xha}.}. In these papers, it seems they also need to promote the instanton connection on $TX$ to a field with its own equation of motion.

\subsection{Future directions.}
There a number of questions which would be interesting to pursue. We list a few of these here.

\subsubsection*{Metrics, obstructions and generalisations.}
It a very natural question to want to write a metric on moduli spaces.  In physics these correspond to kinetic terms in the effective four dimensional field theory. In our case, supersymmetry predicts that this metric should be K\"ahler.  In a forthcoming paper~\cite{Candelas2014} we compute this metric for certain compatifcations in heterotic string theory. 

It is a natural next step to try to work out the obstructions to the first order deformations of  the holomorphic structures introduced in section \ref{sec:moduli}, to discover which first order deformations survive to higher orders. 
Moreover, it would be interesting to generalise the study to the case where the torsion class $W_1^{\omega}$ is not exact.  
This is interesting mathematically as this corresponds to the study of complex  manifolds $X$ with canonical bundle which is only {\it locally} conformally trivial.  Physically this would mean that the compactification is not supersymmetric anymore (recall supersymmetry requires that $W_1^{\omega}=\d\phi$) and one would have to study the equations of motion to try to figure out whether they in fact correspond to allowed compactifications.

\subsubsection*{Preservation of the heterotic structure and embeddings into G-structures.}
It would also be interesting to study the heterotic $SU(3)$-structure more, as introduced in section \ref{subsec:Xgeom}. This structure is interesting in its own right, both from a physics and mathematics perspective. Indeed it would be interesting to see what complex structure deformations survive in this structure to higher orders. As we have seen, a condition on the first order deformations $\chi$ of the holomorphic form $\Omega$ is that
\begin{equation*}
\chi\in H^{(2,1)}_{\d}(X)\:.
\end{equation*}
In~\citep{DKS2014} we elaborate on these matters.  Also, as mentioned in Section \ref{subsec:infsu3}, in \citep{DKS2014} we also study the heterotic structure by embedding it into another $G$-structure, like $G_2$-, $SU(4)$-, or $\text{Spin}(7)$-structures. We show that requiring families of manifolds with a heterotic  $SU(3)$ structure to have certain $G$-structures guarantees that the heterotic structure, and in particular the difficult conformally balanced condition, is preserved along the family. 
We find that this may have very interesting for applications to $F$-theory and $M$-theory.

\subsubsection*{$\alpha'$-corrections and relations to physics.}
In an upcoming publication \cite{delaossa2014} we consider heterotic supergravity at second and higher orders in $\a$. We will see that supersymmetric solutions of Strominger type survive to second order in $\a$. Moreover they appear to be generic. We comment on the connection choice on the tangent bundle which we again find should satisfy the instanton condition. In particular, we note that the choice of connection is as though the connection was a dynamical field, with its own supersymmetry condition. We make conjectures on what the connection and the geometry should be at higher orders in $\a$. 

It would interesting too to undertand better the physics behind the moduli space derived in this paper. In particular, it would be interesting to
generalise the analysis in \cite{Anderson:2010mh}, where a superpotential is generated for moduli not in the kernel of the map $\F$. We would expect that similar superpotential terms appear in the non-K\"ahler case. In particular, we would expect a  superpotential~\citep{LopesCardoso:2003af, Becker:2003yv, Becker:2003gq}
\begin{equation*}
W=\int_X(H+i\d\omega)\wedge\Omega\:.
\end{equation*}
to be generated in the four-dimensional theory whenever the map $\H$ is non-trivial.

\section{Acknowledgements}
We would like to thank Philip Candelas, Andrei Constantin, Andrew Dancer, Spiro Karigiannis, Magdalena Larfors,  Zhentao Lu, Jock McOrist, James Gray, and Ruxandra Moraru.  Also we would like to thank L.~Anderson, J.~Gray and E.~Sharpe for agreeing to coordinate with us the submission to the ArXiv of their paper~\citep{Anderson:2014xha} which overlaps with the work  presented in this paper.
XD would like to thank Perimeter Institute and ICTP in Trieste for hospitality while some of this work was carried through.  
XD's research is supported in part by the EPSRC grant BKRWDM00. 
ES is supported by the Clarendon Scholarship of OUP, and a Balliol College Dervorguilla Scholarship. 

\newpage
\appendix

\section{First Order Heterotic Supergravity.}
\label{sec:FirstOrder}
In this appendix we review heterotic supergravity at first order in $\a$. We write down the action and supersymmetry transformations, and review the supersymmetric solutions of this theory, commonly known as the Strominger Sytem \cite{Strominger:1986uh, Hull:1986kz}. We describe how consistency between the supersymmetry conditions and equations of motion constrains the choice of connection in the action. Various proofs of this have appeared in the literature before \cite{Hull:1986kz, Bergshoeff:1988nn, Ivanov:2009rh, Martelli:2010jx}, and we give a slightly different proof in this appendix. We also comment on the type of geometry that results from the first order supersymmetry conditions, and in particular the fact the compact space $X$ is conformally balanced.

\subsection{Action and Field Content.}
Let's begin by recalling the action at this order \cite{Bergshoeff1989439}
\begin{equation}
\label{eq:action}
S=\frac{1}{2\kappa_{10}^2}\int_{M_{10}}e^{-2\phi}\Big[*\mathcal{R}-4\vert\d\phi\vert^2+\frac{1}{2}\vert H\vert^2+\frac{\alpha'}{4}(\tr\vert F\vert^2-\tr\vert R\vert^2)\Big]+\OO(\a^2).
\end{equation}
$F$ is now the curvature of the $E_8{\times}E_8$ gauge bundle, $R$ is the curvature of the tangent bundle, while the NS-NS three-form,
\begin{equation}
\label{eq:anomalycancellation}
H=\d B+\frac{\a}{4}(CS[A]- CS[\Theta]),
\end{equation}
is appropriately defined for the theory to be anomaly free. Here, the $CS[A]$ and $CS[\Theta]$ are Chern-Simons three-forms of the gauge-connection $A$, and the tangent bundle connection $\Theta$, respectively. The choice of connection $\Theta$ has been a subtle issue which at times has been confusing in the literature. It has been argued that changing the connection is equivalent to a field redefinition \cite{Sen1986289}. This does not however give us the freedom to choose whatever connection we prefer, as we also need a connection choice for which a solution of the supersymmetry equations together with the anomaly cancelation condition is a solution of the equations of motion. 
We return to this point later in this Appendix.

At first order in $\a$, the  Hull connection $\nabla^-$ whose connection symbols are 
\begin{equation}
\label{eq:Hull}
{\Gamma^-_{KL}}^M={\Gamma^{LC}_{KL}}^{\: M}-\frac{1}{2}{H_{KL}}^M
\end{equation}
does tick this box. That this connection gives compatibility between supersymmetry and the equations of motion, was first noted by Hull \cite{Hull:1986kz}. Furthermore, it turns out that this is the connection choice that leaves the full action invariant under supersymmetry transformations at first order in $\a$ \citep{Bergshoeff1989439}.  We will discuss this further in~\citep{delaossa2014}.

The fermonic content of the theory is the gravitino $\psi$, the dilatino $\lambda$ and the gaugino $\chi$. The supersymmetry transformations usually take the form~\citep{Bergshoeff1989439}
\begin{align}
\label{eq:O1spinorsusy1}
\delta\psi_M &=\nabla^+_M\epsilon=\Big(\nabla_M+\frac{1}{8}\H_M\Big)\epsilon+\OO(\a^2) \\
\label{eq:O1spinorsusy2}
\delta\lambda&=\Big(\slashed\nabla\phi+\frac{1}{12}\H\Big)\epsilon+\OO(\a^2) \\
\label{eq:O1spinorsusy3}
\delta\chi&=F_{MN}\Gamma^{MN}\epsilon+\OO(\a^2) .
\end{align}
where $\epsilon$ is a ten-dimensional Majorana-Weyl spinor parameterising supersymmetry, $\Gamma^M$ are ten-dimensional gamma-matrices, $\H_M=H_{MNP}\Gamma^{NP}$, $\H=H_{MNP}\Gamma^{MNP}$. We use large roman indices to denote indices on $M_{10}$. We have also defined the connection $\nabla^+$ by it's connection symbols
\begin{equation*}
{\Gamma^+_{KL}}^M={\Gamma^{LC}_{KL}}^M+\frac{1}{2}{H_{KL}}^M\:.
\end{equation*}
Moreover, under the supersymmetry transformations \eqref{eq:O1spinorsusy1}-\eqref{eq:O1spinorsusy3}, we need to use the Hull connection \eqref{eq:Hull} in the action, in order to have a supersymmetry invariant theory. Supersymmetry requires that we set \eqref{eq:O1spinorsusy1}-\eqref{eq:O1spinorsusy3} to zero. Upon compactification to four dimensions, this leads to supersymmetric solutions of the Strominger system, described in section \ref{sec:hetcomp}, and which we briefly review next.

Before we do so, we note that the connection $\nabla^-$ in the action can be changed. The price we pay in doing so is that the supersymmetry transformations \eqref{eq:O1spinorsusy1}-\eqref{eq:O1spinorsusy3} should change as well. However, if we still insist that the supersymmetric solutions constrain the geometry of $X_6$ such that it still satisfies the Strominger system \footnote{In \cite{delaossa2014} we show that this can be done without loss of generality.}, then this imposes conditions on what the changes to the connection can be. In particular, it forces the connection to remain an $SU(3)$-instanton as we show below. This is all explained in greater detail in \cite{delaossa2014}.

\subsection{First Order Supersymmetry and Geometry.}
As in section \ref{sec:hetcomp}, the ten dimensional manifold is taken to be the product,
\begin{equation*}
M_{10}=M_4\times X_6,
\end{equation*}
where $M_4$ is four-dimensional space-time, and $X_6$ is a compact internal space. Let's take a moment to recall what conditions supersymmetry imposes from the set of transformations \eqref{eq:O1spinorsusy1}--\eqref{eq:O1spinorsusy2} on the internal geometry of $X_6$ (see summary in section \ref{subsec:sum}), commonly known as the Strominger system.  Introducing the fields $(\Psi, \omega)$ as in section \ref{subsec:Xgeom} these constraints may be written as

\begin{align}
\label{eq:susy1}
\d(e^{-2\phi}\Psi)&=0\\
\label{eq:susy2}
\d(e^{-2\phi}\omega\wedge\omega)&=0\\
\label{eq:susy3}
-e^{2\phi}\d(e^{2\phi}\omega)&=*H,
\end{align}
From \eqref{eq:anomalycancellation} we also get the following Bianchi identity
\begin{equation}
\d H=\frac{\a}{4}(\tr (F\wedge F)-\tr (R\wedge R))\:.
\end{equation}
Setting the gaugino variation \eqref{eq:O1spinorsusy3} to zero is also equivalent to requiring that the gauge-bundle is holomorphic and satisfies the hermitian Yang-Mills equations on the internal space
\begin{equation}
\label{eq:inst1}
F\wedge\Omega=0,\;\;\;\; F\wedge\omega\wedge\omega=0,
\end{equation}
where $F$ is the field-strength of the $E_8{\times}E_8$ gauge-bundle.

Equation \eqref{eq:susy1} implies the existence of a holomorphic three-form $\Omega=e^{-2\phi}\Psi$, and it also implies that the complex structure $J$ determined by $\Psi$ is integrable.
A complex three-fold $X_6$ satisfying equation \eqref{eq:susy2} is said to be conformally balanced. In this case, the Lee-form $W_1^\omega$ of $\omega$ is identified with $\d\phi$
\begin{equation}
W_1^\omega=\frac{1}{2}\omega\lrcorner\d\omega=\d\phi.
\end{equation}
By taking the Hodge dual of equation \eqref{eq:susy3} we find~\citep{Strominger:1986uh} 
\begin{equation}
\label{eq:flux2}
H=i(\p-\bar\p)\omega\:.
\end{equation}

Having discussed supersymmetry at first order, we also need to discuss the equations of motion. That is, satisfying supersymmetry and the Bianchi Identity does not guarantee a solution to the equations of motion. However, at first order in $\a$ it does, provided one chooses the correct connection for the curvature two-form appearing in the action and in the Bianchi Identity. We will see how this works next.

\subsection{Instanton Condition.}
It has been shown that the supersymmetry equations and the Bianchi identity, \eqref{eq:susy1}-\eqref{eq:inst1}, imply the equations of motion if and only if the connection $\nabla$ for the curvature two-form $R$ appearing in \eqref{eq:action} is an $SU(3)$-instanton \cite{Ivanov:2009rh, Martelli:2010jx}. This means that it satisfies the conditions
\begin{equation}
\label{eq:inst2}
R\wedge\Omega=0,\;\;\;\;R\wedge\omega\wedge\omega=0,
\end{equation}
similar to the field-strength $F$. 

We now give our own proof of the instanton condition \eqref{eq:inst2}. In \cite{LopesCardoso:2003af} it was shown that the six-dimensional part of the action \eqref{eq:action} may then be rewritten in terms of the $SU(3)$-structure forms, using the Bianchi identity, as 
\begin{align}
\label{eq:6daction}
S_6&=\frac{1}{2}\int_{X_6}e^{-2\phi}\Big[-4\vert\d\phi-W_1^\omega\vert^2+\omega\wedge\omega\wedge\mathcal{\hat R}+\vert H-e^{2\phi}*\d(e^{-2\phi}\omega)\vert^2\Big]\notag\\
&-\frac{1}{4}\int\d ^6y\sqrt{g_6}{N_{mn}}^pg^{mq}g^{nr}g_{ps}{N_{nq}}^s\notag \\
&-\frac{\alpha'}{2}\int\d ^6y\sqrt{g_6}e^{-2\phi}\Big[\tr\vert F^{(2,0)}\vert^2+\tr\vert F^{(0,2)}\vert^2+\frac{1}{4}\tr\vert F_{mn}\omega^{mn}\vert^2\Big]\notag\\
&+\frac{\alpha'}{2}\int\d ^6y\sqrt{g_6}e^{-2\phi}\Big[(\tr\vert{R}^{(2,0)}\vert^2+\tr\vert {R}^{(0,2)}\vert^2+\frac{1}{4}\tr\vert R_{mn}\omega^{mn}\vert^2\Big]+\OO(\a^2).
\end{align}
Here $\mathcal{\hat R}$ is the Ricci-form of the unique almost complex structure compatible metric connection $\hat\nabla$ that also has totally antisymmetric torsion (recall that  the almost complex structure is determined by $\Psi$, see equations \eqref{eq:complexstr} and \eqref{eq:CSnorm}). This connection is known as the Bismut connection in the mathematics literature. The Ricci-form is given by
\begin{equation}
\mathcal{\hat R}=\frac{1}{4}\hat R_{pqmn}\omega^{mn}\d x^p\wedge\d x^q,
\end{equation}
while ${N_{mn}}^p$ is the Nijenhaus tensor for this almost complex structure. 

After variations of the action at the supersymmetric locus, we find that most of the terms vanish to the given order. Indeed,  we saw above that first order supersymmetry implies that $W_1^\omega=\d\phi$.  From equation \eqref{eq:susy3}, we get that the the term involving $H$ vanishes. The Nijenhaus tensor ${N_{mn}}^p$ vanishes, as $J$ is integrable, while the terms involving the bundle vanish due to \eqref{eq:inst1}. Finally, \eqref{eq:flux2} identifies $\nabla^+$ with $\hat\nabla$, which implies that $\hat\nabla$ has $SU(3)$-holonomy, and hence  the vanishing of $\mathcal{\hat R}$. We are thus left with
\begin{align}
\label{eq:6dvary}
\delta S_6&=\frac{1}{2}\int_{X_6}e^{-2\phi}\omega\wedge\omega\wedge\delta\mathcal{\hat R}\notag\\
&+\frac{\alpha'}{2}\delta\int\d ^6y\sqrt{g_6}e^{-2\phi}\Big[(\tr\vert{R}^{(2,0)}\vert^2+\tr\vert {R}^{(0,2)}\vert^2+\frac{1}{4}\tr\vert R_{mn}\omega^{mn}\vert^2\Big]+\OO(\a^2).
\end{align}
In \cite{LopesCardoso:2003af} it was also shown that $\delta\mathcal{\hat R}$ is exact. The conformally balanced condition \eqref{eq:susy2}, implies that the first term of \eqref{eq:6dvary} vanishes. We thus see that $R$ needs to satisfy \eqref{eq:inst2} in order for the action to be extremized at the supersymmetric locus. In fact, to the order in $\alpha'$ we are working at, we find that we need
\begin{equation}
R_{mn}\Gamma^{mn}\eta= 0 + \OO(\a),
\end{equation}
in order  for $\delta S_6=0 + \OO(\a^2)$. This reduction in $\a$-orders comes from the extra factor of $\a$ in curvature terms above.

\subsection{The Hull Connection.}
We show in this subsection that to first order in $\alpha'$, it is sufficient to use the Hull connection in the action, as this satisfies the instanton condition to the order we need. This argument has appeared in the literature before~\citep{Martelli:2010jx}, but we include it here for completeness.

By a direct computation, one finds the identity
\begin{equation}
\label{eq:curvatureid}
R^+_{mnpq}-R^-_{pqmn}=\frac{1}{2}\d H_{mnpq}.
\end{equation}
This implies that 
\begin{equation}
R^+_{mnpq}=R^-_{pqmn}+\OO(\a),\label{eq:PlusMinus}
\end{equation}
by the Bianchi identity. Contracting both sides with $\Gamma^{pq}\eta$ and using
\begin{equation}
R^+_{mnpq}\Gamma^{pq}\eta=0+\OO(\a^2),
\end{equation}
which is the integrability condition for the supersymmetry equation \eqref{eq:O1spinorsusy1}, stating that $\nabla^+$ has $SU(3)$-holonomy,  equation \eqref{eq:PlusMinus} gives
\begin{equation}
R^-_{pqmn}\Gamma^{pq}\eta=0+\OO(\a)\:.
\end{equation}
Hence, we see that to this order in the $\alpha'$ expansion, the action \eqref{eq:6daction} is extremised when supersymmetry and the Bianchi identity are satisfied, if we take $\nabla$ to be the Hull connection $\nabla^-$. 
Note that by choosing the Hull connection the instanton condition is always satisfied at the given order. Hence it does not put any further restrictions on the first order theory. At higher orders in $\alpha'$, this is no longer the case. 

\subsection{Note on the Choice of Connection and Higher Order Corrections.}
As mentioned before, it has been shown that the choice of connection corresponds to a choice of field definitions in the sigma-model point of view \cite{Sen1986289}. One might therefore think that any connection choice should be valid, as it just corresponds to a field redefinition. This turns out to be wrong. Indeed, if we insist that {\it supersymmetric solutions satisfy the Strominger system}, we need to choose the fields so that the equations of motion are compatible. As we have discussed, this turns out to restrict the connection to satisfy the instanton condition at $\OO(\a)$. 

This condition on the connection might receive corrections at higher orders in $\a$, and we now discuss these corrections and the condition on the connection at higher orders in $\a$. We comment briefly on the condition on the connection at $\OO(\a^2)$ as we describe this at length in a forthcomming publication \citep{delaossa2014}.

We note that there is a symmetry between the tangent bundle connection $\nabla$ and gauge connection $A$ in the action \eqref{eq:action}.  As a guiding principle, as is also done in \cite{Bergshoeff1989439}, we would like to keep this symmetry to higher orders. {\it In particular, we suggest that $\nabla$ should be chosen to satisfy an equation of motion similar to that of $A$}. Indeed, this is precisely what one would get  if $\nabla$ was dynamical.
We claim that 
\begin{itemize}
\label{tm:inst}
\item{\it
Strominger type supersymmetric solutions, where $\nabla^+\epsilon=0$ for heterotic compactifications on a compact six-fold $X$, survive as solutions of heterotic supergravity at $\OO(\a^2)$ if and only if the connection $\nabla$ satisfies the instanton condition,
\begin{equation*}
R_{mn}\Gamma^{mn}\eta=0\:.
\end{equation*}
For compact spaces these solutions appear to be generic. Moreover, $\nabla$ satisfies it's own equation of motion for these solutions.}
\end{itemize}
It also turns out that with such a connection choice, the first order equations of motion remain the same also to second order in $\a$. This makes sense from a world-sheet point of view. Indeed, in \cite{Foakes1988335} it was noted that the two-loop $\beta$-function of the gauge connection equals the one-loop $\beta$-function. That is, the $\beta$-function of the gauge field does not receive corrections at this order, so nor should the equation of motion. This is also what one finds from the supergravity point of view~\citep{Bergshoeff1989439}. It is therefore perhaps no surprise that when we choose $\nabla$ so that it satisfies it's own equation of motion\footnote{Note that this equation is adequately satisfied by the Hull connection in the first order theory.}
\begin{equation*}
e^{2\phi}\d_\nabla(e^{-2\phi}*R)-R\wedge*H=0~,
\end{equation*}
the other equations of motion remain the same. From the world-sheet point of view, the equations of motion should correspond to the sigma-model $\beta$-functions. The particular field choice above is thus such as the $\beta$-functions don't receive corrections at $\OO(\a^2)$. Due to the symmetry between the gauge connection $A$ and $\Theta$ in the action, we expect that such a field choice is possible. Indeed, as mentioned above the equation of motion of the gauge connection $A$ remains the same at $\OO(\a^2)$`\citep{Bergshoeff1989439},
\begin{equation*}
e^{2\phi}\d_A(e^{-2\phi}*F)-F\wedge*H=0\:,
\end{equation*}
and so does it's two-loop beta-function. We therefore expect a similar story for $\nabla$. Indeed we find that compatibility between space-time supersymmetry and the equations of motion seems to require the existence of such a connection.

\newpage

\bibliographystyle{JHEP}
\bibliography{BibliographyXD}

\end{document}